\newcommand{\prob}[1]           {\Pr\left\{ #1 \right\}}
\theoremstyle{plain}
\newtheorem{theorem}            {Theorem}
\newtheorem*{theorem*}            {Theorem}
\newtheorem{lemma}[theorem]     {Lemma}
\newtheorem*{lemma*}     {Lemma}
\newtheorem{remark}[theorem]     {Remark}
\newtheorem{cclaim}[theorem]     {Claim}
\newtheorem{corollary}[theorem] {Corollary}
\newtheorem{definition}[theorem]{Definition}
\newtheorem{observation}[theorem]     {Observation}
\newcommand{\secref}[1]         {Section~\ref{sec:#1}}
\newcommand{\seclabel}[1]    {\label{sec:#1}}
\newcommand{\figlabel}[1]   {\label{fig:#1}}
\newcommand{\figref}[1]         {Figure~\ref{fig:#1}}
\newcommand{\thmlabel}[1]   {\label{thm:#1}}
\newcommand{\thmref}[1]         {Theorem~\ref{thm:#1}}
\newcommand{\lemlabel}[1]   {\label{lem:#1}}
\newcommand{\lemref}[1]         {Lemma~\ref{lem:#1}}
\newcommand{\corlabel}[1]   {\label{cor:#1}}
\newcommand{\corref}[1]         {Corollary~\ref{cor:#1}}
\renewcommand{\eqref}[1]          {Eq.~\ref{eq:#1}}
\newcommand{\deflabel}[1]    {\label{def:#1}}
\newcommand{\defref}[1]         {Definition~\ref{def:#1}}
\renewcommand{\epsilon}{\varepsilon}
\renewcommand{\sf}[1]{\mathsf{#1}}
\newcommand{\cncRIP}{\sf{O}(1)\mbox{-}\sf{ncRIP}}
\newcommand{\ancRIP}{\sf{\alpha}(n)\mbox{-}\sf{ncRIP}}
\newcommand{\pncRIP}{\sf{poly}(n)\mbox{-}\sf{ncRIP}}
\newcommand{\encRIP}{\sf{exp}(n)\mbox{-}\sf{ncRIP}}
\newcommand{\cMRIP}{\sf{O}(1)\mbox{-}\sf{MRIP}}
\newcommand{\pMRIP}{\sf{poly}(n)\mbox{-}\sf{MRIP}}
\newcommand{\eMRIP}{\sf{exp}(n)\mbox{-}\sf{MRIP}}
\newif\ifnotes
\newcommand{\jing}[1]{\textcolor{blue}{{\footnotesize #1}}\marginpar{\raggedright\tiny \textcolor{blue}{JING}}}
\definecolor{purple}{rgb}{.5,0,.5}
\newcommand{\sam}[1]{\textcolor{purple}{{\footnotesize #1}}\marginpar{\raggedright\tiny \textcolor{purple}{SAM}}}
\definecolor{orange}{rgb}{1,0.6,0}
\newcommand{\shikha}[1]{\textcolor{orange}{{\footnotesize #1}}\marginpar{\raggedright\tiny \textcolor{orange}{SHIKHA}}}
\newcommand{\sam}[1]{}
\newcommand{\shikha}[1]{}
\newcommand{\jing}[1]{}
\newcommand{\defn}[1]{\textbf{\emph{#1}}}
\title{
  Non-Cooperative Rational Interactive Proofs
\footnote{A preliminary version of this paper appeared at the 27th European Symposium on Algorithms (ESA 2019). 
This work has been partially supported by NSF CAREER Award CCF 1553385,
CNS 1408695, CCF 1439084, IIS 1247726, IIS 1251137, CCF 1217708, by Sandia National Laboratories,
and by the European Research Council under the European Union's 7th Framework Programme (FP7/2007-2013)~/~ERC grant agreement no. 614331.
BARC, Basic Algorithms Research Copenhagen, is supported by the VILLUM Foundation grant 16582.}
}
\author{
Jing Chen\thanks{
Stony Brook University,
Stony Brook, NY 11794-4400, USA.
Email:~\texttt{jingchen@cs.stonybrook.edu}.
}
\and
Samuel McCauley\thanks{
  Williams College, Williamstown MA 01267 USA.
Email:~\texttt{\{sam, shikha\}@cs.williams.edu}.
}
\and
Shikha Singh\footnotemark[3]
}
\date{}
\begin{document}

\sloppy

\maketitle
%\vspace{-.75in}
%\thispagestyle{empty}

%\setcounter{page}{0}
%\setcounter{section}{0}
%\setcounter{theorem}{0}
%\setcounter{figure}{0}
%\setcounter{footnote}{0}

\begin{abstract}

Interactive-proof games model the scenario where an honest party interacts with powerful but strategic provers, to elicit from them the correct answer to a computational question. 
  Interactive proofs are increasingly used as a framework to design protocols for computation outsourcing.
%, and have also been used to characterize and better understand important complexity classes.
  %These games have been used to characterize important complexity classes. They are also increasingly widely used as a framework to design protocols for computation-outsourcing.

Existing interactive-proof games largely fall into two categories:
either as games of cooperation such as multi-prover interactive proofs and cooperative rational proofs, where the provers work together
as a team; or as games of conflict such as refereed games, where the provers directly compete with each other in a zero-sum game.
%
%
%However, in all existing interactive-proof games with multiple stra, the provers'
%interests either perfectly align (as in multi-prover interactive proofs) or directly conflict (as in
%refereed games).
Neither of these extremes truly capture the strategic nature of
   service providers in outsourcing applications.
%How to exploit both conflict and cooperation in a {non-cooperative} interactive proof is an important open problem.
   How to design and analyze non-cooperative interactive proofs is an important open problem.  

In this paper, we introduce a mechanism-design approach to define 
%an interactive-proof game with {\em non-cooperative 
%provers} that exploits both conflict and cooperation from them.
%In particular, we introduce 
a multi-prover interactive-proof model in which the
provers are {\em rational} and {\em non-cooperative}---they act to maximize
their expected utility given others' strategies. 
%In contrast to refereed games, no prover
%is required to be honest.
%
%
%Interactive protocols with non-cooperative provers result in a game of,
We define a strong notion of backwards induction as our solution concept 
to analyze the resulting extensive-form game with imperfect information.
%{\em strong sequential equilibrium}.

%Our technical results focus on 
%Our  protocols provide {\em utility gap} guarantees, 
%which are analogous to soundness gap in
%classic interactive proofs. 

We fully characterize the complexity of our proof system under
different \emph{utility gap} guarantees. (At a high level, a utility gap of $u$ means that the protocol is robust against provers that may not care
about a utility loss of $1/u$.) We show, for example, that the power of non-cooperative rational interactive proofs with a polynomial
utility gap is exactly
equal to the complexity class $\sf{P^{NEXP}}$.
    %In particular, we show that the class of ncRIP protocols with constant and polynomial utility gap is exactly equal to the class of languages recognized by a polynomial-time Turing machine with constant and polynomial number of queries respectively to an $\sf{NEXP}$ oracle.
%
%$\sf{P^{NEXP[O(1)]}}$ and $\sf{P^{NEXP}}$ respectively.
%
%$\cncRIP = \sf{P^{NEXP[O(1)]}}$, $\pncRIP = \sf{P^{NEXP}}$ and $\encRIP = \sf{EXP_{||}^{NEXP}}$.
%constant,
%polynomially small, and exponentially small utility gap. In particular, the class of ncRIP protocols
%with polynomial gap is exactly equal to the class $\sf{P^{NEXP}}$.
%Our results imply that non-cooperative provers can be used to handle adaptive oracle queries (unlike the cooperative rational provers studied previously), which makes them more powerful whenever adaptive queries do not reduce to non-adaptive queries.
  %These characterizations show the advantage in computational strength of non-cooperative provers: they have the power of complexity classes with adaptive oracle queries (unlike the cooperative rational provers studied previously).  This means that non-cooperative provers can be leveraged to give greater computational power when adaptive queries in the corresponding complexity class do not reduce to non-adaptive queries.  
    %\bigskip
%\bigskip
%
%\noindent
%{\bf Keywords:}  non-cooperative game theory, extensive-form games with imperfect information,
%refined sequential equilibrium, rational proofs, interactive proofs
\end{abstract}

%\input{notes}

%\newpage
%\setcounter{page}{1}

\section{Introduction}
\seclabel{intro}
 
Game theory has played a central role in analyzing the conflict and cooperation in interactive proof games. These games model the scenario where an honest party interacts with powerful but strategic agents, to elicit from them the correct answer to a computational question. The extensive study of these games over decades has fueled our understanding of important complexity classes~(e.g.,~\cite{babai1991non,fortnow1994power,feige1992two,lund1992algebraic, chandra1976alternation, feige1997making, feige1992multi,
feigenbaum1995game}). From a modern perspective, these games capture the essence of computation outsourcing---the honest party is a client outsourcing his computation to powerful rational service providers in exchange for money. 

In this paper, we consider a natural type of interactive-proof game. 
For the moment, let us call our client Arthur.
Arthur hires a service provider Merlin to solve a computational problem for him, and hires a second service provider Megan to cross-check Merlin's answer. 
Arthur wants the game (and associated payments) to be designed such that if Merlin gives the correct answer, Megan agrees with him; however, if Merlin cheats and gives a wrong answer, Megan is incentivized to contradict him, informing Arthur of Merlin's dishonesty.
This means that Merlin and Megan are not purely cooperative nor purely competitive.  Each is simply a rational agent who wants to maximize their own utility.  
%In particular, Arthur wants to incentivize Megan and Merlin to cooperate at times, and conflict at times, depending on what gets him the right answer.

This is a mechanism design problem---how can Arthur incentivize non-cooperative rational agents (Merlin and Megan) to give truthful answers to his questions, helping him solve a computational problem? This problem is the focus of our paper.

%We now describe the problem in more detail.  
\paragraph*{Structure of the game.}  We borrow the structure and terminology of interactive proofs~\cite{babai1985trading,IP,ben1988multi}, as was done in previous work on rational proofs~\cite{azar2012rational,
azar2013super, 
guo2014rational, 
guo2016rational,
ChenMcSi16, 
ChenMcSi18, 
CMS17arxiv, 
campanelli2015sequentially,
campanelli2017efficient} 
and refereed games~\cite{
chandra1976alternation, 
feige1990noisy,
feige1997making, 
feige1992multi, 
reif1984complexity,
feigenbaum1995game,
koller1992complexity}.  
%We assume that Arthur is solving a decision problem---determining if the input $x$ is in a language $L$.  
We call Arthur the \defn{verifier} and assume that he is computationally bounded (he may be probabilistic, but must run in polynomial time).
Arthur's coin flips are treated as Nature moves in the game. We call Merlin and Megan the \defn{provers}; they have unbounded computational power.  

The verifier exchanges messages with the provers in order to determine the answer
to a decision problem. %---whether a given string $x$ is in a language $L$. 
%The in rounds provers cannot observe the messages exchanged between 
%the verifier and other provers.
%$usually denoted as 
%he would not be able to compute himself---we assume the verifier attempts to determine if a string $x$ is in a language $L$.
%
%We allow an arbitrary number of provers\footnote{Our results weaken these assumptions considerably.  Two provers are sufficient to capture the full power of these games, and the provers need only be powerful enough to solve the problem posed by the verifier.}
%
The exchange proceeds in rounds: 
in a round, either a verifier sends a message to all provers or receives a response from each.   
The provers cannot observe the messages exchanged between the verifier and other provers.

At the end, the verifier gives a payment to {\em each} prover.  
Our goal is to design protocols and payments such that, under an appropriate solution concept
of the resulting game, the provers' best strategies lead the verifier to the correct answer. % (to whether $x$ is in $L$ or not).  
%In a correct protocol, the verifier will be able to use the provers' responses to correctly solve his computational problem.
%correctly determines if $x\in L$ for rational provers maximizing their individual payment.  

The interactive protocols described above form an extensive-form game of imperfect information.
To analyze them, we essentially use a strong notion of backward induction as our solution concept.
We refine it further by eliminating strategies that are weakly dominated on ``subgames'' within the entire game.
We define the solution concept formally in~\secref{sse}.
%In particular, we define a refinement of sequential equilibrium, called \defn{strong sequential equilibrium} and
%Our notion coincides with backward induction on reachable parts of the game tree using
%beliefs and 
%
%\sam{delete: }
%The solution concept we use for this paper bears some similarities to max Nash Equilibrium: at equilibrium, no prover is incentivized to deviate from his strategy, and we handle equilibrium selection by taking the one with the maximum reward.  However, max Nash falls short as a solution concept for our purposes.  First, it does not have a notion of \defn{backward induction}~\cite{}: each prover's choice of action should only take into account rational strategies by other provers in later parts of the game.  Second, a key part of interactive proofs is \defn{imperfect information}---the verifier may share information about his coin flip with some provers, but hide it from others.  We define a new solution concept, \sam{TODO}, which keeps the basic intuition of max-Nash while being able to handle these considerations.  (See \secref{ncmrip}.)
%

\paragraph*{Comparison to previous work.}
The model of our games is based on interactive proof systems~\cite{babai1985trading,IP}, in which a verifier exchanges messages with untrustworty provers
and at the end either accepts or rejects their claim. 
Interactive proofs guarantee that, roughly speaking, the verifier accepts a truthful claim with probability at least 2/3 (\defn{completeness}) and no strategy of the provers can make the verifier accept a false claim with probability more than 1/3 (\defn{soundness}).

% Interactive proofs have soundness and correctness guarantees: if $x\in L$, the verifier should always be correct; but  if $x\notin L$, the verifier should be correct with probability $> 2/3$.  
%
The study of interactive proofs has found extensive applications in both theory and practice.  
Classical results on IPs have led
us to better understand complexity classes through characterizations such as $\sf{IP}=\sf{PSPACE}$~\cite{Shamir92,lund1992algebraic} and $\sf{MIP = NEXP}$~\cite{babai1991non,fortnow1994power,feige1992two}, and later
led to the important area of probabilistically checkable proofs~\cite{sudan2009probabilistically}.
%arora1998probabilistic,
%arora1998proof,
%arora2003np,
%haastad2005query}.
More recently, the study of IPs has resulted in extremely efficient (e.g., near linear or even logarithmic time) protocols
for delegation of computation~\cite{bitansky2012succinct,
goldwasser2008delegating,
canetti2013refereed,
braun2013verifying,
%goldwasser2008delegating,
rothblum2013interactive
%gur2015non,
%kilian1992note,
%kalai2015arguments,
%cormode2012practical,
%canetti2012two, % delegation of computation assuming single honest prover
%chakrabarti2015verifiable,
%cormode2011verifying,
%daruki2015streaming
}.
%a mechanism for efficient verification of outsourced computation. The verifier, who were originally assumed to
%be polynomial time can now be made ``super-efficient''---near linear time or even
%polylogarithmic time.
Such super-efficient IPs have brought theory
closer to practice, resulting in ``nearly practical''
systems~(e.g., see~\cite{
blumberg2014verifiable,
walfish2015verifying,
%vu2013hybrid,setty2013resolving,thaler2013time,
thaler2012verifiable,
canetti2011practical%,setty2012making,setty2012taking,
%parno2013pinocchio
}).

%Thus interactive proofs are not only a fundamental theoretical concept but an indispensable framework to design efficient computation outsourcing protocols.

Indeed, interactive proofs are not only a fundamental theoretical concept but an
indispensable framework to design efficient computation-outsourcing protocols.

%have revolutionized the theory of computation, changed the
%way we think about proofs and are now at the core of efficient verification of
%outsourced computation.

\paragraph*{Existing interactive-proof games} 
Interactive-proof systems with multiple provers have largely been studied as games that fall into two categories:
either as games of cooperation such as MIP~\cite{ben1988multi}, cooperative multi-prover rational proofs (MRIP)~\cite{ChenMcSi16},
and variants~\cite{fortnow1994power,babai1991non, cai1992games, ito2012multi,
%bitansky2012succinct,feige1992two,ben1989efficient, blumberg2014verifiable, 
goldwasser2008delegating}, where the provers work together to convince
the verifier of their joint claim; or as games of conflict such as refereed games~\cite{chandra1976alternation, feige1997making, feige1992multi,
feigenbaum1995game,canetti2013refereed, kol2013competing, %canetti2011practical,
canetti2012two}, where the provers directly compete with each other to convince the verifier
of their conflicting claims.

Both of these categories have limitations.
In a game of cooperation, 
provers cannot be leveraged directly against each other.  That is, the verifier cannot directly ask one prover if another prover is lying.
On the other hand, in a game of conflict, such as refereed games, one prover must ``win'' the zero-sum game.
Thus, such games need to assume that at least one prover---who must be the winning prover in a correct protocol---can be trusted to always tell the truth.
Despite their limitations, both models % of complete cooperation and complete conflict 
have proved to be fundamental constructs to understand and characterize important complexity classes~\cite{
%Shamir92, lund1992algebraic, 
babai1991non,%fortnow1994power,feige1992two,
feigenbaum1995game, feige1997making, chandra1976alternation, ChenMcSi16}, and to design efficient computation outsourcing protocols~\cite{bitansky2012succinct,
blumberg2014verifiable, goldwasser2008delegating,
canetti2013refereed, %kol2013competing, %canetti2011practical,
canetti2012two}.

\subsection{Contributions and Results}
In this paper, we introduce
a new interactive-proof game, {\em non-cooperative rational interactive proofs (ncRIP)}.
This model generalizes multi-prover rational proofs~\cite{ChenMcSi16,CMS17arxiv,ChenMcSi18}. 

\paragraph*{Solution concept for ncRIP}
%\sam{New idea: our provers must behave rationally based on their beliefs, from Baye's rule at reachable information sets (as in Sequential Equilibrium).  But, there are two issues.  First, we want our provers to be rational at each subform, as in subgame-perfect equilibrium (but we need a concept of imperfect information).  Second, we are limiting protocol design to those where belief at unreachable information sets do not matter.  This is a mild restriction on protocols (and probably one you want anyway if a protocol is to be useful), and we show in the appendix that many of the useful properties of SE carry over}
%%To sidestep artificial belief requirements at unreachable information sets,
%Taking the above constraints into account,
We define a refinement of sequential equilibrium~\cite{kreps1982sequential},
%solution concept to help analyze ncRIP:
\defn{strong sequential equilibrium} (SSE), that essentially says that players' beliefs about the histories
that led them to an unreachable information set should be irrelevant to their best response. 
From a mechanism-design perspective, we want to design the protocols and payments
that allow this strong guarantee to hold---letting the players' best responses be unaffected by their beliefs.\footnote{We believe that SSE is of independent interest as a solution concept for designing extensive-form mechanisms (e.g.~\cite{
glazer1996virtual,vartiainen2007subgame,duggan1998extensive}).
%% % rather than normal-form games.%
%In Appendix~\ref{sec:strongse}, 
In~\secref{strongse}, 
we prove important properties of SSE that may prove useful in future studies.}

%%This is a strong requirement but as long as we can design mec
%%SSE is a refinement of SE which, unlike SE, may not always exist.
%%However, since we are responsible for designing the mechanisms, it is up to us to ensure that our protocols satisfy the equilibrium conditions.
%Thus, it is up to the mechanism designer to ensure that their protocol does have an SSE.
%SSE is a refinement of SE and may

%To resolve the problem of equilibrium selection, 
%Furthermore, we strengthen the solution concept further 
%to eliminate empty threats that can occur in an extensive-form game by ruling out
Finally, we eliminate SSE strategies that are suboptimal within ``subgames'' by defining and enforcing a backward-induction-compatible
notion of dominance.
%
%
%maximum variant of
%SSE and apply it recursively to a larger class of ``subgames.''\footnote{Similar
%to subgame-perfect equilibrium, the recursive aspect is to ensure that provers play their best response at \emph{each} subform, an extension of subgame for imperfect-information games.}
%% in the extensive-form game tree.
%
Roughly speaking, we say a protocol is a ncRIP if there exists a strategy profile of the provers
that is a dominant SSE among the \emph{subforms} of the extensive form game,
% way in the resulting extensive-form game, 
and under this strategy
the provers' lead the verifier to the correct answer.
%
%whenever the strategy
%profile used by the provers is a {\em recursive-maximum SSE} (rmSSE), it leads the
%verifier to the correct answer. 
We define the model formally in~\secref{ncmrip}.

\paragraph*{Utility gap for non-cooperative provers}
%Another important conceptual contribution of this paper
%is the notion of {\em utility gap} for ncRIP protocols.
Utility gap is a fundamental concept for rational proofs~\cite{azar2013super, guo2014rational, ChenMcSi16, ChenMcSi18} which is analogous to {\em soundness gap} in interactive proofs.
It measures how robust a protocol is
against the provers' possible deviations from the desired strategy.

This notion is straightforward to define for cooperative rational protocols---they have
a utility gap of $u$ if the {\em total} expected payment decreases by $1/u$
whenever the provers report the wrong answer.
In non-cooperative protocols, however, it is not a priori clear
how to define such a payment loss or to choose which prover should incur the loss.
A payment loss solely imposed on the total payment
may not prevent some provers from deviating,
and a loss solely imposed on the provers' final payments may not prevent them from deviating within subgames.
%Thus, the notion of utility gap must reflect these considerations,
% and compare strategy profiles
%in a way that captures how the provers reason in the extensive-form game induced by the protocol.

We define a meaningful notion of utility gap for ncRIP
that %captures how the provers reason in the resulting extensive-form game,
%and 
is naturally incorporated in a backward-induction-compatible way to the dominant SSE concept.
%Using this notion, we fully characterize the class of ncRIP protocols with different utility gaps.

\paragraph*{Tight characterizations of ncRIP classes}
In this paper, we completely characterize the power of non-cooperative rational proofs
under different utility-gap guarantees. 

We construct ncRIP protocols with
constant, polynomial, and exponential utility gaps for powerful complexity classes,
demonstrating the strength of our solution concept.
Our protocols are simple and intuitive (requiring only a few careful tweaks from their cooperative counterparts), and are thus easy to explain and implement.
However, proving their correctness involves analyzing the extensive-game (including subtleties in the incentives and beliefs of each player at each round)
%involving subtle details of %incentives and beliefs 
to show that the protocol meets the strong solution-concept and utility-gap requirements.

% analyze the limit of the power of non-cooperative provers.
%Given our ncRIP protocols,
%a natural question to ask is: how powerful can these games be? % whether we are ``overfitting'' the solution concept
%In particular, does the solution concept give the verifier unrealistic or unlimited power in leveraging the provers' rationality?

We then prove {tight} upper bounds for all three ncRIP classes.
Proving tight upper bounds is the most technically challenging part
of the paper. 
%This is partly because the NEXP oracle cannot itself verify dominant SSEs and also because the polynomial randomness of the verifier can induce an exponential-sized game tree.
We prove the upper bounds by simulating the decisions of the verifier and provers with a Turing Machine. However, there are several obstacles to attain
%Theorems~\ref{thm:constchar},~\ref{thm:polychar}, and~\ref{thm:ncmrip-char}. 
the correct bounds.  % in \figref{mainresults}.
For example, the polynomial randomness of the verifier can induce an exponential-sized game tree, which is too large to be verified by the polynomial-time machine in Theorems~\ref{thm:constchar} and~\ref{thm:polychar}. Furthermore, an NEXP oracle cannot itself verify whether a strategy profile is a dominant SSE.    
The key lemma that helps us overcome these challenges is the pruning lemma~(\lemref{pruning}). At a high level, it shows that we can prune the nature moves
of the verifier in the resulting game tree, 
while preserving the dominant-SSE and utility-gap guarantees.

Our results are summarized in~\figref{mainresults}, where
 we use $\cncRIP$, $\pncRIP$ and $\encRIP$ to denote ncRIP classes
with constant, polynomial and exponential utility gaps respectively.
The notations are analogous for MRIP~\cite{CMS17arxiv} (the cooperative variant).
We characterize ncRIP classes via oracle Turing machines.
In particular, $\sf{P^{NEXP[O(1)]}}$ is
%\pparagraph{Notation} %$\cMRIP$ and $\eMRIP$ are analogous but for MRIP.
%The complexity classes used are:
the
%\sam{I took asymptotics out of the definition of query notation.  I think that's more correct but I don't want to miss anyhting}
class of languages decided by a polynomial-time Turing machine that makes $O(1)$ queries
to an $\sf{NEXP}$ oracle,
and $\sf{EXP^{poly\mbox{-}NEXP}}$ is the class decided by
%is the class of languages that are decided by
%a polynomial-time Turing machines that makes $O(\alpha(n))$ adaptive queries to an $\sf{NEXP}$ oracle.
%$\sf{EXP^{poly\mbox{-}NEXP}}$ is the class of languages that are decided by
an exponential-time Turing machine with polynomial-length queries to an $\sf{NEXP}$ oracle.
\newlength{\thmfiguretab}
\setlength{\thmfiguretab}{-4pt}

\vspace{-10pt}
\noindent
\begin{figure}[h]
\centering
\fbox{\normalsize
\begin{minipage}{0.48\textwidth}
\begin{theorem}\thmlabel{constchar}
  \hspace{\thmfiguretab}$\cncRIP = \sf{P^{NEXP[O(1)]}}$
\end{theorem}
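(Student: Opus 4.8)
The plan is to establish the two inclusions $\sf{P^{NEXP[O(1)]}} \subseteq \cncRIP$ and $\cncRIP \subseteq \sf{P^{NEXP[O(1)]}}$ separately, and I expect the second (the upper bound) to be the main obstacle.

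For the lower bound $\sf{P^{NEXP[O(1)]}} \subseteq \cncRIP$, I would start from a polynomial-time oracle machine $M$ that decides $L$ using a constant number $k = O(1)$ of adaptive queries to an $\sf{NEXP}$ oracle, and build an ncRIP protocol that is only a small tweak of the cooperative construction, exactly as the $\sf{Graph\mbox{-}Coloring}$ example scales the basic gadget up from $\sf{NP}$ to $\sf{NEXP}$. The idea is to let one prover $P_1$ commit up front to the answer bits $b_1,\ldots,b_k$ of the $k$ queries (hence to $M$'s accept/reject decision), and to use $P_2$ as a non-cooperative challenger. Since $\sf{NEXP}=\sf{MIP}$, a claimed ``yes'' answer to a query $q_i$ carries a multiprover certificate checkable by the verifier in polynomial time with constant soundness; a claimed ``no'' answer is a $\sf{coNEXP}$ statement with no direct certificate, so instead I would give $P_2$ the opportunity to refute $P_1$ by exhibiting an $\sf{MIP}$ certificate that the true answer is ``yes.'' The verifier simulates $M$ step by step, feeding in each claimed answer and computing the next query in order (fine since $k$ is constant).

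The payments must then be designed so that, under the dominant-SSE solution concept of \secref{sse}, truthful play is forced: each prover loses a fixed constant whenever an $\sf{MIP}$ soundness test exposes a lie, and $P_2$ gains a fixed constant for a successful refutation, mirroring the \$2/\$0 structure of the example. The key thing to verify is the \emph{constant} utility gap: any deviation that flips $M$'s output must lie on at least one check, which is caught with constant probability, yielding a constant expected utility loss; because there are only $O(1)$ queries this constant gap survives the whole protocol, and the backward-induction-compatible dominance condition can be checked to hold on each subform.

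The harder direction is $\cncRIP \subseteq \sf{P^{NEXP[O(1)]}}$. Given a constant-utility-gap ncRIP protocol for $L$, I want a polynomial-time machine that, with $O(1)$ queries to an $\sf{NEXP}$ oracle, outputs $L$'s answer, and two difficulties must be overcome. First, the verifier's polynomial randomness makes the extensive-form game tree exponentially large, so it cannot be written down or searched directly; here I would invoke the pruning lemma (\lemref{pruning}) to replace the nature (coin-flip) moves by a polynomial-size set of representative branches while provably preserving both the dominant-SSE profile and the utility-gap guarantee. Second, because the solution concept requires \emph{dominance} among subforms, a single $\sf{NEXP}$ query cannot certify that a candidate profile is a dominant SSE --- verifying dominance is not an existential question the oracle answers in one shot. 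I would handle this by phrasing the computation of the provers' optimal payments and the induced answer as a sequence of threshold/comparison questions on the pruned tree, each expressible to the $\sf{NEXP}$ oracle, and arguing that the constant gap forces these comparisons to resolve with only a constant number of calls: since the relevant payment values are separated by a fixed amount, constant-precision comparison suffices. The polynomial-time machine then simulates the pruned game, uses $O(1)$ oracle calls to extract the dominant-SSE outcome, and outputs the verifier's decision.

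The two delicate points, which I expect to be where the real work lies, are (i) showing the pruning lemma applies in the non-cooperative setting without destroying the dominance relation on subforms, and (ii) reducing dominance verification --- which is \emph{not} itself an $\sf{NEXP}$ predicate --- to $O(1)$ $\sf{NEXP}$ queries using the constant gap. I expect (ii) to be the crux, since it is precisely the obstacle the paper flags: an $\sf{NEXP}$ oracle cannot itself verify dominant SSEs, so the reduction must carefully offload the dominance reasoning onto the polynomial-time machine while keeping the query count constant.
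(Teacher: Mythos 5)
Your upper-bound outline follows the paper's route: it invokes the pruning lemma (\lemref{pruning}) to shrink the Nature moves, then exploits the constant gap to discretize expected payments into $O(1)$ intervals and offload dominance onto the polynomial-time machine via $O(1)$ existential $\sf{NEXP}$ queries. The paper's concrete implementation (\lemref{const-upper}) is to enumerate all $O(\alpha(n)^{\alpha(n)})=O(1)$ combinations of total-expected-payment intervals over the reachable subforms, ask the oracle for each combination whether an SSE achieves it, and take the maximum ``yes'' combination; the fact that maximizing total payment identifies a weakly-dominant SSE is exactly \lemref{maxsse}. Your ``threshold/comparison'' phrasing is vaguer but is the same idea, and you correctly flag the two obstacles the paper flags.

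The genuine gap is in your lower bound, specifically in how you handle claimed ``no'' answers to $\sf{NEXP}$ queries. You propose that $P_2$ refute a false ``no'' by ``exhibiting an $\sf{MIP}$ certificate that the true answer is yes.'' A single prover cannot do this: the power of $\sf{MIP}=\sf{NEXP}$ comes precisely from two non-communicating provers, and a polynomial-time verifier interacting with one prover captures only $\sf{PSPACE}$. If you patch this by having $P_1$ participate in the refutation $\sf{MIP}$, you face a sabotage problem: $P_1$, whose lie is being exposed, is (under the natural payments) better off when the refutation fails, and $\sf{MIP}$ completeness gives no guarantee when one prover deliberately answers inconsistently---so a false ``no'' can escape punishment, destroying both correctness and the utility gap. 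The paper avoids refutation entirely: in the $\sf{NEXP}$ sub-protocol (\figref{nexp-ncrip-protocol}, \lemref{nexp-const}) a ``no'' answer is never verified, it simply earns a flat payment of $1/2$, and truthfulness is enforced by opportunity cost (a truthful ``yes'' earns $1$ by $\sf{MIP}$ completeness, while a false ``yes'' earns expected payment at most $-1/3$ by $\sf{MIP}$ soundness). Moreover, for $\sf{P^{NEXP[\alpha(n)]}}$ the paper uses a \emph{third} prover (\figref{constpolygap}, \lemref{notc-lower}): $P_1$ commits to all query answers, and a randomly chosen query is resolved by an independent sub-protocol between $P_2$ and $P_3$ in which $P_1$ plays no part. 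That independence is what makes the query sub-protocols well-defined subforms whose dominant SSE is correct regardless of $P_1$'s strategy---including at subforms $P_1$'s lies render unreachable---which is the crux of the dominant-SSE analysis and is not recoverable in your two-prover challenger design as stated.
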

\begin{theorem}\thmlabel{polychar}
\hspace{\thmfiguretab}$\pncRIP = \sf{P^{NEXP}}$
\end{theorem}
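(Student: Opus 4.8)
The plan is to prove the two inclusions $\sf{P^{NEXP}} \subseteq \pncRIP$ and $\pncRIP \subseteq \sf{P^{NEXP}}$ separately, the latter being the substantially harder direction; the overall structure mirrors \thmref{constchar}, with the relevant query budget relaxed from $O(1)$ to $\mathrm{poly}(n)$. For the lower bound I would start from a language $L \in \sf{P^{NEXP}}$, decided by a polynomial-time machine $M$ making polynomially many adaptive queries to an $\sf{NEXP}$ oracle. Since $\sf{MIP} = \sf{NEXP}$, every positive query admits a multi-prover interactive proof, so my protocol has the verifier simulate $M$ and, on each query $q_i$, ask the answering prover $P_1$ for the bit $b_i \in \{0,1\}$. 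When $b_i = 1$ the verifier runs the $\sf{MIP}$ verification for $q_i$; the subtle case $b_i = 0$—a $\sf{coNEXP}$ fact that $\sf{MIP}$ cannot soundly verify on its own—is exactly where the non-cooperative structure pays off: as in the $\sf{Graph\mbox{-}Coloring}$ example, the verifier lets a checking prover $P_2$ contest the claim by producing a positive $\sf{MIP}$ proof, and pays $P_2$ strictly more for exposing a false negative than for agreeing. Backward induction then forces $P_1$ to report each $b_i$ truthfully, since any lie is refuted in the corresponding subgame. The quantitative point is the utility gap: I scale the per-query payments so that misreporting any single $b_i$ costs at least $1/\mathrm{poly}(n)$ in expected final payment under the dominant SSE; as $M$ issues $\mathrm{poly}(n)$ queries, this yields a polynomial utility gap overall.

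For the upper bound I would build a $\sf{P^{NEXP}}$ machine $N$ that decides $L$ by simulating the verifier and reconstructing the provers' dominant-SSE strategy. Because the verifier runs in polynomial time, each message and coin-flip string is of polynomial length, but its polynomial randomness makes the game tree exponentially large, so $N$ cannot traverse it directly. Here I invoke the pruning lemma (\lemref{pruning}) to replace the nature moves by a polynomial-size set of representative branches while preserving both the dominant-SSE strategy and the utility-gap guarantee. On the pruned tree, deciding whether a prover has a response achieving expected payment at least a threshold $\theta$ is an $\sf{NEXP}$ predicate: one guesses the (exponentially long) strategy and evaluates its expected payment in exponential time over the polynomially many retained branches. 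The machine $N$ then performs backward induction over the pruned subforms, using $\sf{NEXP}$ queries to resolve each best response and binary-searching on $\theta$; the polynomial utility gap bounds the required precision, so polynomially many queries suffice, placing $L$ in $\sf{P^{NEXP}}$.

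I expect the main obstacle to be the upper bound, specifically the interplay between the dominance requirement and what a single oracle call can decide. An $\sf{NEXP}$ oracle can certify that a prover attains a given payment, but it cannot by itself certify that a strategy is \emph{dominant} among all SSEs, since dominance compares a strategy against all competing profiles rather than asserting one existential fact. I would therefore keep the dominance bookkeeping inside $N$, which orchestrates a sequence of threshold queries to compare candidate responses subform by subform, relying on the polynomial utility-gap separation so that each comparison is decided by an inverse-polynomial margin and hence by polynomially many queries. The most delicate step is to argue that the pruning of \lemref{pruning} preserves the dominant SSE—rather than merely some SSE—once best responses are computed through these $\sf{NEXP}$ queries; establishing this invariant is where I would spend the bulk of the argument.
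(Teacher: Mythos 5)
Your overall strategy coincides with the paper's: the lower bound simulates the $\sf{P^{NEXP}}$ machine and uses non-cooperation to verify oracle answers (the paper's protocol in \figref{constpolygap}), and the upper bound combines the pruning lemma with threshold-style $\sf{NEXP}$ queries whose precision is paid for by the utility gap. On the lower bound your variant differs in two details, one of which is a real flaw as written: a single checking prover cannot ``produce a positive MIP proof,'' since soundness of $\sf{MIP}=\sf{NEXP}$ requires two non-communicating provers; this is exactly why the paper packages the query verification as a self-contained two-prover ncRIP protocol for $\sf{NEXP}$ (\figref{nexp-ncrip-protocol}, \lemref{nexp-const}) that handles both the yes and the no answer, and then spends only one random check per execution (\figref{constpolygap} checks a single random query with $P_2,P_3$, rather than every query). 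Your claim that ``any lie is refuted in the corresponding subgame'' also glosses over the subtlety the paper dwells on: because the queries are adaptive, a lie by $P_1$ makes the honest queries \emph{unreachable}, so the dominance condition must be enforced at unreachable subforms for $P_1$'s deviation to truthfulness to be profitable.

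The genuine gap is in the upper bound. You correctly observe that an $\sf{NEXP}$ oracle cannot certify dominance, but your fix---scalar threshold queries resolved subform by subform---does not go through without the paper's \lemref{maxsse}: \emph{if} a weakly-dominant SSE exists, then an SSE is weakly dominant if and only if it maximizes the \emph{sum} of all provers' utilities among SSEs. Dominance is a player-wise condition, so without this lemma a binary search on a single scalar $\theta$ identifies nothing; the alternative of tracking per-prover payments yields $O(\alpha(n)^{p(n)})$ subinterval profiles, which a polynomial-time machine cannot enumerate. Second, you frame the task as ``reconstructing the dominant-SSE strategy,'' but the paper never finds the dominant SSE; it relies on Observation~\ref{gaprestate} (any profile within $1/\alpha(n)$ of dominant on every reachable subform already yields the correct bit) together with Claim~\ref{intervalsse} (if \emph{some} SSE whose total payment lies in a given subinterval fails this test, then the dominant SSE's payment cannot lie in that subinterval). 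These two facts are what make the search sound \emph{and} complete: test one SSE per total-payment subinterval, eliminate subintervals on failure, and termination is guaranteed because a dominant SSE exists. Finally, note that the pruning of \lemref{pruning} is performed \emph{per strategy profile}---the transformation depends on $s$ and takes exponential time---so the machine cannot prune once up front as your write-up suggests; the oracle carries out the pruning inside each query, and the invariant you flag as the delicate step (preservation of dominant SSEs) is already part of the pruning lemma's statement rather than something left to establish during the backward induction.
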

\begin{theorem}\thmlabel{ncmrip-char}
  \hspace{\thmfiguretab}$\encRIP = \sf{EXP^{poly\mbox{-}NEXP}}$
\end{theorem}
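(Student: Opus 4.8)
The plan is to prove the two inclusions $\sf{EXP^{poly\mbox{-}NEXP}} \subseteq \encRIP$ and $\encRIP \subseteq \sf{EXP^{poly\mbox{-}NEXP}}$ separately, adapting the simulation template used for the cooperative variant but exploiting the non-cooperative structure to verify both ``yes'' and ``no'' oracle answers.

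For the lower bound, fix $L \in \sf{EXP^{poly\mbox{-}NEXP}}$ decided by an exponential-time machine $M$ making polynomial-length adaptive queries to an $\sf{NEXP}$ oracle. I would have prover $P_1$ simulate the entire computation of $M$ on $x$ and commit to the accept/reject outcome together with the sequence of oracle queries and their claimed yes/no answers. The tableau is exponentially long, so the verifier instead checks local consistency at a randomly chosen step (positions need only polynomially many bits) and randomly samples one oracle call to verify. A claimed ``yes'' is checked by having $P_1$ supply an $\sf{MIP}$ proof, using $\sf{MIP}=\sf{NEXP}$. The essential use of non-cooperation lies in checking ``no'' answers: $P_2$ may challenge any claimed ``no'' by exhibiting an $\sf{MIP}$ proof that the true answer is ``yes,'' earning a reward for a successful challenge and a penalty for a failed one, while $P_1$ is penalized whenever a lie is exposed. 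I would then show that truthful simulation is the dominant SSE and verify the utility-gap bound: since $M$ may issue up to exponentially many queries, uniformly sampling one detects a given deviation with probability at least $1/2^{\mathrm{poly}(n)}$, which is exactly an exponential utility gap.

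For the upper bound I would simulate an arbitrary $\encRIP$ protocol by an exponential-time machine with polynomial-length $\sf{NEXP}$ queries. The polynomial-time verifier forces polynomial depth and polynomial-length messages, but prover branching is $2^{\mathrm{poly}(n)}$ and the polynomial randomness induces $2^{\mathrm{poly}(n)}$ nature moves. I would first apply the pruning lemma (\lemref{pruning}) to replace the nature moves by a polynomial-size set while preserving the dominant-SSE and utility-gap guarantees, then run backward induction over the resulting tree---still exponential in the prover moves, which is why exponential time is needed. At each information set, now describable in polynomially many bits, the machine determines the optimal continuation value by binary-searching a threshold $t$ and asking the $\sf{NEXP}$ oracle the polynomial-length question ``does the prover have a continuation strategy of expected utility at least $t$?'', which is an $\sf{NEXP}$ query since such a strategy is an exponential-size object verifiable in exponential time.

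The step I expect to be the main obstacle is certifying the \emph{dominance} condition, because an $\sf{NEXP}$ oracle cannot by itself verify a dominant SSE: sequential rationality is an existential best-response statement, but weak dominance within subforms is a universal quantification over all alternative strategies and naively escapes $\sf{NEXP}$. I would resolve this by folding the dominance test into the backward-induction order maintained by the outer machine, so that the machine itself enumerates the (exponentially many) candidate actions at each subform and the $\sf{NEXP}$ oracle is invoked only for the existential continuation-value subqueries; the universal comparison demanded by dominance is then carried out in exponential time rather than by the oracle. Together the two inclusions give $\encRIP = \sf{EXP^{poly\mbox{-}NEXP}}$.
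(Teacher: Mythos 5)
Your lower bound contains a genuine flaw in the oracle-verification mechanism: both of your uses of $\sf{MIP}=\sf{NEXP}$ invoke a \emph{single} prover ``supplying an MIP proof'' ($P_1$ for claimed yes-answers, $P_2$ for challenges to claimed no-answers), and no such primitive exists---MIP soundness for $\sf{NEXP}$ inherently requires at least two provers who cannot communicate, since a lone prover who sees all the verifier's queries before answering can cheat. Nor can you patch the challenge step by running the two-prover MIP with $P_1$ and $P_2$: when $P_2$ challenges a ``no,'' $P_1$ wants the MIP to reject, so completeness fails (MIP only guarantees that false claims cannot be made to accept, not that true claims survive a sabotaging partner). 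The repair is to keep $P_1$ out of query verification entirely: run the two-prover $\sf{NEXP}$ ncRIP protocol of \figref{nexp-ncrip-protocol} with two fresh provers $P_2,P_3$ on the sampled query and cross-check $P_1$'s claimed bit against that sub-protocol's answer bit---exactly the structure of the paper's protocol in \figref{constpolygap}, whose sub-protocol handles ``no'' answers by incentives (a fixed payment of $1/2$) rather than by challenges, and whose dominant-SSE analysis requires arguing about unreachable information sets inside the MIP (\lemref{nexp-const}). Note also that the paper's own route to this inclusion is entirely different and much lighter: it gives a generic $2$-prover $3$-round simulation of \emph{any} MRIP protocol with exponential utility gap (\lemref{mrip-ncrip}, \figref{mrip-to-ncrip}) and then cites $\sf{EXP^{poly\mbox{-}NEXP}} = \sf{EXP^{NP}_{||}} \subseteq \sf{MRIP}$; your direct tableau-sampling protocol, once repaired with a third prover, is a plausible alternative because exponentially small detection probabilities suffice here, but it re-proves machinery the paper gets by citation.

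Your upper bound has two problems. The minor one: \lemref{pruning} yields nature moves of support $O(\alpha(n))$, which for exponential $\alpha(n)$ is \emph{exponential}, not polynomial, support; this misstatement happens to be harmless because the outer machine runs in exponential time, and indeed the paper's proof of \lemref{exp-upper} explicitly dispenses with pruning for exactly that reason. The serious one is your core primitive: binary-searching ``does the prover have a continuation strategy of expected utility at least $t$'' at each information set is ill-defined in an imperfect-information game, because expected utility conditioned on an information set depends on the beliefs (a distribution over the histories in the set) and on the other provers' strategies, both of which are fixed only by the \emph{global} profile, not by the local subtree; moreover, a strategy restricted to a subform is itself an exponential-size object, so enumerating candidate \emph{actions} does not enumerate candidate sub-strategies. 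The paper's primitive instead quantifies over complete SSE profiles: the machine enumerates payment profiles (exponentially many, fine in exponential time), queries the oracle ``does there exist an SSE achieving this payment profile on these subforms,'' and collapses the universal dominance comparison into maximizing total expected payment among SSEs via \lemref{maxsse}, together with the observation that an $\sf{NEXP}$ oracle can compare two guessed SSEs. Your closing idea---the outer machine carries the universal comparison while the oracle answers only existential queries---is the correct division of labor and matches the paper's; it is the per-information-set continuation-value formulation that would fail.
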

\end{minipage}%
\begin{minipage}{0.49\textwidth}
\begin{corollary}\corlabel{c-mrip-ncrip}
\hspace{\thmfiguretab}$\cncRIP=\cMRIP$
\end{corollary}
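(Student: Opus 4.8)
The plan is to prove the corollary entirely at the level of complexity-class characterizations, rather than by building a direct simulation between the two proof systems (which would be delicate, since turning cooperative provers into non-cooperative ones requires redesigning payments in a way that could in principle disturb the utility gap). Theorem~\thmref{constchar} already establishes $\cncRIP = \sf{P^{NEXP[O(1)]}}$, so it suffices to show that the cooperative constant-gap class satisfies $\cMRIP = \sf{P^{NEXP[O(1)]}}$ as well; the corollary then follows immediately by transitivity of equality.

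For the cooperative side, I would invoke the characterization of constant-utility-gap MRIP established in the prior work on multi-prover rational proofs~\cite{ChenMcSi16, ChenMcSi18}. If that characterization is stated in terms of a constant number of \emph{non-adaptive} queries, i.e.\ $\cMRIP = \sf{P^{||NEXP[O(1)]}}$, I would reconcile it with the adaptive class using the standard observation that a constant number of adaptive oracle queries can always be replaced by a constant number of non-adaptive ones: a computation making $k = O(1)$ adaptive queries to an $\sf{NEXP}$ oracle branches on at most $2^{k}$ answer patterns, so one can issue, non-adaptively, every query that could arise along any answer-path (a total of at most $2^{k}-1 = O(1)$ queries) and then follow the correct adaptive path in polynomial time once the answers are known. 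Hence $\sf{P^{NEXP[O(1)]}} = \sf{P^{||NEXP[O(1)]}}$, which closes the gap between the two statements.

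The conceptual content of the corollary, and the step I expect to require the most care, is not the query-simulation above but the claim that the non-cooperative machinery buys no additional power in the constant-gap regime. Everything hinges on the fact that a constant utility gap forces both models down to the same bounded-query class: the adaptive leverage that ncRIP normally extracts through dominant-SSE cross-checking collapses to what cooperative provers already deliver once only $O(1)$ queries' worth of robustness is demanded. This should be contrasted with Theorems~\thmref{polychar} and~\thmref{ncmrip-char}, where ncRIP's adaptivity yields $\sf{P^{NEXP}}$ and $\sf{EXP^{poly\mbox{-}NEXP}}$, classes not known to coincide with the corresponding non-adaptive MRIP analogues; this is precisely why no analogous corollary is claimed for the polynomial and exponential cases.
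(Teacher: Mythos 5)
Your proposal is correct and follows essentially the same route as the paper: both deduce the corollary from $\cncRIP = \sf{P^{NEXP[O(1)]}}$ (Theorem~\ref{thm:constchar}), the prior-work characterization $\cMRIP = \sf{P_{||}^{NEXP[O(1)]}}$~\cite{ChenMcSi16,CMS17arxiv}, and the collapse $\sf{P^{NEXP[O(1)]}} = \sf{P_{||}^{NEXP[O(1)]}}$. The only cosmetic difference is that you justify the collapse by the standard query-tree simulation of $O(1)$ adaptive queries by $2^{O(1)}$ non-adaptive ones, whereas the paper asserts this equality and additionally observes that its upper-bound machine in Lemma~\ref{lem:const-upper} already issues its queries non-adaptively.
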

\begin{corollary}\corlabel{p-mrip-ncrip}
\hspace{\thmfiguretab}$\pncRIP \supseteq \pMRIP$
\end{corollary}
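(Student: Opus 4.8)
The plan is to combine the exact characterization of $\pncRIP$ just established with a matching upper bound on the cooperative class $\pMRIP$, so that the containment falls out from the complexity classes on both sides. Concretely, \thmref{polychar} gives $\pncRIP = \sf{P^{NEXP}}$, so it suffices to show $\pMRIP \subseteq \sf{P^{NEXP}}$; the corollary then follows by transitivity. This is the same template I would use for \corref{c-mrip-ncrip}, where both classes pin down to $\sf{P^{NEXP[O(1)]}}$ and equality holds. The reason the poly-gap statement is stated only as a containment is that we invoke an upper bound on $\pMRIP$ (which is what we actually need for the generalization direction) rather than a two-sided characterization matching the $\sf{P^{NEXP}}$ lower bound through the cooperative model.

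To establish $\pMRIP \subseteq \sf{P^{NEXP}}$, I would adapt the upper-bound simulation used for \thmref{polychar} to the cooperative solution concept; this is already available from prior work on the cooperative model~\cite{ChenMcSi18}, but it is instructive to see why it is routine here. Given a poly-utility-gap MRIP protocol deciding a language, a polynomial-time machine with an $\sf{NEXP}$ oracle can, for each candidate answer bit $b$, encode into a single oracle query the question ``does there exist a joint prover strategy, consistent with reporting $b$, whose total expected payment is at least $\tau$?''---an exponential-time predicate since the oracle can guess an (exponential-size) strategy and sum the verifier's payment over its polynomially many random bits. The poly utility gap guarantees that the thresholds separating the correct answer from every incorrect one differ by at least $1/\mathrm{poly}(n)$, so binary searching $\tau$ to polynomially many bits, and doing so for each $b$, uses only polynomially many adaptive queries and recovers the value the cooperating, payment-maximizing provers report. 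Crucially, the cooperative objective is the single scalar ``total expected payment,'' so there is no dominance condition for the oracle to certify; this is exactly the difficulty the paper flags for ncRIP and resolves via the pruning lemma (\lemref{pruning}), and it simply does not arise in the cooperative upper bound.

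The only genuine obstacle appears if one instead insists on proving the corollary as a direct \emph{generalization}, i.e.\ showing $\pMRIP \subseteq \pncRIP$ by converting each MRIP protocol into an ncRIP protocol. The natural attempt keeps the message structure and redefines payments so that every prover's individual payment equals the protocol's total payment, turning the game into one of common interest in which the cooperative payment-maximizing profile is the candidate equilibrium. The difficulty is that the cooperative concept only requires the team to maximize total payment, whereas ncRIP demands a \emph{dominant} SSE that survives elimination of weakly dominated strategies on every subform: equalizing payments aligns incentives but does not by itself confer dominance, one must rule out spurious common-interest equilibria, and one must verify that the $1/\mathrm{poly}(n)$ drop on a wrong answer is realized in the backward-induction-compatible, per-subgame sense rather than only on the final total. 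For this reason I would carry out the characterization route, which sidesteps the solution-concept mismatch entirely and reduces the corollary to the already-proved identity $\pncRIP = \sf{P^{NEXP}}$ together with a routine cooperative upper bound.
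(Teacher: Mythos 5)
Your proposal is correct and takes essentially the same route as the paper: the corollary is obtained not by a protocol-level simulation (which, as you note, would only preserve an exponential utility gap) but by combining the characterization $\pncRIP = \sf{P^{NEXP}}$ of \thmref{polychar} with the upper bound on the cooperative class, namely $\pMRIP = \sf{P^{NEXP}_{||}} \subseteq \sf{P^{NEXP}}$ from prior work. The only correction needed is the citation: the polynomial-gap characterization of $\pMRIP$ is due to \cite{ChenMcSi16} (see also \cite{CMS17arxiv}), not \cite{ChenMcSi18}.
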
 
%\vspace{0.4cm}
\begin{corollary}\corlabel{exp-mrip-ncrip}
\hspace{\thmfiguretab}$\encRIP =\eMRIP$
\end{corollary}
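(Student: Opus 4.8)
The plan is to obtain the equality $\encRIP = \eMRIP$ from the tight complexity characterizations, rather than by directly transforming protocols between the cooperative and non-cooperative models (whose solution concepts differ, so no generic protocol transformation is available). By \thmref{ncmrip-char} we already have $\encRIP = \sf{EXP^{poly\mbox{-}NEXP}}$, so it suffices to establish the matching identity $\eMRIP = \sf{EXP^{poly\mbox{-}NEXP}}$ for the cooperative class; the corollary then follows by transitivity of set equality.

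For the cooperative identity I would invoke the multi-prover rational proof framework of \cite{ChenMcSi16, ChenMcSi18}, which this model generalizes, and, if necessary, re-derive it by specializing the two bounds proved for \thmref{ncmrip-char}. The upper bound $\eMRIP \subseteq \sf{EXP^{poly\mbox{-}NEXP}}$ reuses the same simulation: the exponential-time machine with polynomial-length $\sf{NEXP}$ queries reconstructs the (pruned, via \lemref{pruning}) game tree and evaluates the provers' optimal joint strategy, and certifying a cooperative total-payment maximum is no harder than certifying a dominant SSE. For the lower bound $\sf{EXP^{poly\mbox{-}NEXP}} \subseteq \eMRIP$ I would check that the protocol constructed for \thmref{ncmrip-char} can be read as a cooperative one: its honest strategy already maximizes the total payment and already delivers the correct answer, and the exponential penalty imposed on deviations is concentrated on a prover whose payment is part of the total, so the same $1/\exp(n)$ loss remains a valid cooperative utility gap. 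Combining the two containments yields $\eMRIP = \sf{EXP^{poly\mbox{-}NEXP}}$.

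The main obstacle is the conceptual one of arguing that, at the exponential gap, replacing the dominant-SSE requirement with cooperative total-payment maximization does not shrink the attainable class --- the same subtlety that makes \corref{exp-mrip-ncrip} an equality but \corref{p-mrip-ncrip} only a containment. With a polynomial gap the cooperative provers effectively supply only parallel (non-adaptive) $\sf{NEXP}$ queries, while non-cooperative provers can be played against one another to realize adaptive queries, and the two are not known to coincide; I would argue that at the exponential gap this advantage disappears because the cooperative constructions of \cite{ChenMcSi16, ChenMcSi18} already meet the full class $\sf{EXP^{poly\mbox{-}NEXP}}$. The remaining details I would verify are purely about utility-gap bookkeeping: confirming that the per-prover deviation loss used in the non-cooperative protocol maps to an equal loss in the cooperative total payment, so that the exponential gap is preserved across the two models.
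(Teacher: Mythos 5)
Your proposal is correct and takes essentially the same route as the paper: the paper also argues at the level of complexity classes, combining $\encRIP \subseteq \sf{EXP^{poly-NEXP}}$ (\lemref{exp-upper}) with the characterization $\eMRIP = \sf{EXP_{||}^{NP}} = \sf{EXP^{poly-NEXP}}$ from~\cite{ChenMcSi16} and the MRIP-to-ncRIP simulation (\lemref{mrip-ncrip}), which is exactly the machinery that your appeal to \thmref{ncmrip-char} unwinds to. One caveat: your premise that ``no generic protocol transformation is available'' is inaccurate---\lemref{mrip-ncrip} (\figref{mrip-to-ncrip}) is precisely such a transformation, turning any MRIP protocol into a $2$-prover $3$-round ncRIP protocol with exponential utility gap, and it is what powers the containment $\sf{EXP^{poly-NEXP}} \subseteq \encRIP$ you invoke through \thmref{ncmrip-char}; for the same reason, your fallback of ``re-deriving'' the cooperative characterization by specializing the bounds of \thmref{ncmrip-char} would be circular for the lower bound (that bound is itself built from an MRIP protocol), so the cooperative identity must be cited from prior work, as your primary route indeed does.
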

\end{minipage}
}\caption{Summary of our results.}\figlabel{mainresults}
\end{figure}
\vspace{-5pt}

%\iffalse
%\shikha{Put somewhere}
%Such exact characterizations
%confirm that our solution concept properly reflects
%the game-theoretic nature of rational provers in interactive proofs.
%The techniques we use for the upper bounds
%are quite different from those used for the lower bounds,
%and require a deep understanding of the solution concept and
%extensive-form games.
%\fi

%Here $\sf{P^{NEXP[O(1)]}}$ is polynomial-time Turing machines making constant number of (adaptive) $\sf{NEXP}$ queries, and
%$\sf{EXP^{poly\mbox{-}NEXP}}$ is exponential-time Turing machines making $\sf{NEXP}$ queries, where the query lengths are polynomial in the input length of the Turing machine.
%%\pparagraph{Tight Upper-Bounds for ncRIP}
%
\paragraph*{Power of non-cooperative vs. cooperative and competitive provers}
Interestingly, in the case of constant and exponential utility gap, the power
of ncRIP and MRIP coincide.
%coincides with those using cooperative provers in the case of constant or exponential utility gap (see~\cite{ChenMcSi16, CMS17arxiv} for characterizations of MRIP). %  characteizes MRIP characterizations.
%Such coincidences %reductions
%
This can be explained by the power of adaptive versus non-adaptive queries in oracle Turing machines.

Indeed, our results reveal the main difference between non-cooperative and cooperative provers: the former can be used to handle
adaptive oracle queries, the latter cannot (see~\cite{ChenMcSi16, CMS17arxiv}).
Intuitively, this makes sense---cooperative provers may collude across
adaptive queries, answering some of them incorrectly to
gain on future queries. On the other hand, non-cooperativeness allows us to treat the subgame involving the oracle queries
as a separate game from the rest.
%
% 
%Thus, whenever adaptive queries to the oracle reduce to non-adaptive queries, the two classes coincide.
%This is the case for~$\sf{P^{NEXP[O(1)]}}$ and~$\sf{EXP^{poly\mbox{-}NEXP}}$.  Similarly , if $\sf{P^{NEXP}_{||}}\subset \sf{P^{NEXP}}$, then the class $\pncRIP$ is more powerful than $\pMRIP$.

Our results also show that non-cooperative provers are more powerful than competing provers. Feige and Kilian~\cite{feige1997making} proved that the power of refereed games with imperfect information and perfect
recall is equal to $\sf{EXP}$.
\section{Non-Cooperative Rational Interactive Proofs}\seclabel{ncmrip}
%\vspace{-5pt}
In this section we introduce the model for ncRIP.  
\paragraph*{Notation.}
First, we review the structure of ncRIP protocols and related notation; this is largely the same as~\cite{ChenMcSi16}.
%We review the concepts of extensive-form games in the context of ncRIP in~\secref{extensive}.

The decision problem being solved by an interactive proof is
modeled as whether a given string $x$ is in language $L$. % be a language and
%be a string whose membership in $L$ is to be
%decided, where $n= |x|$.  
An interactive protocol is a pair $(V, \vec{P})$, where
$V$ is the \defn{verifier}, $\vec{P} = (P_1,\ldots,P_{p(n)})$ is the vector
of $p(n)$ \defn{provers}, where $p(n)$ is polynomial in $n = |x|$.  The verifier runs in polynomial time and flips private coins. Each
$P_i$ is computationally unbounded.  The verifier and provers are given the
input $x$. Similar to classical multi-prover interactive proofs, the verifier can communicate with each prover privately, but no two
provers can communicate with each other once the protocol begins.  
%(This part of the model is same as classical
%interactive proofs.)

In a \defn{round}, either each prover sends a message to $V$, or $V$ sends a message to each prover, and these two cases
alternate.
The length of each message $\ell(n)$, and the number of rounds $k(n)$ are both polynomial in $n$.
The final
transcript $\vec{m}$ of the protocol
is a random variable depending on $r$, the random string used by $V$.
At the end of the communication, 
the verifier computes an \defn{answer bit} $c\in\{0, 1\}$ for the membership of $x$ in $L$ based on $x$, $r$, and $\vec{m}$.
$V$ also computes a payment vector $\vec{R} = (R_1, R_2,
\ldots, R_{p(n)})$, where $R_i$ is the payment given to $P_i$, $R_i \in [-1,1]$,
and the total $\sum_{i=1}^{p(n)} R_i \in [-1,1]$ as well.\footnote{Negative payments are used to reflect punishment. 
The individual payments and the total payment can be shifted and scaled to lie in $[0,1]$.}
The protocol and the payment function $\vec{R}$ are public knowledge.

Each prover $P_i$'s \defn{strategy} at round $j$ 
maps the transcript seen at the beginning of round $j$ to the message he sends in that round.
Let $s_i =(s_{i1},\ldots, s_{ik(n)})$ be the
strategy of prover $P_i$, and $s = (s_1,\dots, s_{p(n)})$
be the \defn{strategy profile} of the provers.  Given input $x$, 
and strategy profile $s$, 
let $u_k(x, s, (V, \vec{P}))$ denote the expected payment of prover $P_k$
in the protocol $(V, \vec{P})$ based on randomness $r$, input $x$ and $s$; if $(V,\vec{P})$ is clear from context, we shorten this to  $u_k(x,s)$
or $u_k(s)$.

The protocol forms an \defn{extensive-form game with imperfect information} which we
describe in the next section. 
%(see Appendix~\ref{sec:extensive} for details)
The protocol and payments should be designed such that the provers are incentivized to reach an equilibrium that leads $V$ to the correct answer bit $c$.
We formalize the solution concept in~\secref{sse}.

\subsection{%Non-Cooperative Rational Protocols: 
Extensive-form Games and ncRIP}\seclabel{extensive}

We describe the underlying extensive-form game resulting from ncRIP protocols in this section.
%as an
%extensive-form game, where the game tree is naturally induced by the possible coin flips and messages of the verifiers, 
%as well as the possible messages of the provers.
For details on extensive-form  games, we refer to the
textbook by Osborne and
Rubinstein~\cite{osborne1994course}.

In a protocol $(V, \vec{P})$ with input $x$, the set of provers
$\vec{P}=(P_1,\ldots,P_{p(n)})$ are the \defn{players}. $V$ is
not a player of the game---the deterministic moves of $V$ form the structure
of the game tree and the randomized moves of $V$ are treated as \defn{Nature} moves.
%The \defn{history} $h = (a^1, a^2, \ldots, a^K)$
%corresponding to a decision node is the sequence of actions taken by Nature and the players
%along the path from the root to the decision node.

A \defn{history} $h$ of the game is a sequence of actions taken by the players, written
$h = (a^1, a^2, \ldots, a^K)$ for some actions $a^1, \ldots, a^K$.
The set of histories (including $\phi$, the empty history corresponding to the root) is denoted by ${H}$.
Note that every prefix of $h = (a^1, a^2, \ldots, a^K) \in H$ must also be a valid history, that is, $(a^1, a^2, \ldots, a^L) \in {H}$ for any $L < K$.

A history $h = (a^1, \ldots, a^K)$
is \defn{terminal} if it corresponds to a leaf in the game tree---there is no $K+1$ such that $(a^1, \ldots, a^K, a^{K+1}) \in H$---and \defn{non-terminal} otherwise.

Let $Z(h)$ denote the player whose turn it is to act following a non-terminal history $h$---note that
even though in an ncRIP protocol more than one prover may send a message to the verifier in a round, without loss of generality we can increase the number of rounds
such that only a single prover acts in each round.
%model them as
%acting one by one.
Let $A(h)$ denote the set of actions available to the acting player at
a non-terminal history $h$: that is, $A(h) = \{a \mbox{ : } (h, a) \in H\}$.
If $Z(h)$ is Nature, then $A(h)$ is the set of possible coin flips and messages of the verifier following $h$;
otherwise $A(h)$ is the set of possible messages that $Z(h)$ may send to the verifier.
%In terms of the protocol,
%$Z(h)$ denotes the player whose turn it is to send a message to the verifier,
%after the sequence of rounds denoted by the history $h$ of the game.
For each terminal history $h$,
the {\em utility} of a player $i$ following $h$, $u_i(h)$,
is the payment $R_i$ computed by the verifier given $x$ and $h$.
%the payment function $R_i$ of player $i$ is defined over the set of all terminal
%histories.
% such that each player prefers to maximize $R_i$.

As the verifier's coins are private and the verifier exchanges private messages
with each of the provers,
an ncRIP protocol forms an extensive-form
game of imperfect information.
%For any player $P_i$, when it is their turn to move, they may not know what
%sequence of actions has taken place.

An \defn{information set} $I_i$ of a player $P_i$
%defined
%at their turns in the game,
is a subset of all possible histories $h$ with $Z(h) = P_i$,
and represents all the information that the player knows when acting in one of the decision nodes in $I_i$.
That is, when a decision node
 in $I_i$ is reached, $P_i$ knows that $I_i$ has been reached but does not know exactly which
node he is at.
% that could
%have occurred, such that within an information set all the histories are
%indistinguishable to the player.  In other words, an information set tells the
%player all the possible histories that could have landed him at that stage.
%Formally, information set $I_i$ of player $i$ is defined as $I_i = \{h \in H
%\mbox{ : } Z(h) = P_i\}$ such that
The set of actions available to player $i$ at every decision node
in a particular information set is the same, i.e., $A(h) = A(h')$ for all $h, h' \in I_i$.

Let $A(I_i)$ denote the set of available actions at an information set $I_i$.
The set of all information sets of $P_i$ forms a partition of the set $\{h \in H
\mbox{ : } Z(h) = P_i\}$,
and let $\mathcal{I}_i$ to denote this partition, referred to as the {information partition} of $P_i$.
%, denoted by,
%is the set of all information sets of $P_i$.
In terms of the protocol, $\mathcal{I}_i$ is in a one-to-one correspondence with
the set of possible message sequences $(m_{i1}, \dots, m_{ij})$ seen by $P_i$, where $j\in \{1, \dots, p(n)\}$
and $P_i$ is acting in round $j$.

%That is, there is a probability distribution on
%the actions available to a player at his turn in the game. Let
%$f_i(a|h)$ be the probability that player $P_i$, takes action $a$ after history
%$h$ where $P_i = Z(h)$.

A \defn{pure strategy} $s_i$ of a player $P_i$ in an extensive-form game is a function
that assigns an action in $A(I_i)$ to each information set $I_i \in \mathcal{I}_i$.
A \defn{behavioral strategy} $\beta_i$ of $P_i$ is a collection $(\beta_i(I_i))_{I_i \in \mathcal{I}_i}$
of independent probability measures, where $\beta_i(I_i)$ is a probability measure over
the action set $A(I_i)$.
%\footnote{
A behavioral strategy $\beta_i$ is \defn{completely mixed} if each $\beta_i(I_i)$ assigns
a  positive probability to every action in $A(I_i)$.

In this paper, the provers are deterministic and thus we only consider pure strategies.
%\shikha{I rephrased this from, ``For the purpose of ncRIP protocols,
%it suffices to consider just pure strategies of the players.'' because
%suffices sounds like we have a reduction where behavioral strategies are just as powerful as pure. Our upper bounds clearly don't work for randomized strategies.}
%\sam{The provers don't have randomness.  Can they have non-pure strategies?  Perhaps we don't want to say anything as strong as WLOG, but it might be worth mentioning that augmenting provers with randomness and allowing mixed strategies is an open problem.}
However, the solution
concept introduced in this paper applies to behavioral strategies as well.

A player $i$'s \defn{utility under a strategy profile $s$}, $u_i(s)$, is his expected utility over the
distribution of histories induced
by $s$ and the verifier's randomness.
%A game is said to be {\em generic} if there are no relevant utility ties: that is,
%whenever a player $i$ needs to make a choice between two strategies $s_i$ and $s'_i$ given the other players' strategies,
%the corresponding utilities are not the same. We design ncRIP protocols so that the corresponding extensive-form game
%is always generic.

%The strategy
%$\tilde{s}_i$ of player $P_i$ is a function that assigns an action $A(I_i)$ to
%each information set $I_i \in \mathcal{I}_i$.
The provers are computationally unbounded and never ``forget'' anything and thus
the corresponding extensive-form game has \defn{perfect recall}. That is, for any two histories $h$ and $h'$ in the same information set $I_i$ of a player $P_i$,
$h$ and $h'$ pass the same sequence of information sets to player $P_i$.
%Furthermore, at each such information set,
Furthermore, for any information set in this sequence,
player $P_i$ took the same action in $h$ and $h'$.
%It is easy to see that this is indeed the case for the protocol, because
%
This holds in any ncRIP protocol since all histories
of prover $P_i$
in the same information set $I_i$ at round $j$
 correspond to the sequence of messages
$(m_{i1}, \dots, m_{ij})$ seen by $P_i$ up to round~$j$.
%, that is, the players always remember everything that they knew
%in the past.

\subsection{Solution concept for ncRIP}\seclabel{sse}
%\seclabel{sse}
We want the solution concept for ncRIP to satisfy a strong notion 
of backward induction~\cite{osborne1994course}, a standard criterion
applied to extensive-form games based on the common
knowledge of rationality. Backwards induction refers to the condition 
of being ``sequentially rational'' in an extensive-form game, 
that is, each player must play his best response at each node where he has to move, even if
his rationality implies that such a node will not be reached.
%\footnote{The name
%backward induction comes from the backward-induction algorithm
%that can be used to determine equilibria satisfying this condition.}
 
%This criterion
%is easy to apply in games of perfect information, which is where we initiate
%the discussion of our solution concept. Perfect information allows us to describe the main ideas of our solution concept more directly.
%Later, we generalize to the imperfect-information setting. %Later, we generalize to the imperfect-information setting.
%
% 
%
%\paragraph*{Perfect information}
%As a warm up, we first define ncRIP in the context of perfect information.
%This allows us to describe the main ideas of our solution concept more directly.
%while avoiding the difficulties from accounting for players' beliefs.
%
If an interactive protocol forms an extensive-form game of perfect information, 
it is easy to formalize this condition. A strategy $s$ is
\defn{sequentially rational} or satisfies \defn{backward induction},
if for every player $i$ and every decision node of $i$, conditioned 
on reaching the decision node, $s_i$ is a best response
to $s_{-i}$, that is, $u_i(s_i, s_{-i}) \geq u_i(s_i', s_{-i})$ for any strategy $s_i'$ of prover $i$.
In other words, $s$ induces a best response at every subgame.\footnote{A subgame is a subtree that can be treated as a separate well-defined game. In a perfect-information game, every node starts a new subgame. ``Backward induction'' and ``{subgame-perfect equilibrium}'' are used interchangeably in the literature~\cite{glazer1996virtual}.}

In a game of imperfect information, the decision nodes corresponding to a player's
turn are partitioned into \defn{information sets}, where the player is unable to distinguish
between the possible histories within an information set. 
To reason about sequential rationality we need a probability distribution $u_I$
on each information set $I$, so as to determine the players' expected utility conditioned on reaching $I$
and thus their best response at $I$. 
%under $s$. 
The probability distribution
$\mu_I$ is referred to as the player's \defn{beliefs} about the potential histories leading to $I$.

Given a strategy profile $s$, beliefs~$u_I$ at \defn{reachable information sets} (reached with non-zero probability under~$s$) are derived from $s$ 
%in a standard way 
using Bayes' rule; this is a standard derivation used in most solution
concepts for extensive-form games~\cite{osborne1994course}. We sometimes write~$\mu_I^s$ to emphasize that the beliefs depend on~$s$.  

Past work has introduced a variety of methods for defining 
the beliefs $u_I^s$ at \defn{unreachable information sets} $I$ (i.e. information sets reached with 
probability zero under $s$); see e.g.~\cite{kreps1982sequential, 
selten1975reexamination, 
cho1987signaling, 
mclennan1985justifiable}. %s imperfect-information games
%and it 
%are dealt with in a variety of ways by different solution concepts in the literature~\cite{}. 
%are defined to satisfy backward induction differently by different solution concepts in the literature. 
The most well-known is sequential
equilibrium~\cite{kreps1982sequential},
which demands an explicit system of beliefs that satisfies a (somewhat artificial) consistency condition.
Other equilibria, like trembling hand~\cite{selten1975reexamination}, reason implicitly about beliefs at unreachable information sets by assigning a negligible
probability with which the player's hand ``trembles,'' and reaches an otherwise-unreachable information set. 
%Further refinements of these abound that take the structure and payoffs of the game into account~\cite{}.
%Past work has led to 
Further refinements of these  
take the structure and payoffs of the game into account~\cite{cho1987signaling, mclennan1985justifiable, banks1987equilibrium}.  

The treatment of beliefs at unreachable information sets in these solution concepts 
is often focused on ensuring that they can be used to analyze {\em every} extensive-form game. From a mechanism-design
perspective, our focus is different---we want to design mechanisms in such a way that they admit much stronger equilibrium requirements, even if such an equilibrium cannot be used to analyze every game.

At a high-level, 
%we want the solution concept to be
%sequentially rational for any beliefs the players may hold at unreachable information sets.
%That is, 
we want the players' beliefs to be irrelevant in determining 
their best response at unreachable information sets. 
We call this notion \defn{strong sequential rationality}.
A strategy profile $s$ is \defn{strongly sequentially rational} if for every
information set $I$, conditioned on reaching $I$, $s_i$
is a best response to $s_{-i}$ with respect to $\mu_I^s$, where
\begin{itemize}
\item $\mu_I^s$
is derived using Bayes's if $I$ is reachable under $s$, and 
\item $\mu_I^s$
is \emph{any} arbitrary probability distribution if $I$ is unreachable under $s$.
\end{itemize} 
%At an unreachable information set $I$, 
In~\secref{strongse}, we show that 
this requirement 
is equivalent to saying that, at an unreachable information set $I$, $s_i$
must be a best response to $s_{-i}$ conditioned on reaching each history $h \in I$.
In other words, at an unreachable information set $I$, each player must have a \emph{single} action that is the best response to every possible history in $I$.  
%\begin{itemize}[noitemsep, nolistsep, leftmargin=*]
%\item {\bf If $I$ is reachable under $s$:} 
%conditioned on reaching $I_i$,  $s_i$ is a best response
%to $s_{-i}$ with respect to $\mu_I$ derived from $s$ using Bayes' rule.
%\item {\bf If $I$ is not reachable under $s$:}
%conditioned on reaching $I_i$, $s_i$ is a best response to
%$s_{-i}$ with respect to any beliefs $\mu_I$.
%\end{itemize}
We say a strategy profile is a \defn{strong sequential equilibrium} (SSE) 
if it satisfies
strong sequential rationality.

We refine our solution concept further %to be
to eliminate strategies that are weakly dominated within ``subgames'' of the entire game. 
This is crucial to deal with equilibrium selection, in particular, because
the players' cannot unilaterally deviate out of a suboptimal equilibria.
We say an SSE $s$ \defn{weakly dominates} another SSE $s'$ 
if, for any player $i$, $u_i(s) \geq u_i(s')$. A strategy $s$ is \defn{weakly dominant}
if it dominates all SSEs. Next we eliminate SSEs
that are weakly dominated in subgames of the entire game.
%$ when applying backward
%induction. %we define a dominant variant of SSE.
% at every ``subgame.''
%Since very few information sets are singleton (forming proper subgames),
We use the generalized notion of subgames, called \defn{subforms}, defined by Kreps and Wilson~\cite{kreps1982sequential}
for extensive-form games with imperfect information.

%$\pparagraph{Subforms}
To review the definition of subforms, we need further notation.
Let $H$ be the set of histories of the game. Recall that
a history is a sequence $(a^1, \ldots, a^K)$ of actions taken by the players.
For histories $h, h' \in H$, we say $h$ has $h'$ as a \defn{prefix}
if there exists some sequence of actions $b^1, \ldots, b^L$ (possibly empty)
such that $h = (h', b^1, \ldots, b^L)$.
%, then
For a history $h \in H$, let $I(h)$ be the unique
information set containing $h$. 
%Let $F_I$
%be the set of all \defn{information sets}
%following $I$ (including $I$), i.e., %that is,
%$F_I = \{ I(h',h) \mbox{ $|$ } h'\in\nobreak I, (h', h)\in\nobreak H\}$.
%As $h$ can be the empty history $\phi$, we have $I\in F_I$.

For an information set $I$, let $H_I$ be the set of all histories
following $I$, that is, $H_I$ is the set of all histories $h \in H$
such that $h$ has a prefix in $I$. 
%Let $F_I$ be the union of all information sets containing any history
%following $I$, that is, $F_I = \cup_{h \in H_I} I(h)$.
We say that
$H_I$ %\defn{forest rooted at $I$}, that is,
%$H_I = \cup_{h' \in I} \{ h \mbox{ $|$ } (h',h) \in H \}$,
%\begin{definition}%[Forest Subgame]
%\deflabel{subforest}(Subform~\cite{kreps1982sequential})
%For any information set $I$, $H_I$ 
is a \defn{subform \em rooted at $I$} if for every information set $I'$ such that $I' \cap H_I \neq \emptyset$, %then all the histories in $I'$ must be in $H_I$,
%that is, 
it holds that $I' \subseteq H_I$.
%and every $h' \in I'$, $h'$ has a prefix $\hat{h}$ that is in $I$,
%that is, $\cup_{I' \in F_I} \{h' \in I'\} = H_I$.
%\end{definition}
%
%
Roughly speaking, a subform $H_I$ ``completely contains''
all histories of the information sets following $I$, so there is no information asymmetry between
the players acting within $H_I$. 
%
%A {subform} $H_I$ is reachable under a strategy profile $s$ if $I$ is reachable under $s$. 

Thus, given a strategy profile, the subform $H_I$ together with the probability distribution $\mu_I^s$ on $I$, 
can be treated as a well-defined game.
%Given a strategy profile $s$,
%we define this 
%(\sam{I don't really get this}  here $\mu_I^s$ is defined analogous to our beliefs for SSE). Let $s_I$ be the strategy profile induced by $s$ in the subform $H_I$.
%: if $I$ is reachable, $\mu_I^s$ is derived from $s$ using Bayes' rule and otherwise $\mu_I^s$ can be any probability
%distribution. %Thus, for a given strategy $s$, $(H_I, \mu_I)$ is a game for every subform $H_I$. 

We say an SSE $s$ \defn{weakly dominates} SSE $s'$ \defn{on a subform} $H_I$ if,  
for any player $j$ acting in $H_I$,
the expected utility of $j$ under $s_I$ in the game $(H_I, \mu_I^s)$ is greater
than or equal to their utility under $s_I'$ in the game $(H_I, \mu_I^{s'})$.

We eliminate weakly dominated strategies by imposing this dominance condition in a backward-induction-compatible way on the subforms
as follows.
%From here on, we drop ``weakly'' from our discussion and just say dominates.

% if for every player $i$
%acting in the subform $H_I$, their expected utility under $s$ in the subform calculated using $\mu_I(s)$
%is at least as much as their; here $u_i(s), u_i(s')$ are calculated 

%We are finally ready to define the solution concept of ncRIP. 
%
%\paragraph{Recursive-Maximum SSE}
%To deal with equilibrium selection in ncRIP protocols, we enforce the solution concept
%to hold recursively on every subform, as in subgame-perfect equilibrium.
%
\begin{definition}[Dominant Strong Sequential Equilibrium]
\deflabel{max-sse}
A strategy profile $s$ is a dominant strong sequential equilibrium if
$s$ is an SSE and
  \begin{itemize}%[noitemsep,nolistsep,leftmargin=*]
\item for every subform $H_I$ of height $1$: $s$ weakly dominates $s$' on $H_I$ for any SSE $s'$ %is a maximal SSE of $H_I$ 
%\begin{itemize}[noitemsep,nolistsep,leftmargin=*]
%\item {\bf if $I$ is reachable under $s$:} $s_I$ is a maximal SSE on~$H_I$ with respect to~$\mu_I$ (deduced from~$s$ using Bayes'~rule),
%\item {\bf if $I$ is unreachable under $s$:} $s_I$ is a maximal SSE on~$H_I$ with respect to any distribution~$\mu_I$,
%\end{itemize}
\item for every subform $H_I$ subgame of height $>1$:
%\begin{itemize}[noitemsep,nolistsep,leftmargin=*]
%\item {\bf if $I$ is reachable under $s$:}
%    
$s$ weakly dominates $s'$ on $H_I$ for any SSE $s'$ that is a dominant SSE in all subforms of height at most $h-1$. % (with respect to $\mu_I$ deduced from $s$ using Bayes' rule),
%\item {\bf if $I$ is unreachable under $s$:}
%$s_I$ is maximal among all SSEs on $H_I$ that are rmSSEs in all subforms following $I$ (with respect to any distribution $\mu_I$).
%\end{itemize}
\end{itemize}
\end{definition}

We are ready to define non-cooperative rational interactive proofs. % with non-cooperative provers.
\begin{definition}[Non-Cooperative Rational Interactive Proof]
\deflabel{mripnc} Fix an arbitrary string $x$ and language $L$.
An interactive protocol $(V, \vec{P})$ is a {\em non-cooperative
rational interactive proof} (ncRIP) protocol for $L$ if
there exists a strategy profile $s$ of the provers %that is %any strategy profile $s$ of the provers
that is a dominant SSE in the resulting extensive-form game, and
under any dominant SSE,
the answer bit $c$
output by the verifier
is correct (i.e., $c=1$ iff $x\in L$) with probability 1, where the probability is taken over the verifier's randomness.
\end{definition}

\subsection{Utility Gap in ncRIP Protocols}\label{sec:gap-model}
%Dominant SSEs provide a strong guarantee that rational non-cooperative provers
%will act as prescribed by the protocol and lead the verifier to the correct answer.
%However, this is only true i
In game theory, players are assumed to be perfectly rational and  ``sensitive'' to
arbitrarily small utility losses.
%In this sense, the provers  in our ncRIP protocols
%will not deviate to other strategies even if the deviations only cause a tiny payment loss.
%
In reality, some provers may not care about small losses. Such provers may not
have sufficient incentive to reach a dominant SSE, and could end up leading the verifier to the wrong answer.
To design ncRIP protocols that are robust against such ``insensitive'' provers,
%is useful in analyzing an ncRIP protocol and incorporating the {\em completeness condition}---
%showing that
%there exists a strategy profile $s$ (the recursive-maximal SSE) which non-cooperative provers can
%follow to maximize their own expected payments and under any such $s$, the verifier gets the correct answer (with probability $1$).
we define the notion of \defn{utility gap}.
% for non-cooperative provers.

%---it specifies what happens to
%the utility of dishonest provers. % who try to mislead the verifier towards a wrong answer.
%
Informally, a utility gap of $u$ means that if a strategy profile $s$
leads the verifier to the wrong answer, there must exist a subform, such that
some provers must lose at least a $1/u$ amount in their final individual payments (compared to their optimal strategy in that subform).
As a consequence, these provers will not deviate to $s$, {as long as} they care about $1/u$ payment losses.
We formalize this notion below.  (We say a subform $H_I$ is reachable under $s$ if the information set $I$
is reached under $s$ with non-zero probability.)

\begin{definition}[Utility Gap]\deflabel{rewardgap}
Let~$(V, \vec{P})$ be an ncRIP protocol for a language~$L$ and~$s^*$ be a dominant SSE of the resulting game.
The protocol~$(V, \vec{P})$ has an {\em $\alpha(n)$-utility gap} or $\alpha(n)$-gap, if
for any strategy profile $s'$ under which the answer bit~$c'$ is wrong,
there exists a subform~$H_{I}$ reachable under $s'$, and a prover~$P_j$
acting in~$H_I$ who has deviated from~$s^*$ such that
\[u_j (x, (s_{-I}', s_{I}^*), (V, \vec{P})) -
u_j (x, (s_{-I}', s_{I}'), (V,\vec{P})) > 1/\alpha(n),\]
where $s_{-I}'$ denotes the strategy profile $s'$ outside subform $H_I$,
that is, $s_{-I}'=s' \setminus s_{I}'$.
\end{definition}

%
%\iffalse
%Let $s^*$ be the recursive-maximal SSE of the game resulting from $(V, \vec{P})$
%and let $c^*$ be the answer bit under $s^*$, that is, $c^* =1$ iff $x\in L$.
%Let $s'$ be a SSE, such that the answer bit $c'$ under $s$ is incorrect: $c\neq
%c$.  We say the protocol $(V, \vec{P})$ has an {\em $\alpha(n)$-utility gap} if
%there exists a subform $H_{I}$ (reachable under $s'$) and a prover $P_j$
%acting at $H_I$ who has deviated from $s^*$, such $P_j$ loses a $1/\alpha(n)$
%amount in his expected payment compared to a strategy profile where $s_{I}'$
%(the strategy profile induced by $s'$ on $H_I$) is replaced by $s_{I}^*$ (the
%strategy profile induced by $s^*$ on $H_I$, keeping the strategy profile
%outside $H_I$ fixed. In particular, let $s_{-I}'$ denote the strategy profile
%$s' \setminus s_{I}'$, then for some prover $P_j$ acting in $H_I$ we have:
%\[\mu_j \left(x, r,(V, \vec{P}), (s_{-I}', s_{I}^*)\right) -
%\mu_j \left(x, r,(V,\vec{P}), (s_{-I}', s_{I}')\right) > 1/\alpha(n),\]
%\end{definition}
%\fi

The class of languages that have an ncRIP protocol with \defn{constant},
\defn{polynomial} and \defn{exponential} utility gap, are denoted by $\cncRIP$, $\pncRIP$, and $\encRIP$ respectively.\footnote{These
classes are formally defined by taking the union over languages with
$\alpha(n)$ utility gap, for every $\alpha(n)$ that is constant, polynomial and exponential in $n$ respectively.}
Note that $\alpha(n)$ gap corresponds to a payment loss of $1/\alpha(n)$, so an exponential utility gap is the weakest guarantee.
%utility gap by $\cncRIP$, the class of languages that have an ncRIP
%protocol with polynomial utility gap by $\pncRIP$,
%and the class of languages that have an ncRIP
%protocol with exponential utility gap by $\encRIP$.

%In this paper, we sometimes refer to the ``utility gap'' of a protocol simply
%as ``gap''.
%Specifically, $\pncRIP$ is the union of the classes of ncRIP protocols with $\alpha(n)$ utility gaps, where $\alpha(n)$ is a polynomial in $n$, and $\cncRIP$ is defined analogously.

%\iffalse
%\begin{remark}
%To study utility gaps in a consistent way, the verifier's budget must be fixed.
%Otherwise, the utility gap scales with the budget: a polynomial-utility gap
%under a constant budget is a constant-utility gap under a polynomial budget.
%The verifier's total budget is constant in our model. %{total budget}. %and the sum of payments given to all provers is always in $[-1, 1]$.
%\end{remark}
%\fi
%\paragraph{Utility gap in RIPs analogous soundness in IPs.}
%\shikha{We need to also talk about the similarity between utility gap guarantees and soundness guarantees.}

% \input{extensive}
\section{Lower Bounds: ncRIP Protocols with Utility Gap}\seclabel{lower}

In this section, we give an $O(1)$-utility gap ncRIP protocol for the class $\sf{NEXP}$
and use it to give an $O(\alpha(n))$-utility gap ncRIP protocol for the class $\sf{P^{NEXP[\alpha(n)]}}$.
Setting $\alpha(n)$ to be a constant or polynomial in $n$ gives us $\sf{P^{NEXP[O(1)]}} \subseteq \cncRIP$
and $\sf{P^{NEXP}} \subseteq \pncRIP$
respectively.

%Here we argue correctness of our protocols at a high level; see the full version of the paper for formal proofs.

%The correctness proofs of our protocols are included in the full version of the paper.
% in~\secref{omittedproofs}.
%lower bound of the characterization in~\thmref{ncmrip-char}.
%$\sf{NEX

\paragraph*{A constant-gap ncRIP protocol for {\boldmath $\sf{NEXP}$}}
The ncRIP protocol for any language in~$\sf{NEXP}$ is in~\figref{nexp-ncrip-protocol}.
% and is subsumed by our simulation of MRIP in~\figref{mrip-to-ncrip}.}
The protocol uses the 2-prover 
1-round MIP for $\sf{NEXP}$~\cite{feige1992two} as a blackbox.\footnote{It is also possible
to give a scoring-rule based ncRIP protocol for $\sf{NEXP}$, similar to MRIP~\cite{ChenMcSi16}. However,
such a protocol has an exponential utility gap.}
The protocol in~\figref{nexp-ncrip-protocol} essentially forces the non-cooperative provers to coordinate by
giving them identical payments.  As a result, it is almost identical to the MRIP protocol for $\sf{NEXP}$~\cite{ChenMcSi16}.

While the payment scheme is simple, in the analysis we have to open up the black-box MIP. In particular,
if $P_1$ sends $c=0$ in round~\ref{step:answer}, all the information sets 
of $P_1$ and $P_2$ in round~\ref{simple-3} become unreachable. To show that
an SSE exists, we show that the provers have
a best response at these unreachable sets, which is argued based on the 
messages exchanged in the MIP protocol.

\begin{lemma}\lemlabel{nexp-const}
\lemlabel{nexp-mip}
Any language $L \in \sf{NEXP}$ has a 2-prover 3-round $6/5$-gap ncRIP protocol.
\end{lemma}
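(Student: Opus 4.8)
The plan is to analyze the three-round protocol of \figref{nexp-ncrip-protocol}: in round~\ref{step:answer} prover $P_1$ announces the answer bit $c$; if $c=0$ the verifier halts and outputs $0$, and if $c=1$ the verifier runs the two-prover one-round $\sf{MIP}$ for $\sf{NEXP}$ of~\cite{feige1992two} as a black box, sending its questions in round~2 and collecting the answers in round~\ref{simple-3}. First I would amplify the $\sf{MIP}$ soundness to a small constant $s$ (by parallel repetition, preserving perfect completeness and the single round), and fix three payment levels, paid to both provers: $R_0$ when $c=0$, $R_1$ when $c=1$ and the $\sf{MIP}$ accepts, and $R_2$ when $c=1$ and it rejects, with $R_1>R_2$ so that acceptance is always preferred. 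I would choose these levels (which is possible within the verifier's budget once $s$ is small enough) so that
\[
R_1-R_0>\tfrac{5}{6}\qquad\text{and}\qquad R_0-\bigl(sR_1+(1-s)R_2\bigr)>\tfrac{5}{6}.
\]
It then remains to exhibit a dominant SSE that makes the verifier correct, and to establish the $6/5$-utility gap.

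For the dominant SSE, let $s^{*}$ be the profile in which $P_1$ reports the true bit and both provers play the honest/optimal $\sf{MIP}$ strategy in round~\ref{simple-3}. Correctness is immediate: on a yes-instance $c=1$ and perfect completeness makes the verifier accept with probability~$1$; on a no-instance $P_1$ reports $c=0$ and the verifier outputs~$0$. On the reachable round-\ref{simple-3} information sets (yes-instances with $c=1$) strong sequential rationality holds by ``opening the box'': the honest answers derived from a satisfying oracle make the $\sf{MIP}$ predicate accept for \emph{every} question pair, so they are simultaneously a best response at every history. Dominance is then routine---in the height-one $\sf{MIP}$ subforms the acceptance-maximizing answers maximize each prover's payment since $R_1>R_2$, and at the root reporting the correct bit is strictly best by the two displayed inequalities---so every dominant SSE reports the correct bit and the verifier is correct with probability~$1$.

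The main obstacle is verifying strong sequential rationality at the round-\ref{simple-3} information sets on \emph{no-instances}, where $P_1$ honestly plays $c=0$ and the whole $\sf{MIP}$ subgame becomes \emph{unreachable}. There the SSE requirement is the strongest possible---a best response conditioned on \emph{each individual} history---and it cannot be read off from the accept/reject output of the $\sf{MIP}$, because a prover's acceptance-optimal answer to its own question depends on the other prover's (unobserved) question and answer. This is precisely why we must open the box rather than use the $\sf{MIP}$ as in the cooperative (MRIP) analysis. I would resolve it by descending into the exchanged $\sf{MIP}$ messages and specifying off-path strategies and induced beliefs prover-by-prover, so that against the fixed strategy of the other prover each prover's answer is either acceptance-optimal or payoff-irrelevant at every history of its information set; I would then check that these off-path choices induce no profitable on-path deviation and are compatible with the subform-dominance condition of \defref{max-sse}. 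I expect this to be the delicate, technically heaviest step.

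Finally, the $6/5$-gap follows directly from \defref{rewardgap} by taking $H_I$ to be the (always reachable) root subform and $P_1$ as the deviating prover. If a profile $s'$ outputs the wrong bit on a yes-instance, then $s'$ plays $c=0$, and switching $P_1$ to $s^{*}_I$ (report $c=1$ and play honest $\sf{MIP}$) raises $P_1$'s payment from $R_0$ to $R_1$, a gain of $R_1-R_0>5/6$. If $s'$ is wrong on a no-instance, then $s'$ plays $c=1$ and, by soundness, the $\sf{MIP}$ accepts with probability $q\le s$ under $s'$; hence $P_1$'s payment is at most $sR_1+(1-s)R_2$, and switching to $s^{*}_I$ (report $c=0$) yields $R_0$, a gain of at least $R_0-(sR_1+(1-s)R_2)>5/6$. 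Since $1/\alpha(n)=5/6$ corresponds to $\alpha(n)=6/5$, both cases give the claimed gap.
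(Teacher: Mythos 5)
Your overall plan coincides with the paper's: you analyze the protocol of \figref{nexp-ncrip-protocol}, you correctly identify that the entire difficulty is establishing strong sequential rationality on the MIP subform when it is \emph{unreachable} (no-instances, where honest play puts $c=0$), you correctly observe that this forces one to open the Feige--Lov\'asz black box, and your utility-gap computation (root subform, deviating prover $P_1$, the two payment-difference inequalities) has the same structure as the paper's. Your one substantive departure---amplifying the MIP soundness to a small constant $s$ and re-tuning the three payment levels so that both $R_1-R_0>5/6$ and $R_0-(sR_1+(1-s)R_2)>5/6$---is legitimate and in fact repairs a blemish: with the paper's own constants ($1/2$ for $c=0$, $\pm 1$ for accept/reject, soundness $1/3$), a prover who reports $c=0$ on a yes-instance loses only $1/2$, so a loss exceeding $5/6$ holds only in the no-instance case; your tuning makes the stated constant correct in both cases.

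The genuine gap is that the step you yourself flag as ``the delicate, technically heaviest step'' is never carried out, and everything else rests on it. Without an explicit SSE on the unreachable MIP subform, the assertion ``dominance is then routine'' is not justified: \defref{mripnc} requires a dominant SSE to \emph{exist}, and \defref{max-sse} requires the dominance condition to hold at unreachable subforms as well, so exhibiting per-history best responses off path is precisely what must be proven, not assumed. The paper closes this concretely. Take $P_1$, WLOG, to be the last mover in the MIP. Fix $P_2$'s strategy to answer every query tuple $(y_1,\ldots,y_m)$ according to an acceptance-maximizing strategy consistent with a single low-degree polynomial; then at each information set of $P_1$, indexed by $\vec{m}_1=((q_1,x_1),\ldots,(q_m,x_m))$, let $P_1$ reply with a polynomial $P(t)$ whose values agree with $P_2$'s committed replies at every $t$ with $q_j+tx_j=y_j$. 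Since acceptance pays strictly more than rejection, this single action is a best response at \emph{every} history of the information set simultaneously, which is exactly the per-history requirement of \defref{strong-se-2} and \lemref{unreachable} at unreachable sets; and given $P_1$'s matching strategy, $P_2$'s per-history best response is to maximize acceptance probability. Only after this construction can one say that any dominant SSE, conditioned on reaching the subform, maximizes the MIP acceptance probability, invoke soundness to cap the subform value by $sR_1+(1-s)R_2$, and run your root-level comparison. So your architecture is right and matches the paper's, but the lemma is not proven until you instantiate the off-path strategies; ``specify strategies so that each answer is acceptance-optimal or payoff-irrelevant at every history'' is the goal restated, not the argument.
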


\begin{proof} %{\lemref{nexp-const}}
The ncRIP protocol for any language $L \in \sf{NEXP}$ is given in~\figref{nexp-ncrip-protocol}.

We show that there exists a strategy profile $s = (s_1,s_2)$ of provers $P_1$ and $P_2$ respectively that
is a dominant SSE of the game tree corresponding to the protocol $(V, P_1, P_2)$ and under 
any dominant SSE, the answer bit $c=1$ if and only if $x \in L$.

In the protocol, if $c=0$, no player acts. If $c=1$, the
verifier executes the 1-round blackbox MIP protocol with $P_1$ and $P_2$.
To exhibit a strategy that is a best response for $P_1$ and $P_2$ on their information sets at step~\ref{simple-3},
we look at the messages the verifier sends to each prover in the classic $\sf{MIP}$ protocol.
In the $\sf{MIP}$ protocol, the verifier sends $P_1$ a tuple of message pairs $\vec{m}_1 = ((q_1, x_1),\ldots, (q_m,x_m))$
where $m$ is a polynomial in $n$ and $V$ sends $P_2$ a tuple of random messages $\vec{m}_2 = (y_1, \ldots, y_m)$. $P_1$
sends back a polynomial $P(t)$ and $P_2$ sends back the value of the polynomial $P(t)$ for $t$ satisfying $q_j + t x_j = y_j$.
The verifier rejects if their answers are inconsistent. 

To analyze the SSE strategy, without loss of generality, suppose $P_1$ moves last in the MIP protocol.
Any information set $I_1$ of $P_1$ at step~\ref{simple-3} is 
characterized by the message $\vec{m}_1$ he receives. The decision nodes in $I_1$ correspond to each
possible message $\vec{m}_2$ that $P_2$ could have received. 

Because the $V$ gives the largest payment when the MIP protocol accepts, given $P_2$'s strategy, if any information set $I_1$
of $P_1$ is reached under $s$ then $P_1$'s best response at $I_1$ is to maximize the acceptance-probability 
of the MIP protocol given his beliefs on $I_1$. Similarly, given $P_2$'s strategy, if any information set $I_1$
of $P_1$ is unreachable under $s$ then, $P_1$'s best response at $I_1$ for every decision node in $I_1$
is the following: given $\vec{m}_1 = ((q_1, x_1),\ldots, (q_m,x_m))$, respond with a polynomial $P(t)$ such that
$P(t)$'s value at all $t$ coincides with $P_2$'s reply on all $y_j$ where $q_j + t x_j = y_j$.

Given $P_1$'s strategy of committing
to a polynomial $P(t)$ that matches $P_2$ on all values of $t$, $P_2$' best response
at any information set $I_2$ (reachable or unreachable under $s$) at step~\ref{simple-3} at every decision node in $I_2$ 
is to answer the tuple of queries $(y_1, \ldots, y_m)$ so as to maximize the acceptance probability of the MIP protocol. %Note that this is a
The verifier's move at step~\ref{simple-3} is the root of a non-trivial subform.
Conditioned on step~\ref{simple-3} being reached, any dominant SSE at this subform
corresponds to a strategy profile $s$ that is an SSE, which when restricted to this subform,
maximizes the acceptance probaility of the MIP protocol. Under any such dominant SSE, we show that
$P_1$'s best response at step~\ref{step:answer} is to send the correct answer bit.

Suppose $x \in L$. If $P_1$ sends $c=0$, then $R_1=1/2$ with probability $1$. 
On the other hand, if $P_1$ sends
$c=1$, by the soundness condition of the MIP protocol, the acceptance probability
is $1$, leading to $R_1 =1$. Thus for $x \in L$, $s$ is a dominant SSE iff $P_1$ sends $c=1$.

Suppose $x \notin L$. If $P_1$ reports $c=0$, then $R_1=1/2$ with probability $1$. 
On the other hand if $P_1$ reports $c=1$, then by the soundness condition of the
MIP protocol, the maximum
acceptance probability is $1/3$
leading to $R_1=1$.  The protocol rejects with probability at least
$2/3$ leading to $R_1=-1$.  Thus, $P_1$'s expected payment for
misreporting the answer bit is at most $R_1=-1/3$.  Thus for $x
\notin L$, $s$ is a dominant SSE iff $P_1$ sends $c=0$.

Thus, under $s$ which is a dominant SSE, $c=1$ if and only if $x \in L$.

Furthermore, the payment incurred by the provers when the answer bit sent
in the first round is incorrect is at least $5/6$ for both provers and thus
the protocol has constant utility gap.
\end{proof}

\vspace{-10pt}
\begin{figure}[phtb]
\centering
\fbox{
\begin{minipage}{0.96\textwidth}
{\normalsize
%\vspace{0.2ex}
\noindent{}For any input $x$ and language $L \in \sf{NEXP}$, the protocol $(V, P_1, P_2)$ for $L$ is:
\begin{enumerate}%[leftmargin=15pt, noitemsep, nolistsep]
\item\label{step:answer} $P_1$ sends a bit $c$ to $V$.  $V$ outputs $c$ at the end of the protocol.

\item\label{simple-2} If $c=0$, then the protocol ends and the payments are $R_1 =R_2 = 1/2$.

\item\label{simple-3}
Otherwise, $V$ runs the classic 2-prover 1-round MIP protocol for $\sf{NEXP}$~\cite{feige1992two} with $P_1$ and $P_2$ to prove if $x \in L$.
If the MIP protocol accepts then $R_1 = 1$, $R_2=1$; else, $R_1 = -1$, $R_2=-1$.
%\vspace{1ex}
\end{enumerate}
}
\end{minipage}
}
\caption{A simple $O(1)$-utility gap ncRIP protocol for $\sf{NEXP}$.}
\figlabel{nexp-ncrip-protocol}
\end{figure}
\vspace{-12pt}
\paragraph*{An {\boldmath $O(\alpha(n))$}-gap ncRIP protocol for {\boldmath$\sf{P^{NEXP[\alpha(n)]}}$}}
Using the above $\sf{NEXP}$ protocol as a subroutine,  we give an ncRIP protocol with~$O(\alpha(n))$-utility gap
%, or $\ancRIP$, 
for the class $\sf{P^{NEXP[\alpha(n)]}}$.  This protocol works for any function $\alpha(n)$
%where $\alpha(n)$ is a function of $n$ 
which~(1) is a positive integer for all $n$,~(2) is upper-bounded by a polynomial in~$n$,~and~(3) is
polynomial-time computable.\footnote{For~\thmref{constchar}
and~\thmref{polychar},~$\alpha(n)$ need only be a 
constant or polynomial in $n$. However,~\lemref{notc-lower} holds for all $\alpha(n)$'s that are
polynomial-time computable (given $1^n$) and polynomially bounded, such as $\log n$, $\sqrt{n}$, etc.} 
%
%Recall that $\sf{P^{NEXP[\alpha(n)]}}$ is the class of languages decided by
%a polynomial-time Turing machine with $\alpha(n)$ adaptive queries to an
%$\sf{NEXP}$ oracle.  

The ncRIP protocol for any $L\in \sf{P^{NEXP[\alpha(n)]}}$ is in \figref{constpolygap}.  
It is fairly intuitive---$V$ simulates the polynomial-time machine directly, and uses the
ncRIP protocol for~$\sf{NEXP}$ for the oracle queries.  

\begin{figure}[phtb]

\centering \fbox{ \begin{minipage}{0.96\textwidth} {\normalsize 
\noindent{}For any input $x$ of length $n$, the protocol $(V, \vec{P})$ works as follows.  

\begin{enumerate}%[leftmargin=15pt, noitemsep, nolistsep]

\item\label{step:answer-bits} $P_1$ sends $(c, c_1,\ldots,
c_{\alpha(n)}) \in \{0,1\}^{\alpha(n)+1}$ to $V$. $V$ outputs $c$ at the end of the protocol.  

\item\label{step:output-test} $V$ simulates $M$ on $x$ using the
bits $c_1,\ldots,c_{\alpha(n)}$ as answers to $\sf{NEXP}$ queries $\phi_1, \ldots,
\phi_{\alpha(n)}$  generated by $M$ respectively. If $M$ accepts and $c = 0$
or $M$ rejects and $c=1$, then the protocol ends and $R_1=-1, R_2 =R_3=0$.

\item\label{step: q-index} $V$ picks a random index $i'$ from $\{1, \ldots,
\alpha(n)\}$ and sends $(i', \phi_{i'})$ to $P_2$ and $P_3$.

\item\label{step: oracle} $V$ runs the 2-prover 3-round $O(1)$-gap ncRIP
    protocol for $\sf{NEXP}$~(\figref{nexp-ncrip-protocol}) with $P_2$ and $P_3$
on~$\phi_i$.~$P_2$ and~$P_3$ get payments~$R_2$ and~$R_3$ based on the
protocol.  Let $c_{i'}^*$ be the answer bit in the
$\sf{NEXP}$ protocol. If $c_{i'}^* \neq c_{i'}$, then $R_1=0$;
otherwise $R_1 =1$.
\end{enumerate}
}
\end{minipage}
}
\caption{An $O(\alpha(n))$-utility gap ncRIP protocol for $\sf{P^{NEXP[\alpha(n)]}}$.}
\figlabel{constpolygap}
\end{figure}

%There are several aspects 
We first argue the correctness of this protocol at a high-level and then present the formal proof. % and also illustrate the robustness of the dominant-SSE solution concept. 
Under any strategy of $P_1$, the resulting $\sf{NEXP}$ queries in the protocol in~\figref{constpolygap} 
are the roots of non-trivial subforms.  Which of these subforms
are reachable under a strategy profile $s$ is determined solely by the strategy of $P_1$.  However,
because weak dominance is imposed on all subforms in a bottom-up fashion, 
%regardless of reachability, 
$P_2$ and $P_3$ must
play their optimal strategy in these subforms regardless of their reachability---and therefore, they must play optimally for any strategy of $P_1$. (This is one example of why ruling out weakly-dominated strategies in subforms in the definition of dominant SSEs is crucial to arguing correctness.)  
From the correctness of the $\sf{NEXP}$ protocol in
\figref{nexp-ncrip-protocol}, we know that the optimal strategy of $P_2$ and $P_3$ is to compute the $\sf{NEXP}$ queries
correctly.  Given that the best response of $P_2$ and $P_3$ is to solve the $\sf{NEXP}$ queries correctly, 
and given that $V$ randomly verifies $1$ out of $\alpha(n)$ queries, $P_1$ must commit
to correct answer bits in the first round, or risk losing a $O(1/\alpha(n))$ amount from his expected payment.

%Notice that ruling out strategies that are weakly dominated in subforms is 
%% crucial in this protocol.
%%Recall that we require the dominance condition to hold at all (reachable or unreachable) subforms in a backwards-induction-compatible way.
%%This requirement is 
%essential in arguing correctness of this protocol.  Consider two strategy profiles $s$ and $s'$.
%Suppose in $s$, $P_1$ is lying about the answer bit of some $\sf{NEXP}$ gate; while the query $\phi_q$
%corresponding to a correct $\sf{NEXP}$ gate is unreachable under $s$.  Suppose in $s'$, $P_1$
%is truthful about all $\sf{NEXP}$ gates, resulting in $s\phi_q$ being reachable.  However, in $s'$
%both $P_2$ and $P_3$ are lying about the $\sf{NEXP}$ query $\phi_q$.  In this case, if dominance is
%not imposed in subforms in a bottom-up fashion, such strategies
%
% This is because,
%if $P_2$ and $P_3$ chose suboptimal strategies (giving incorrect answers) within the 
%subform formed by a valid $\sf{NEXP}$ query $q$, $P_1$ cannot 
%%otherwise, 
%$P_1$ cannot unilaterally deviate out of a bad strategy (e.g, one where he is
%lying on an $\sf{NEXP}$ gate) making a
%previously unreachable $\sf{NEXP}$ query reachable, to give wrong answers at 
%those $\sf{NEXP}$ queries. The dominant SSE solution concept rules out such weakly dominated strategies of $P_2$ and $P_3$.
%
If $P_1$ gives the correct answer bits in step 1, but $P_2$ or $P_3$ deviate within a subform corresponding
to an $\sf{NEXP}$ query $\phi_q$, then with probability $1/\alpha(n)$, $V$ simulates the protocol in \figref{constpolygap}
on $\phi_q$, in which case they lose a constant amount of their expected payment.
%This utility-gap analysis shows that even though $P_1$
%controls the $\sf{NEXP}$ query $V$ computes, the protocol is still able to 
%ensure that deviating provers in some reachable subform suffer an $O(1/\alpha(n))$
%loss in their overall expected payment. 
%\vspace{-2pt}
%\vspace{-10pt}

%\begin{remark}\label{maxNashremark}
%The ncRIP protocol in~\figref{constpolygap} is a good example to demonstrate the problem of empty threats
%that can occur under Nash and maximum Nash equilibrium. In this protocol, $P_2$ and $P_3$ obtain a higher expected payment if the $\sf{NEXP}$ instance they are queried is not satisfiable.  Thus, they can blackmail $P_1$---if he does not give answers that lead them to a satisfiable instance, they will lie about the answer to the $\sf{NEXP}$ instance in step~\ref{step: oracle}, which would result in $P_1$ being punished.  
%This threat is empty (as it is not a rational move for $P_2$ and $P_3$), but it leads to an unnatural Nash equilibrium. Similarly, there does not exist a maximum Nash equilibrium
%for this protocol. 
%\end{remark}

%\sam{I like this explanation.  I think we should have a forward reference to the actual proof if it exists}

\begin{lemma}\label{lem:notc-lower}
Any language~$L \in \sf{P^{NEXP[\alpha(n)]}}$ has a 3-prover 5-round ncRIP protocol
that has a utility gap of $6/(5\alpha(n))$.
\end{lemma}

\begin{proof} % {\lemref{notc-lower}} 
Consider any language $L \in \sf{P^{NEXP[\alpha(n)]}}$.  Let $M$ be a
polynomial-time Turing machine deciding $L$, with access to an oracle $O$ for
an $\sf{NEXP}$ language.  

The ncRIP protocol for $L$ is given in \figref{constpolygap}.   

Let $s_1, s_2, s_3$ denote the strategy used by $P_1$, $P_2$ and $P_3$ for the
protocol in \figref{constpolygap}, and $s=(s_1, s_2, s_3)$.  First, note that
regardless of $s_2$ and $s_3$, $P_1$'s best response at
step~\ref{step:answer-bits} is to send the bits $c, c_1, \ldots, c_\alpha(n)$ such
that the verification in step~\ref{step:output-test} goes through.  In
particular, if $s_1$ is such that the output of $M$ on input $x$, using
$c_1,\ldots, c_{\alpha(n)}$ as answers to $\sf{NEXP}$ queries $\phi_1, \ldots,
\phi_{\alpha(n)}$ is consistent with $c$, then $P_1$ gets $R_1\geq 0$.
Meanwhile, if the verification in step~\ref{step:output-test} fails then
$R=-1$.  Thus, under any SSE $s$, the answer bits $c_1$, \ldots, $c_{\alpha(n)}$
sent by $P_1$ must be consistent with the computation of $M$ on $x$ and the
final the answer bit $c$, regardless of $s_2$ and $s_3$. 

We now argue using backward induction.  Each random index $i'$ chosen by $V$ in
step~\ref{step: q-index} together with $\phi_{i'}$ starts a subform.  In
particular, since $P_2$ and $P_3$ both know $(i', \phi_{i'})$, all their
information sets starting from step~\ref{step: oracle} are completely disjoint
from information sets reached under a different index and $\sf{NEXP}$ query.
By \lemref{nexp-mip}, there exists a dominant SSE $s$ on each such
subform simulating an $\sf{NEXP}$ query, and under any dominant
SSE, $s_2$ and $s_3$ are such that $c_{i'}^*$ is the correct answer to the
$\sf{NEXP}$ query. 

Moving up the tree, the next subform is induced by $V$'s nature move at
step~\ref{step: q-index} assigning a probability to each subsequent 
subform. Since under any dominant SSE, the expected payments of $P_2$
and $P_3$ (conditioned on reaching these subforms) are maximized, the
overall expected payments under $V$'s nature move at step~\ref{step: q-index}
is also maximized.

We move up a further level in the tree to the root.  We show that $P_1$'s best
response at step~\ref{step:answer-bits} is to send the correct answer bits,
given that under any dominant SSE $s$:
\begin{itemize}[topsep=0pt,noitemsep] \item $P_2$ and $P_3$ answer each
$\sf{NEXP}$ query $\phi_{i'}$ determined by $s_1$ and index $i'$ correctly, and
\item the verification in step~\ref{step:output-test} goes through (i.e. $P$
does not set $R_1 = -1$) under $s_1$.  \end{itemize}

Suppose $s_1$ is such that there exists an $\sf{NEXP}$ query where $P_1$ lies.
Let $k$ be the first $\sf{NEXP}$ query index such that $c_k$ is not the correct
answer to query $\phi_k$, where $1 \le k \le \alpha(n)$. In particular, the
instance $\phi_k$ is evaluated correctly (by running $M$ on $x$ using the
correct answers to previous queries, $c_1, \ldots, c_{k-1}$) but the answer
$c_k$ is not evaluated correctly based on $\phi_k$.  Then with probability $1/\alpha(n)$, 
$V$ picks $k$ in step~\ref{step: q-index} and crosschecks the $c_k$
with $c_{i'}^*$, in which case the verification fails and $R_1 =0$.  Thus, $P_1$'s
expected payment is at most $1-1/\alpha(n)$.  If $P_1$ answers all $\sf{NEXP}$
queries correctly, since the verification in step~\ref{step:output-test} goes
through, $P_1$ gets $R_1=1$ with probability $1$. Thus, $c, c_1, \ldots,
c_{\alpha(n)}$ are correct under any dominant SSE $s$, and $c=0$ if and
only if $x \in L$. 

Now, we show that protocol $(V, \vec{P})$ has $O(\alpha(n))$ utility gap. Let $s^*$
be a dominant SSE of the game resulting from $(V, \vec{P})$.
Suppose $s'$ is such that the answer bit $c'$ under $s'$ is incorrect. We go ``bottom-up''
in the game tree and exhibit a subform $H_I$ (reachable under $s'$) such that 
some prover acting in that subform loses $O(1/\alpha(n))$ compared to the strategy 
where $s^*_{I}$ is played on $H_I$, keeping the rest of the strategy fixed.

First, consider all the $\sf{NEXP}$ queries at step~\ref{step: oracle} that
start subforms.  Suppose there exists a query $\phi_k$ committed under
$s_1'$, for $1 \le k \le \alpha(n)$, such that $c_k*$ is the wrong answer to
$\phi_k$. By~\lemref{nexp-const}, both $P_2$ and $P_3$ lose a constant amount
($5/6$ in particular) from their expected payment (conditioned on reaching this
subform) compared to the dominant SSE strategy profile $s_{\phi_k}^*$
which reports the correct answer to $\phi_k$.  Since $V$ chooses  $\phi_k$ with
probability $1/\alpha(n)$, $P_2$ and $P_3$ can gain $O(1/\alpha(n))$ in their overall
expected payment by deviating to strategy profile $s_{\phi_k}$, at the
subform corresponding to $(k,\phi_k)$ keeping $s_{-\phi_k}'$ fixed.
Specifically, \[\mu_i \left(x, r, (s_{-\phi_k}', s_{\phi_k}^*), (V,
\vec{P})\right) - \mu_i \left(x, r, (s_{-\phi_k}', s_{\phi_k}'), (V,
\vec{P})\right) > \frac{1}{\alpha(n)} \left(\frac{5}{6}\right),~~\mbox{for}~~i \in \{2,3\}.\]

Finally, suppose $P_2$ and $P_3$ answer all $\sf{NEXP}$ queries (reachable
under $s'$) correctly.  Then, $P_1$ loses at least $1/\alpha(n)$
at the subform at the root---the entire game.  Since the answer bit $c'$
under $s'$ is incorrect, either step~\ref{step:output-test} fails or $P_1$ lies on
some $\sf{NEXP}$ query. In the first case, $P_1$ gets $-1$ with probability $1$
compared to an expected payment of $1$ under $s^*$. In the second case, $P_1$ gets caught
in step~\ref{step: oracle} with probability $1/\alpha(n)$, and gets an expected
payment of at most $1-1/\alpha(n)$, losing at least $1/\alpha(n)$ compared to $s^*$.
%under $s'$, while his expected payment under $s^*$ is $1$.

Thus, the protocol $(V, \vec{P})$ is an ncRIP protocol for
$\sf{P^{NEXP[O(\alpha(n)])}}$ and has $O(\alpha(n))$ utility gap.   
\end{proof}

\paragraph*{Exponential utility gap} 
We show how to simulate a general MRIP protocol $(V, \vec{P})$ with $p(n)$ provers and $k(n)$ rounds for a language $L$
using a 2-prover 3-round ncRIP protocol  $(V', {P_1'},P_2')$ with exponential-utility gap. (The protocol $(V', {P_1'},P_2')$ is in~\figref{mrip-to-ncrip}.)
% in~\secref{omittedproofs}.)
%Let  $(V, \vec{P})$ be an MRIP protocol with $p(n)$ provers and $k(n)$ rounds for a language $L$.
%We simulate $(V,\vec{P}')$ using a $2$-prover $3$-round ncRIP protocol $(V', (P_1', P_2'))$ 
%with exponential gap in~\figref{mrip-to-ncrip}~in~\secref{omittedproofs}.

Essentially, $V'$ gives all the randomness of $V$ to $P_1'$ and asks
for the entire transcript and uses $P_2'$ to commit to a single prover's message, and cross-checks their answers. 
However, we don't want $P_1'$ who has access to all the randomness to
dictate what information sets of $P_2'$ are reachable. 
Because the ncRIP protocol only needs an exponential utility gap, $V'$ asks one prover
a totally random question (independent of $P_1'$), and with
exponentially small probability this random message is exactly the message $V'$ intended
to check. This protocol shows why exponential gap guarantees do not lead to meaningful protocols---a verifier that
asks random questions can still extract honest behavior from rational provers through the exponentially small changes in expected payments.

\begin{lemma}\lemlabel{mrip-ncrip}
Any MRIP protocol can be simulated using a $2$-prover $3$-round ncRIP protocol with $O(1/2^{n^k})$-utility gap,
for some constant $k$, where $n$ is the length of the input.
%$\sf{MRIP} \subseteq \sf{ncRIP}$.
\end{lemma}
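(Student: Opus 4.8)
The plan is to establish, for the protocol $(V',P_1',P_2')$ of \figref{mrip-to-ncrip}, the three ingredients required of an ncRIP protocol with the claimed gap: a dominant SSE exists under which the answer bit is correct, every dominant SSE is correct (as \defref{mripnc} demands), and the gap is $O(1/2^{n^{k}})$ (per \defref{rewardgap}). Recall the protocol: $V'$ draws the string $r$ that the \sf{MRIP} verifier $V$ would use, gives all of $r$ to $P_1'$, and asks $P_1'$ for the full transcript $\vec m$ of $(V,\vec P)$ on $r$ together with the answer bit $c$. From $(r,\vec m)$ the verifier recomputes $V$'s own messages (punishing $P_1'$ if they are inconsistent with $r$) and the rescaled \sf{MRIP} payment, which is positive exactly on $V$-accepting transcripts. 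Independently of $r$ and of $P_1'$'s report, $V'$ sends $P_2'$ a uniformly random string $q$ of the length of a single prover's view, and $P_2'$ returns the message that prover is claimed to send; when $q$ happens to equal the genuine view $v(r,\vec m)$ induced by the transcript, $V'$ cross-checks $P_2'$'s reply against $P_1'$'s corresponding claim and penalises a disagreement.

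First I would note that routing an \emph{independent} random question to $P_2'$ trivialises the information-set structure: since $r$ and $q$ are both uniform, every information set of $P_1'$ (indexed by $r$) and of $P_2'$ (indexed by $q$) is reached with positive probability under any profile, so beliefs are everywhere given by Bayes' rule and the unreachable-set clauses of SSE are vacuous. In particular $P_1'$, although it holds all of $r$, cannot steer which information sets of $P_2'$ are reachable. I would then build a candidate dominant SSE from the cooperative \sf{MRIP} profile $s^\ast$ that maximises the provers' total payment (under which the \sf{MRIP} answer is correct): $P_1'$ simulates all $p(n)$ provers under $s^\ast$ and reports the honest transcript for its $r$, and $P_2'$ commits to the same per-prover strategies. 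That this is an SSE is a one-shot-deviation check (\lemref{one-shot}): at each $q$, the only payment-relevant event for $P_2'$ is the exponentially rare cross-check, on which agreeing with the committed message is optimal; and $P_1'$ cannot profit by flipping $c$, because any $V$-accepting transcript for the wrong bit differs from the committed (correct) strategy in some prover message and hence is exposed to the cross-check penalty, while yielding no larger base payment than the accepted honest transcript. Maximality of total payment then follows from \lemref{maxsse}, which identifies the dominant SSE with the total-payment-maximising SSE; since the total payment equals the rescaled \sf{MRIP} total minus a non-negative expected penalty, it is maximised by an honest, agreeing profile reproducing the \sf{MRIP} optimum, and the same identity shows every dominant SSE to be honest and correct for all $r$.

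For the utility gap I would fix any profile $s'$ whose answer bit is wrong and exhibit, as in \defref{rewardgap}, a reachable subform and a deviating prover losing at least $1/2^{n^{k}}$ relative to replacing that subform's strategy by the dominant one. The only subforms are the single-node subforms at each question $q$ (where $P_2'$ acts) and the whole game at the root (where $P_1'$ acts). If $s'$ produces a $V$-rejecting transcript, $P_1'$ already loses a constant at the root. Otherwise the wrong bit forces a prover-message inconsistency with the correct strategy, which is detected precisely when the independent question coincides with the genuine view—an event of probability at least $2^{-|q|}$. Because a view spans $k(n)$ rounds of $\ell(n)$-bit messages, $|q|=O(k(n)\,\ell(n))$ is polynomial, hence at most $n^{k}$ for a suitable constant $k$, so composing this probability with the constant penalty swing yields a localised per-prover loss of at least $1/2^{n^{k}}$, giving the $O(1/2^{n^{k}})$-gap.

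The hard part is the correctness argument of the second step: showing that an \emph{exponentially rare} cross-check nonetheless pins down the answer bit with probability $1$ under the dominant SSE. The danger is $P_1'$'s clairvoyance—having seen all of $r$, it could tailor the transcript per run, and \sf{MRIP} soundness is only statistical, so for individual $r$ a wrong claim may still be $V$-accepting. The crux I must nail down is that every such answer-flipping deviation necessarily breaks consistency with $P_2'$'s committed strategy without raising the base payment, so that even the tiny $2^{-n^{k}}$ expected penalty tips a perfectly rational $P_1'$ toward the correct bit; this is simultaneously why no gap better than exponential is attainable. Reconciling the global total-payment accounting that \lemref{maxsse} needs with the single-prover, single-subform loss that \defref{rewardgap} requires—while keeping the realised penalty consistent with the $2^{-n^{k}}$ triggering probability—is the delicate bookkeeping; the remaining steps are routine.
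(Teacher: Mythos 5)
Your protocol skeleton matches the paper's (\figref{mrip-to-ncrip}), but your payment allocation breaks the proof, and the break is exactly at the point you flag as ``the crux.'' You route the rescaled MRIP base payment to $P_1'$---the prover who sees all of $r$---and then rest correctness on the claim that any answer-flipping, ``$V$-accepting'' transcript yields \emph{no larger base payment} than the honest one. That claim is false. MRIP payments are real-valued (they come from scoring rules and the like, not accept/reject, so your phrase ``positive exactly on $V$-accepting transcripts'' already mischaracterizes them), and the optimal MRIP profile maximizes payment only \emph{in expectation over $r$}. For an individual $r$, a transcript inconsistent with the committed profile---including one carrying the wrong answer bit---can earn a strictly larger payment by a constant (think of a scoring-rule protocol: a prover who knows which random check the verifier will run can overclaim and score higher on that run). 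Since your cross-check fires only with probability $2^{-\Theta(\mathrm{poly}(n))}$, a rational $P_1'$ tailors its transcript per $r$: the constant per-$r$ gain swamps the exponentially small expected penalty. Consequently the honest, agreeing profile is not even an SSE of your game, and \lemref{maxsse} cannot rescue you: per-$r$ tailoring is not realizable by any fixed MRIP prover profile, so the total-payment-maximizing SSE is the tailoring one, under which the answer bit is wrong with positive probability. The paper's protocol is engineered precisely to avoid this: $P_1'$ receives $R_1'=0$ in the non-penalty case, so consistency with $P_2'$ is its \emph{only} payment-relevant consideration and even the exponentially small penalty strictly dominates; the MRIP payment $R$ goes to $P_2'$, who answers a uniformly random view question without ever seeing $r$, so $P_2'$'s de facto commitment to a full prover profile is made ``behind the veil,'' and maximizing its expected payment coincides with maximizing the MRIP expected payment, which by MRIP correctness forces the correct bit.

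Two smaller points. Your claim that each question to $P_2'$ roots a ``single-node subform'' is incorrect: the follow-up information sets of $P_1'$ (indexed by $r$) contain histories lying outside those sets, so they violate the subform condition; the paper notes the game has \emph{no} non-trivial subforms, and hence the loss required by \defref{rewardgap} must be exhibited at the root in a prover's overall expected payment---which is also the structural reason only an exponential gap is achievable here. Finally, in the paper the round-one messages (which contain the answer bit) are sent \emph{before} $P_1'$ receives $r$ and are cross-checked with polynomial, not exponential, probability (their prefix is empty); in your variant, where $c$ is reported after $r$, correctness of the bit leans entirely on the broken base-payment comparison above.
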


\begin{proof} %{\lemref{mrip-ncrip}} 
%and  what a verifier in a more meaningful protocol should do.

%Let  $(V, \vec{P})$ be an MRIP protocol with $p(n)$ provers and $k(n)$ rounds for a language $L$.
Without loss of generality, let each message in the protocol be of length $\ell(n)$ for any input of length $n$,
where $\ell(n)$ is a polynomial in $n$.
We shift and rescale the payment function of~$V$, so that 
the payment is always in $[0, 1]$, and 
the expected payment is strictly greater than $0$ under the provers' best strategy profile.

We simulate $(V,\vec{P}')$ using an ncRIP protocol $(V', (P_1', P_2'))$, given in~\figref{mrip-to-ncrip}.

\begin{figure}[tbhp] \centering
\fbox{
\begin{minipage}{0.96\textwidth}
{\normalsize

\noindent
Given an input $x$ of length $n$, and an MRIP protocol $(V, \vec{P})$, the ncRIP protocol $(V', \vec{P}')$ is:

\begin{enumerate}[leftmargin=15pt, nolistsep, noitemsep]
\item\label{firstround} 
$P_1'$ sends the round 1 messages  $m_{11}, \dots, m_{p(n)1}$ of $(V, \vec{P})$ to $V'$.
$V'$ outputs $c$, the first bit of $m_{11}$, at the end of the protocol.

\item\label{newrandomness} $V'$ selects a random prover index $i \in \{1,\ldots, p(n)\}$ and a random round $j\in \{1, \ldots, k(n)\}$.
Then, $V'$ generates a random string $\tilde{m}_{ij}$ of length $(j-1)\ell(n)$.

\item\label{p2-response} $V'$ sends $(i, j, \tilde{m}_{ij})$ to $P_2'$. $P_2'$ simulates $P_i$ on round $j$, and
sends the message $m'_{ij}$ to $V'$.

\item\label{originalrandomness}
$V'$ generates all the randomness $r$ used by $V$ and sends it to $P_1'$.

\item\label{p1-response}
$P_1'$ uses $r$ to simulate the protocol $(V, \vec{P})$, and
sends the resulting transcript $\vec{m}$ to $V'$.

\item If~$\tilde{m}_{ij} \neq (m_{i1},\ldots, m_{i(j-1)})$, where~$m_{ij}$ denotes prover~$P_i$'s message in round~$j$ according to~$\vec{m}$
sent by $P_1$',
then the protocol ends and~$R_1'=R_2'=0$.

\item\label{consistency} Otherwise, if $m_{ij} \neq m'_{ij}$, then $R_1' =R_2'=-1$.

\item\label{correctness} Else, $V'$ computes the payment $R$ in $(V, \vec{P})$ using $x$, $r$ and $\vec{m}$,
and sets $R_1' =0$, $R_2'=R$.
\end{enumerate}
}
\end{minipage}
}
\caption{Simulating any MRIP using an ncRIP protocol with exponential utility gap.}\figlabel{mrip-to-ncrip} 
\end{figure}

Let $s_1'$ and $s_2'$ denote the strategy of the provers $P_1'$ and $P_2'$
respectively and $s'=(s_1',s_2')$. Since $P_2'$ is queried only once and about
a single message in Step~\ref{p2-response}, any strategy $s_2'$ of $P_2'$  de
facto commits to a strategy profile for the provers in $(V, \vec{P})$.

We analyze the game tree of the protocol $(V', \vec{P}')$ bottom-up. 

The last move is by $P_1'$ sending the entire transcript $\vec{m}$ at
step~\ref{p1-response}.  Any information set $I_1'$ of $P_1'$ is characterized
by the randomness $r$ received by $P_1'$~in~step~\ref{originalrandomness} and
all information sets are reachable under any $s'$. The decision nodes in $I_2'$
correspond to different strings $\tilde{m}_{ij}$ that $P_2'$ could have been
asked in~step~\ref{newrandomness}.  Given $s_2'$, the best response of $P_1'$
at any information set $I_1'$, for any beliefs at $I_1'$, is to match the
transcript committed by $P_2'$ and make the verification
in~step~\ref{consistency} go through.  Suppose there exists a prover index $i$
and round $j$ such that the message $m_{ij}$ in $\vec{m}$ that is inconsistent
with the corresponding message $m_{ij}'$ committed under $s_2'$.  With
probability $\frac{1}{2^{(j-1)\ell(n)}}$, the random string $\tilde{m}_{ij}$
generated by $V'$ in Step~\ref{newrandomness} is equal to $(m_{i1},\dots,
m_{i(j-1)})$, otherwise the protocol ends with $R_1'=0$.  With probability at
least $\frac{1}{p(n)k(n)}$, $V'$ chooses $(i,j)$ in~step~\ref{newrandomness},
and queries~$P_2'$ for $m_{ij}'$ and $R_1'=-1$. If $(i,j)$ is not chosen then
$R_1'=0$. Thus, $P_1'$ expected payment at $I_1'$ is at most 
\[\sum_{i\leq p(n), 1\leq j\leq k(n)} \frac{1}{2^{(j-1)\ell(n)}} \cdot \frac{1}{p(n)k(n)}
\cdot \left( \mathbb{I}_{m_{ij}\neq m_{ij}'}\cdot  (-1) +  \mathbb{I}_{m_{ij}=
m_{ij}'} \cdot  0\right) < 0.\] 
On the other hand, matching $s_2'$ on all
messages leads to an expected payment of $0$ at $I_1'$ for $P_1'$.

Given that $P_1'$ best response is to make the  verifier
in~step~\ref{consistency} go through for every randomness $r$, we analyze
$P_2'$ move at step~\ref{p2-response}. Any information set $I_2'$ of $P_2'$ is
characterized by the random string $\tilde{m}_{ij}$ received by
$P_2'$~in~step~\ref{newrandomness} and all information sets are reachable under
any $s'$.  The decision nodes in $I_1'$ correspond to different random strings
$r$ that $P_1'$ could have been asked in~step~\ref{newrandomness}.  The best
response of $P_2'$ at any information set $I_1'$, for any beliefs at $I_1'$, is
to commit to the correct strategy profile $s$ of the provers $\vec{P}$. Suppose
$P_2'$ commits to a strategy profile $s'$ such that the answer bit under $s'$
is wrong. With probability $\frac{1}{2^{(j-1)\ell(n)}}$, the random string
$\tilde{m}_{ij}$ generated by $V'$ in Step~\ref{newrandomness} matches
$(m_{i1},\dots, m_{i(j-1)})$, otherwise the protocol ends with $R_2'=0$.  If it
matches, then $P_2'$ expected payment is determined by the expected payment
that $\tilde s$ gets in $(V, \vec{P})$ given $x$ and randomness $r$, which is
strictly less than the expected payment under the strategy profile $s$ which
commits to the correct answer bit (by correctness of the original MRIP
protocol). That is,
\[\sum_{1\leq j\leq k(n)} \frac{1}{k(n)} \cdot \frac{1}{2^{(j-1)\ell(n)}} \cdot
u_{(V, \vec{P})}(x,\tilde{s}) <\sum_{1\leq j\leq k(n)} \frac{1}{k(n)} \cdot
\frac{1}{2^{(j-1)\ell(n)}} \cdot u_{(V, \vec{P})}(x,{s}).  \]
Thus, given that $s_1'$ matches $s_2'$ for every randomness $r$, the best
response by $P_2'$ is to commit to a strategy profile $s_2'=s$ that maximizes
the total expected payment of the original protocol $(V,\vec{P})$ and thus has
the correct answer bit.

There are no non-trivial subform in the game. Any weakly-dominant SSE is a
dominant SSE, under which both $P_1'$ and $P_2'$ maximize their
expected payments---$P_1'$ matches $P_2'$ on all messages and $P_2'$ commits to
the correct strategy profile $s$. Thus, the protocol $(V, \vec{P})$ is correct.
\end{proof}

\section{Upper Bounds: ncRIP Protocols with Utility Gap}\seclabel{upper}

In this section, we prove matching upper bounds on the classes of ncRIP protocols with constant
and polynomial utility gaps.  In particular, we show that any language in $\cncRIP$ (or $\pncRIP$)
%We focus on the upper bound for $\cncRIP$
%and $\pncRIP$,
can be decided by a polynomial-time Turing machine with a constant (resp. polynomial) number of queries 
to an $\sf{NEXP}$ oracle.

To simulate an ncRIP protocol, we need to find a strategy profile ``close enough'' to the dominant SSE
so that the answer bit is still correct, i.e. a strategy profile that satisfies the utility-gap guarantee. 
We formalize this restatement of~\defref{rewardgap} below. %, which reflects this intuition.
\begin{observation}\label{gaprestate}
Given input $x$ and an ncRIP protocol $(V, \vec{P})$ with $\alpha(n)$-utility gap,
let $s$ be a strategy profile such that for all reachable subforms $H_I$ %(reachable under $s$),
and all provers $P_j$ acting in $H_I$,
\[u_j (x, r,(V, \vec{P}), (s_{-I}, s_{I}^*)) -  u_j (x, r,(V,\vec{P}), (s_{-I}, s_{I})) < \frac{1}{\alpha(n)},\]
where $s^*$ is a dominant SSE. Then, the answer bit $c$ under $s$
must be correct.
\end{observation}

There are several challenges involved in finding a strategy profile satisfying Observation~\ref{gaprestate}.

First, the size of the game tree of any ncRIP protocol---small gap notwithstanding---can be exponential in $n$.
Even if the polynomial-time machine considers a single strategy profile $s$ at a time,
since $V$ can flip polynomially many coins, the part of the tree
``in play''---the number of decision nodes reached with positive probability under $s$---can be exponential in $n$.

The second (and related) challenge is that of verifying whether a strategy profile is a dominant SSE. 
%Note that 
While the $\sf{NEXP}$ oracle can guess and verify an SSE, it cannot 
directly help with dominant SSEs. The polynomial-time machine must check using backward
induction if an SSE
is dominant on all its reachable subforms, which can again be exponential in $n$.

Finally, the polynomial-time machine needs to search through
the exponentially large strategy-profile space in an efficient way to find one which leads to the correct answer. 
%One way to do this is to go through possible expected payment profiles
%that the provers can achieve and query the oracle for each of them. However, the total number of payment profile of
%polynomially many provers can be exponential in $n$, %and the Turing machine has limited queries,
%and thus we need a way to search through them efficiently.

%We now prove a fundamental lemma about ncRIP protocols with utility gap that lets us circumvent the first two challenges mentioned above.
%lemmas that lets us get around the above challenges. We first start with the problem of $V$'s randomness
%causing exponentially many nodes to be played under any strategy. % in the game tree

In the remainder of the section we address these challenges.  In \lemref{pruning} we show that we can prune the game tree, resolving the first two challenges.  Then in Lemmas \ref{intervalsse} and \ref{lem:maxsse}, we show how to efficiently search through the strategy-profile space.

\paragraph*{Pruning Nature moves in ncRIP protocols}
%In this section 
We now give our main technical lemma for the upper bound, which shows that we can limit ourselves to examining protocols with bounded game trees without loss of generality.

%The problem is that %in full generality, 
Recall that a verifier's coin flips in an ncRIP protocol represent {Nature moves} in the resulting game.
The problem is that a polynomial-time verifier %can flip polynomially many coins,
can result in Nature moves that impose nonzero probabilities over exponentially many outcomes.
%can cause the game tree under play to be exponential in size.

We prune the Nature moves of a verifier so that a polynomial-time
Turing machine simulating an $\alpha(n)$-utility-gap protocol can traverse the game tree
reachable under a given $s$.  This pruning operation takes exponential time (linear in the size
of the game tree), and can be performed by the $\sf{NEXP}$ oracle.

%(b) check recursive maximality of $s$ over all reachable subgames. The pruning
%should not affect the solution concept and utility gap guarantees of the original protocol.
%, that is, a
%dominant SSE of the old game must still be a dominant of the
%new game and must preserve utility gap.

\begin{lemma}[{\bf Pruning Lemma}]\lemlabel{pruning}
Let $L \in \ancRIP$ and let $(V, \vec{P})$ be an ncRIP protocol
for $L$ with $\alpha(n)$ utility gap and $p(n)$ provers.
Given an input $x$ and a strategy $s$,
the protocol $(V, \vec{P})$ can be transformed in exponential time to a
new protocol $(V', \vec{P})$,
where
\begin{itemize}%[noitemsep,nolistsep,leftmargin=*]
    \item the probability distribution on the outcomes imposed by the Nature moves of $V'$ for input $x$ has $O(\alpha(n))$ support,
\item if $s$ is a dominant SSE of $(V, \vec{P})$, then $s$ induces a dominant SSE in $(V', \vec{P})$,
\item $\lvert {u}_j(x, s, (V, \vec P)) - {u}_j(x, s, (V', \vec P)) \rvert < {1}/({4 \alpha(n))}$ for all $j \in \{1, \ldots, p(n)\}$, and
\item the utility gap guarantee is preserved, that is, if the answer bit under $s$ is wrong, then there exists a subform $H_{I}$ in the
game $(V', \vec{P})$ (reachable under $s$) and a prover $P_j$
acting at $H_I$, such that $P_j$ loses a $1/(2\alpha(n))$
amount in his expected payment compared to a strategy profile where $s_{I}$
(induced by $s$ on $H_I$) is replaced by $s_{I}^*$ (the dominant SSE on $H_I$),
keeping the strategy profile outside $H_I$, $s_{-I}$, fixed.
\end{itemize}
\end{lemma}

%\begin{proof}
We prove~\lemref{pruning} in several parts. First, given an input $x$ and a strategy $s$ of the provers, we show how to transform
any verifier $V$ that imposes a probability distribution over outcomes with exponential support
into a verifier $V'$ that imposes a probability distribution with $O(\alpha(n))$ support.

Let $(V, \vec{P})$ use $p(n)$ provers and let the running time of $V$ be $n^{k}$ for some constant $k$.
%For any two payments $R_j$ and $R_j'$ generated by $V$ for some prover $P_j$ on the
%same input and different random coins, then
%we have
%\[ R_j\neq R_j' \Rightarrow |R_j-R_j'|\geq \frac{1}{2^{n^{k}}}.\] Thus,
There can be at most $2^{n^{k}}$ different payments that $V$ can generate for a
particular prover given input $x$. % using its randomness.
Given $x$ and $s$, fix a prover index $j \in \{1,\ldots, p(n)\}$. Let
${R}_1, {R}_2, \ldots, {R}_m$ be the payments generated by $V$ on $s$ for $P_j$. %, where $h \leq 2^{n^{k'}}$.
Let $V$'s randomness assign
probability distribution $\mu = (p_1, p_2, \ldots, p_m)$ to ${R}_1, {R}_2, \ldots,
{R}_m$ respectively.  Then, the expected payment of $P_j$ under $s$ is ${u}_j(x,s, (V, \vec
P)) = \sum_{i=1}^m p_i {R}_i$.

%Let $u_{\ell}(s,x)$
%denote the expected payment (or utility) of $P_{\ell}$ under strategy profile
%$s$ and input $x$.
%For simplicity, let $\alpha(n)$ denote $\alpha(n)$.
Recall that ${u}_j(x,s, (V, \vec P)) \in [-1,1]$ for all $1 \le j \le p(n)$. For each prover $P_{j}$, divide the interval
$[-1,1]$ into $4 \alpha(n)$ intervals, each of length $1/(2\alpha(n))$.  In
other words, prover $P_{j}$'s $i$th interval is
$[i/2\alpha(n),(i+1)/2\alpha(n))$,\footnote{To
include $1$ as a possible payment, interval $2\alpha(n)-1$ should be closed on
both sides; we ignore this for simplicity.}
 for each $i \in \{-2\alpha(n), \ldots,
2\alpha(n)-1\}$.

We round the possible payments for~$P_j$ to a representative of the their corresponding interval.
Specifically, we map each payment~$R_i$ to~$r_j$ as described in Equation~\ref{paymentmap}.
\begin{figure}[phtb]
\vspace*{-2pt}
\begin{minipage}{0.47\linewidth}
{\normalsize
\begin{equation}\label{paymentmap}
r_j = \left\{
\begin{array}{ll}
\frac{4\ell+1}{4\alpha(n)} & \mbox{if } R_i \in \left[ \frac{\ell}{2\alpha(n)}, \frac{2\ell+1}{4\alpha(n)}\right)\\ \\
\frac{4\ell+3}{4\alpha(n)} & \mbox{if } R_i \in \left[\frac{2\ell+1}{4\alpha(n)}, \frac{\ell+1}{2\alpha(n)} \right)\\
\end{array}
\right.
\end{equation}
}
\end{minipage}
\hspace*{0.2cm}
\begin{minipage}{0.47\linewidth}
{\normalsize

\begin{equation}\label{probmap}
p_i' = \left\{ \begin{array}{ll} \sum_{k \in T_j} p_k & \mbox{if } i = f(S(i))\\
		  0 & \mbox{otherwise }
	\end{array}
\right.
\end{equation}
}
\end{minipage}\figlabel{simple-nexp-ncmrip}
\vspace{-1em}
\end{figure}
There are potentially exponentially many different payments ${R}_i$, and only
polynomially many different payments ${r}_j$, so several ${R}_i$ must map to the same
${r}_j$.  Let $T_j = \{i : R_i \mbox{~maps to~} r_j\}$. Let $\mathcal{T} = \cup_j \{T_j\}$.
Thus the total number of distinct $r_j$ is $8 \alpha(n)$, so $|\mathcal{T}| = O(\alpha(n))$.
Let $S: \{1,\ldots,m\}\rightarrow {\mathcal T}$ such that $S(i) = T_j$ if and only if $i \in T_j$.
%be a function that maps $i$ to the set $T_j$ it belongs to.

For each $T_j \in \mathcal T$, let $f(T_j)$ denote a unique index in
the set $T_j$. Without loss of generality, let $f(T_j)$ be the lowest index in $T_j$.
%\footnote{How the representative index of a set $T_j$ is chosen does not matter
%as long as it is well-defined.}
%
We define a new probability distribution $\mu'= (p_1', \ldots, p_h')$ over the payments
$R_1, \ldots, R_h$ respectively, given by~Equation~\ref{probmap}.
In particular, for every $T_j \in \mathcal{T}$, assign $R_{f(T_j)}$ probability $\sum_{k \in T_j}p_k$ and
for every other index $\ell \in T_j$, $\ell \neq f(T_j)$, assign $R_\ell$ probability $0$.

Define $V'$ as a polynomial-time verifier that simulates all deterministic computation
of~$V$. For a fixed input~$x$, $V'$ imposes a probability distribution $\mu'$
with $O(\alpha(n))$ support for any probability distribution $\mu$ imposed by $V$.
For other inputs, $V'$ simulates $V$ without any modification.

Note that given input $x$, a strategy
profile $s$ and the protocol $(V, \vec{P})$, transforming the distribution $\mu$ to $\mu'$
takes time linear in the size of the game tree, and thus exponential in $n$. (This
means that an $\sf{NEXP}$ oracle, given $x$, can guess a particular $s$ and perform the transformation.)

The remainder of the proof of \lemref{pruning} consists of the following three claims.  
%We argue their correctness at a high-level and defer formal proofs to the full version.

First, we show that if the strategy profile $s$ is a dominant SSE of $(V, \vec{P})$,
then $s$ restricted to the pruned game tree of $(V', \vec{P})$ imposes
a dominant SSE on $(V', \vec{P})$ as well. %That is,, we prove the following claim.

%To complete the proof, we need to show the following holds:
%\begin{enumerate}
%\item\label{item: sse} $(V', \vec{P})$ is  a valid ncRIP protocol for $L$, and
%\item\label{item: gap} $(V', \vec{P})$ has $2\alpha(n)$ utility gap.
%\end{enumerate}

%To prove property~\ref{item: sse}, we show that the following claim holds.

%\sam{If we're going to prove the items above in a separate claim sometimes, we may want to consider splitting the lemma up into several claims entirely.  Since this proof is likely going to be in an appendix (or something like it), one easy solution may be to dedicate the section (rather than an environment) to the proof, in which case these can be kept as-is.}
%
\begin{cclaim}\label{validprotocol}
Any dominant SSE $s$ of the game formed by $(V, \vec{P})$ induces
a dominant SSE in the game formed by $(V', \vec{P})$.
\end{cclaim}

\begin{proof}
By contradiction, suppose $s$ is not an SSE of $(V', \vec{P})$. Then there exists an information set
$I= \{h_1, \ldots, h_m\}$, such that, conditioned on reaching $I$, the prover acting at $I$ can improve his
expected payment by deviating (given his belief $u_I'$ at $I$ if $I$ is reachable under $s$
and for any belief he may hold at $I$ if $I$ is unreachable under $s$).

We split into two cases: $I$ is either reachable or unreachable under $s$.

By construction, if $I$ is reachable under $s$ in $(V',\vec{P})$, then $I$ must also be reachable under
$s$ in $(V, \vec{P})$. Let $\mu_I' = (p_1', \ldots, p_m')$, where $p_i'$
is the probability assigned to $h_i$ and the support of $\mu_I'$ is
$O(\alpha(n))$. Let $R_1, \ldots, R_m$ be the payments that the player acting on $I$ gets under $s$
conditioned on reaching $h_1, \ldots, h_m$ respectively.
Similarly, let $R_1', \ldots, R_m'$ be the payments conditioned
on reaching $h_1, \ldots, h_m$ respectively under the strategy to which the player at
$I$ deviates from $s$. Then, $ \sum_{i = 1}^m p_i' R_i' >
\sum_{i =1}^m p_i' R_i$.  Let $\mu_I = (p_1, \ldots,
p_{m})$ be the beliefs on $I$ under $s$ in $(V, \vec{P})$. We use the
relationship between the distributions $\mu_I'$ and $\mu_I$, to show that such
a deviation in $(V', \vec{P})$ would imply a deviation in $(V, \vec{P})$. In
particular, mapping $\mu_I'$ back to $\mu_I$, using~Equation~\ref{probmap} we get:
\begin{align}
%\sum_{i = 1}^m p_i' R_i' &> \sum_{i =1}^m p_i' R_i \notag\\
\sum_{i = 1}^m  \bigg( \mathbb{I}_{i =f(S(i))}  \cdot \sum_{k \in S(i)} p_k \bigg)R_i'
&> \sum_{i =1}^m \bigg(  \mathbb{I}_{i =f(S(i))}  \cdot \sum_{k \in S(i)} p_k \bigg) R_i \notag \\%\label{ineq:mapping}\\
\sum_{i = 1}^m \bigg( \mathbb{I}_{i =f(S(i))}  \cdot \sum_{k \in S(i)} p_k \bigg) \cdot \min_{k \in S(i)} R_k'
&> \sum_{i = 1}^m \bigg( \mathbb{I}_{i =f(S(i))} \cdot \sum_{k \in S(i)} p_k \bigg) \cdot \max_{k \in S(i)} R_k\label{ineq:trans} \\
\sum_{i = 1}^m \bigg( \mathbb{I}_{i =f(S(i))}  \cdot \sum_{k \in S(i)} p_k  R_k'\bigg)
&> \sum_{i = 1}^m \bigg( \mathbb{I}_{i =f(S(i))}  \cdot \sum_{k \in S(i)} p_k R_k\bigg)\notag\\
\sum_{i=1}^{m} p_i R_i' &>\sum_{i=1}^{m} p_i R_i\label{ineq:final}
\end{align}
%Inequality~\ref{ineq:mapping} follows from reconstructing the mapping \sam{I think you're indexing $i$ two different ways here; we should talk about this.} from $\mu_I$ to $\mu_I'$ such that $p_i' = \sum_{k \in I_j} p_k$ if
%$i=f(I_j)$ and $p_i' = 0$ otherwise.
Inequality~\ref{ineq:trans} holds because $R_{f(S(i))}' > R_{f(S(i))}$,
and so the two payments lie in different intervals in the mapping (Equation~\ref{paymentmap}). Thus the minimum payment in the interval
of $R_{f(S(i))}'$ will be greater than the maximum payment in the interval of $R_{f(S(i))}$.
Finally, Inequality~\ref{ineq:final} contradicts the fact that $s$ was an SSE in $(V, \vec{P})$, achieving a contradiction for the case of reachable information sets.

For unreachable information sets the argument is easy. If $I$ is unreachable under $s$ in $(V', \vec{P})$, then $I$ must be unreachable under $s$ in
$(V, \vec{P})$.
If the action of prover acting at $I$ is not his best response in $(V', \vec{P})$ for some history $h \in I$ then,
it contradicts the fact that $s$ is an SSE of $(V, \vec{P})$.
% can deviate in $(V', \vec{P})$ to improve his expected
%, then the same deviation
%\shikha{Proof is similar (easier, perhaps) to the case where $I$ is unreachable.}

Now, suppose $s$ is not a dominant SSE of $(V', \vec{P})$.
Then there exists a subgame $H_I$ of height $k$ such that $s$ is
dominant on all subgames following $H_I$ of height $<k$ but not
weakly-dominant at $H_I$ (among SSE's that are dominant at all subforms
following $H_I$). Let $s^*$ be dominant on $H_I$, then
the expected payment of at least one prover $P_j$ is better under $s^*$, while everyone else
does just as well (given the beliefs at $I$ derived using Bayes' rule if $I$ is reachable
under $s$ or given any
beliefs if $I$ is unreachable under $s$). Writing out the expression of expected payment
of $P_j$ conditioned on reaching $H_I$ and ``unfolding'' the probability distribution back
to the original game, we get a contradiction that $s$ could not have been
a dominant SSE of the original game, as the same strategy $s^*$ would give $P_j$ a better
expected payment at $H_I$ while doing as well for other provers.
The proof is similar to the above and we omit the details.
\end{proof}

\iffalse
First, we prove that $s$ is an SSE of $(V', \vec{P})$.  Suppose by contradiction that $s$ is not a SSE 
of $(V', \vec{P})$.  Then there exists an information set
$I$, such that, conditioned on reaching $I$, the prover acting at $I$ can improve his
expected payment by deviating (given his belief $u_I'$ at $I$ if $I$ is reachable under $s$
and for any belief he may hold at $I$ if $I$ is unreachable under $s$).  Writing out
their expected payments, accounting for the probabilistic transformation between $V$ and $V'$, in both cases leads
to a contradiction to the assumption that $s$ was an SSE in $(V, \vec{P})$.  We then argue
that a similar contradiction holds for proving that $s$ is a dominant SSE of $(V', \vec{P})$.
\fi
The following claim states that for a given $s$, the expected payments of the provers
under $(V, \vec{P})$ and under $(V', \vec{P})$ are not too far off.  This claim is one of the bullet points in \lemref{pruning}, and will be used to prove Claim~\ref{gapsame}.
%In particular, we prove the following claim.

\begin{cclaim} \label{paydiff}
%For a given $x$, $s$, and prover index $j \in \{1, \ldots, p(n)\}$,
For all $j \in \{1, \ldots, p(n)\}$, $\lvert {u}_j(x, s, (V, \vec P)) - {u}_j(x,  s, (V', \vec P)) \rvert < {1}/{(4 \alpha(n)})$.
\end{cclaim}

\begin{proof}  %{Claim~\ref{paydiff}}
%
%Under $(V', \vec P)$, the expected payment of any prover $P_j$ is
%\begin{align*}
%${u}_j(s,x, (V', \vec P)) = \sum_{I \in \mathcal I} \bigg[ \sum_{i \in I: i
%\neq f(I)} 0 \cdot R_i  + \bigg( p_i' {R}_i \bigg)_{i = f(I)} \bigg]
%= \sum_{I \in \mathcal I}   \bigg( \sum_{i \in I} p_i\bigg) {R}_{f(I)}$.
%%\end{align*}
%
Given input $x$ and strategy profile $s$, fix a~prover~$P_j$.
Let $V$ generate payments ${R}_1, {R}_2, \ldots, {R}_m$ under $s$ for $P_j$,
and assign the probability distribution $\mu = (p_1, p_2, \ldots, p_m)$ on ${R}_1, {R}_2, \ldots,
{R}_m$ respectively.
%Then, the expected payment of $P_j$ under $s$, ${u}_j(x,s, (V, \vec
%P)) = \sum_{i=1}^m p_i {R}_i$.
Using~Equations~(\ref{paymentmap}) and~(\ref{probmap}) we compare $P_j$'s expected payment:
%then the difference in $P_j$'s expected payment under the two protocols.
%${u}_j(s,x, (V', \vec P))$ of $P_j$.
\begin{align*}
&\lvert {u}_j(s,x, (V, \vec P)) - {u}_j(s,x, (V', \vec P)) \rvert
= \bigg| \sum_{i=1}^m p_i {R}_i - \sum_{T_j \in \mathcal T}   \bigg( \sum_{k \in T_j} p_k\bigg) {R}_{f(T_j)}  \bigg| \\
%=&\bigg| \sum_{I \in \mathcal I} \bigg( \sum_{i \in I} p_i\bigg) {R}_{f(I)}   - \sum_{I \in \mathcal I} \bigg(\sum_{i \in I} p_i {R}_i \bigg)\bigg| \notag\\
=&\sum_{T_j \in \mathcal T} \sum_{k \in T_j} p_k \bigg( |  {R}_{f(T_j)}   -  {R}_i |\bigg) % \label{eq:diff}
< \sum_{T_j \in \mathcal T}    \sum_{k \in T_j} p_i \bigg( \frac{1}{4 \alpha(n)}\bigg) = \bigg(\sum_{i=1}^m p_i \bigg)  \frac{1}{4 \alpha(n)} =\frac{1}{4 \alpha(n)}\qedhere %\label{ineq:1}
\end{align*}
%
%Inequality~\ref{eq:diff} follows from our mapping, in particular, for any two payments $R_j$ and $R_j'$
%that map to the same $r_j$, that is, $j, j' \in I(j)$, then either $R_j, R_j'
%\in \bigg[ \frac{m}{2\alpha(n)}, \frac{2m+1}{4\alpha(n)}\bigg)$ and $r_j =
%\frac{2m+1}{4\alpha(n)}$, or $R_j, R_j' \in \bigg[ \frac{2m+1}{4\alpha(n)},
%\frac{m+1}{2\alpha(n)}\bigg)$ and $r_j = \frac{m+1}{2\alpha(n)}$.
%In both cases, $|R_j - R_j'| < \frac{1}{4 \alpha(n)}$.

\end{proof}

To complete the proof of~\lemref{pruning}, 
%With the above, 
we show that $(V', \vec{P})$ preserves
utility gap guarantees.
%In particular, we prove the last

\begin{cclaim}\label{gapsame}
Given input $x$, if the answer bit under $s$ is wrong,
then there exists a subform $H_I$ reachable under $s$ in $(V', \vec{P})$ and $P_j$
acting at $H_I$, such that $P_j$'s expected payment under~$s$ is $\frac{1}{2\alpha(n)}$
less than his expected payment under~$(s_{-I},s_I^*)$, where
$s_{I}^*$ is a dominant SSE on $H_I$.
\end{cclaim}

\begin{proof} %{Claim~\ref{gapsame}}
Consider a strategy profile $s^*$ that is a dominant SSE in the game
tree of $(V, \vec{P})$. Since $s$ gives the wrong answer bit, from the $\alpha(n)$-utility gap
guarantee of $(V, \vec{P})$ and~\defref{rewardgap}, there exists a subform $H_I$ reachable under $s$,
such that a prover $P_j$ acting in $H_I$ loses $1/\alpha(n)$
in his expected payment under $s$ compared to the strategy profile~$(s_{-I},s_I^*)$. That is,
%where $s_{I}$
%is replaced with $s_{I}^*$, keeping $s_{-I}$ fixed. That is,

\begin{equation}\label{claimineq:1}
 {u}_j(x, (s_{-I}, s_{I}^*), (V, \vec P)) - {u}_j(x, (s_{-I}, s_{I}), (V, \vec P))  > \frac{1}{\alpha(n)}.
\end{equation}

Using Claim~\ref{validprotocol}, $s^*$ also induces a dominant SSE in the game tree of $(V', \vec{P})$.
And since $H_I$ is reachable under $s$ in $(V, \vec{P})$, it is reachable under $s$ in $(V', \vec{P})$ as well.
%Thus,
%To prove the claim,
We show that:
%we must that the following holds: %for the protocol $(V', \vec{P})$ with slightly smaller utility gap. That is,
\begin{equation}\label{claimineq:4}
 {u}_j(x, (s_{-I}, s_{I}^*), (V', \vec P)) - {u}_j(x, (s_{-I}, s_{I}), (V', \vec P))  > \frac{1}{2\alpha(n)}.
\end{equation}

Using~Claim~\ref{paydiff}, prover $P_j$'s expected payments in the two protocols under~$s$ and $s^*$ follow:
\begin{align}
\lvert {u}_j(x, (s_{-I}, s_{I}^*), (V, \vec P)) - {u}_j(x, (s_{-I}, s_{I}^*), (V', \vec P)) \rvert &< \frac{1}{4 \alpha(n)}\label{claimineq:2}\\
\lvert {u}_j(x, (s_{-I}, s_{I}), (V, \vec P)) - {u}_j(x, (s_{-I}, s_{I}), (V', \vec P)) \rvert &< \frac{1}{4 \alpha(n)}\label{claimineq:3}
\end{align}

There are four cases depending on the sign of the left hand side of
Inequalities~(\ref{claimineq:2}) and~(\ref{claimineq:3}).
We show that~Claim~\ref{gapsame} holds for one of the cases and omit the details of the others, which are similar.

Suppose the left hand side of both inequalities is positive, that is,
$u_j(x, (s_{-I}, s_{I}^*), (V, \vec P)) >
u_j( x, (s_{-I}, s_{I}^*), (V', \vec P))$, and
$u_j(x, (s_{-I}, s_{I}), (V, \vec P)) >u_j(x, (s_{-I}, s_{I}), (V', \vec P))$. Then
\begin{align*}
&u_j(x, (s_{-I}, s_{I}^*), (V', \vec P)) - u_j(x, (s_{-I}, s_{I}), (V', \vec P))\\
&\qquad\qquad> \bigg(u_j(x, (s_{-I}, s_{I}^*), (V, \vec P)) - \frac{1}{4\alpha(n)}\bigg) - u_j(s', x, (V', \vec P))\\
&\qquad\qquad> \bigg(u_j(x, (s_{-I}, s_{I}), (V, \vec P))  + \frac{1}{\alpha(n)} \bigg) -\frac{1}{4\alpha(n)}- u_j(x, (s_{-I}, s_{I}), (V', \vec P))
%&> \bigg(u_j(x, (s_{-I}, s_{I}), (V, \vec P)) -u_j(x, (s_{-I}, s_{I}), (V', \vec P)) \bigg) + \frac{3}{4 \alpha(n)}
> \frac{3}{4\alpha(n)}. \qedhere
\end{align*}
\end{proof}

Using~\lemref{pruning}, given an $O(\alpha(n))$-gap ncRIP protocol (where $\alpha(n)$ is constant or polynomial),
a polynomial-time oracle Turing machine can
use its $\sf{NEXP}$ oracle to guess a strategy
profile $s$, prune the verifier's Nature moves, and report the
resulting $O(\alpha(n))$-support distribution bit-by-bit.
Thus, it can simulate the new distribution
and find the decision nodes that are reachable under $s$.
%Thus $M$ can query $O$ to figure out what parts of the game tree are being ``played
%under $s$'', which is going to be of size at most $O(\alpha(n))$, and hence at
%most polynomial in $n$ (since the provers use deterministic strategies).

\paragraph*{Searching through the strategy-profile space efficiently}
The next question is: how should the polynomial-time Turing machine
navigate the potential strategy-profile space (in polynomial time) to find the
strategy profile that satisfies~Observation~\ref{gaprestate}?
To cut down on the
search space, we invoke a recurring idea: divide each prover's expected payment interval $[-1,1]$,
evenly into $8 \alpha(n)$ \defn{subintervals} of length $1/(4\alpha(n))$, and
consider \defn{subinterval profiles} (a tuple of subintervals, one for each prover).
%use these subintervals.
%In particular, we make the following claim.
\begin{lemma}\label{intervalsse}
Given an input $x$ and an ncRIP protocol $(V, \vec{P})$ with $\alpha(n)$-utility gap,
consider a {subinterval profile} $(L_1, \ldots, L_{p(n)})$, where
each $L_i = [{k}/({4\alpha}), ({k+1})/({4 \alpha +1}))$ denotes a subinterval of prover $P_i$ in $[-1,1]$,
for some $k \in \{-2\alpha(n), \ldots,
2\alpha(n)-1\}$.
Let $s$ be an SSE that has an expected payment profile $\tilde{u}(x,s)$ such that $u_i(x,s) \in L_i$ for all $1 \le i \le p(n)$, and
$s$ does not satisfy~Observation~\ref{gaprestate}. Then the expected payment profile $\tilde{u}(x,s^*)$ under a dominant SSE
$s^*$ cannot lie in the same subinterval profile, that is, there exists
a prover index $j$ such that $u_j(x, s^*) \notin L_j$.
\end{lemma}

\begin{proof} %{Lemma~\ref{intervalsse}}
Since $s$ does not satisfy~Observation~\ref{gaprestate},
there exists a reachable subform~$H_I$ and prover~$P_j$
acting on $H_I$ such that the following holds. Without loss of generality, let $\mu_j(s,x) \in L_k$.
\begin{align*}
&u_j (x, (s_{-I}, s_{I}^*), (V, \vec{P})) -  u_j (x, (s_{-I}, s_{I}), (V,\vec{P})) > \frac{1}{\alpha(n)}\\
%&\mu_j (x, (s_{-I}, s_{I}^*), (V, \vec{P})) > \frac{1}{\alpha(n)} -\frac{1}{2 \alpha(n)} >  -\frac{1}{2 \alpha(n)}\\
&u_j (x, s^* , (V, \vec{P})) >  \frac{1}{\alpha(n)} +\frac{k}{4 \alpha(n)} \implies u_j (x, s^* , (V, \vec{P}))  \notin L_k\qedhere
\end{align*}
%By contradiction, suppose that
\end{proof}

Using Lemma~\ref{intervalsse}, if the polynomial-time Turing machine is able to test
{\em any} SSE $s$ with $\tilde{u}(x,s)$ in a subinterval profile, 
for all subinterval profiles, then it is guaranteed to find
one that satisfies~Observation~\ref{gaprestate}. This is because a dominant SSE of
an ncRIP protocol is guaranteed to exist and its expected payment profile must belong to some subinterval profile.

However, there are still $O(\alpha(n))$ subintervals for each prover, and thus $O(\alpha(n)^{p(n)})$ total subinterval profiles. A polynomial-time machine cannot test SSEs for each of them.
%\shikha{Worked till here.}
%We claim the following

%$M$ can use the payment profiles as a proxy for strategy profiles to query
%its oracle.

%Since we need to check expected payments under a straregy profile
%Using~Observation~\ref{gaprestate}, we claim that

%the polynomial-time oracle Turing machine $M$ choose strategies
%profiles to verify whether or not they
%that gives the correct answer bit?
%\shikha{Worked till here.}
%Since the budget of the verifier is fixed at $2$ and each prover's expected payment lies
%in $[-1,1]$, $M$ can use the payment profiles as a proxy for strategy profiles to query
%its oracle.
%to oracle $O$ for
%strategies that achieve that payments for the provers.
%In general, there can be
%exponentially many payment profiles. To avoid checking all possible payments,
%we use $\alpha(n)$-gap guarantee of $(V, \vec{P})$ to partition the payment
%intervals of the provers. In particular, for each prover $P_{j}$, we divide his
%expected payment range $[-1,1]$ into $4 \alpha(n)$ \emph{intervals}, each of
%length $1/2\alpha(n)$.
%We want to query the oracle with a combination of these
%\emph{interval profiles} (one for each of total $p(n)$ provers), and get it to
%guess a strategy profile that achieves expected payments of the provers lying
%in those intervals. However, since each prover can have $\alpha(n)$ intervals,
%there can be total $\alpha(n)^{p(n)}$ interval profiles, and a polynomial-time
%Turing machine cannot query all of them.
%
To reduce the search space further, we show that it is sufficient to consider
subintervals of the {total expected payment} rather than individual and test an SSE $s$
for each of them.  
%To test if an SSE $s$
%satisfies~Observation~\ref{gaprestate}, we go bottom-up in the game tree
%reachable under $s$ to find the dominant SSE on each subforms (which always
%exists in an ncRIP protocol).
%
Recall that a SSE $s$ is weakly dominant if for any player~$i$ and SSE~$s'$,
$u_i(s)\geq u_i(s')$.  
%We can rule out strategies that are not weakly dominant % by %find an weakly-dominant SSE %that weakly-dominates all others by only querying for the total expected payment intervals using the~next lemma.

\begin{lemma}\lemlabel{maxsse}
If a weakly-dominant SSE exists, then a strategy profile $s$ is a weakly-dominant SSE if
and only if $s$ is an SSE and $s$ maximizes the sum of utilities of all players among all
SSEs.
\end{lemma}

%\sam{Does this need a proof?  Also, should this be recursive maximum somehow?}

We are now ready to prove the upper bound for ncRIP classes with constant, polynomial,
and exponential utility gap.
%We defer formal details of the proof to the full version of the paper.
%(The exponential utility-gap case is deferred to the full version of the paper.) % with utility gap.
%We discuss the polynomial-gap case here; see~\secref{omittedproofs} for constant
%and exponential gap.

\paragraph*{Constant utility gap}
%We start with the tight upper bound for the class $\cncRIP$, that is,
%we show that any language that has an ncRIP protocol with constant
%utility gap can be simulated by a polynomial-time Turing machine with constant
%number of adaptive queries to an $\sf{NEXP}$ oracle.
%
Using~\lemref{pruning} and~\lemref{maxsse}, simulating a constant-gap protocol
using a $\sf{P^{\sf{NEXP}[O(1)]}}$ machine $M$ is straightforward. We give
a high-level overview below.

%In particular, %under an $O(1)$ gap protocol,
There are at most $O(1)$ subforms that are reachable under any strategy profile $s$,
and the total expected payment of the provers conditioned on
reaching these subforms will be in one of the $O(1)$ subintervals.
Thus, there are $O(1)$~combinations of total expected payments on all subforms
(including the whole game). $M$ queries its $\sf{NEXP}$ oracle whether there exists an SSE that
achieves that combination of total expected payments on those subforms, for all combinations.
% Then,
%$M$ finds the maximum among all of the combinations that got a ``yes.''
%Such a maximum is guaranteed to exist for an ncRIP protocol. Finally, $M$
%queries the oracle for the answer bit of the corresponding SSE by
%giving the dominant profile of total expected payments over the subgames.
%

\begin{lemma}\lemlabel{const-upper}
$\cncRIP \subseteq \sf{P^{NEXP[O(1)]}}$.
\end{lemma}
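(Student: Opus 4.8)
The plan is to prove the containment $\cncRIP \subseteq \sf{P^{NEXP[O(1)]}}$ by exhibiting a polynomial-time Turing machine $M$ that, using only $O(1)$ adaptive queries to an $\sf{NEXP}$ oracle, computes the correct answer bit of any given $O(1)$-gap ncRIP protocol $(V, \vec P)$ on input $x$. By Observation~\ref{gaprestate}, it suffices for $M$ to find \emph{any} strategy profile $s$ that satisfies the restated utility-gap condition, since any such $s$ yields the correct answer bit. The structural reason the constant-gap case is tractable is that with $\alpha(n) = O(1)$, every quantity that would otherwise blow up stays bounded: by the Pruning Lemma (\lemref{pruning}) the Nature moves of $V$ can be replaced by a verifier $V'$ whose distributions have $O(\alpha(n)) = O(1)$ support, so the reachable game tree under any fixed $s$ has only $O(1)$ nonzero-probability branches and hence only $O(1)$ subforms; and the total-expected-payment range $[-1,1]$ splits into only $O(\alpha(n)) = O(1)$ subintervals of width $1/(4\alpha(n))$.

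First I would set up the search. For each subform reachable under a candidate strategy, the total expected payment of the provers conditioned on reaching that subform lands in one of the $O(1)$ subintervals; since there are $O(1)$ subforms (including the whole game), there are $O(1)$ possible \emph{combinations} of subinterval labels across all subforms. The machine $M$ enumerates these combinations. For each combination, $M$ asks its $\sf{NEXP}$ oracle whether there exists an SSE of $(V', \vec P)$ whose total expected payment on each subform falls in the prescribed subinterval of that combination. Crucially, the $\sf{NEXP}$ oracle \emph{can} guess and verify an SSE together with the pruned distribution (the pruning itself is an exponential-time transformation that an $\sf{NEXP}$ machine can carry out after guessing $s$), even though it cannot directly certify dominance. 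The role of the combinations-over-subforms encoding is exactly to sidestep the dominance-verification problem: instead of asking the oracle to check dominance, $M$ uses \lemref{maxsse} --- which says a weakly-dominant SSE is precisely an SSE maximizing the sum of utilities --- so that taking the \emph{maximum} total payment over all ``yes'' combinations recovers the dominant behavior on each subform via backward induction.

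Next I would have $M$ select, among all combinations returning ``yes,'' the one maximizing the total expected payment (resolved bottom-up over subforms, consistent with \lemref{maxsse} and the backward-induction structure of \defref{max-sse}). This identifies a subinterval profile achieved by the dominant SSE $s^*$. By Claim~\ref{intervalsse}, any SSE $s$ whose total-payment profile matches this dominant profile on every subform must satisfy Observation~\ref{gaprestate} --- because if it failed the gap condition, the dominant SSE's payment for the deviating prover would be forced out of the matching subinterval, a contradiction. Finally $M$ issues one more oracle query fixing the winning dominant profile of total expected payments across subforms and asks the oracle to return the answer bit $c$ of a corresponding SSE; by Observation~\ref{gaprestate} this bit is correct. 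Since the number of subinterval combinations is $O(1)$ and each is resolved by a single oracle query, $M$ makes $O(1)$ queries total, placing the problem in $\sf{P^{NEXP[O(1)]}}$.

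The main obstacle I anticipate is the interaction between the dominance condition and the oracle's limitations: the $\sf{NEXP}$ oracle can certify existence of an SSE in the pruned game but cannot by itself certify that an SSE is \emph{dominant} on all its subforms, because dominance is a universally-quantified backward-induction statement over an exponential object. The entire query design --- encoding dominance indirectly as a maximization of total expected payments over a bounded family of subform-subinterval combinations, justified by \lemref{maxsse} --- exists to convert this uncertifiable condition into $O(1)$ existential queries that $M$ can post-process in polynomial time. I would want to verify carefully that the number of reachable subforms is genuinely $O(1)$ under the pruned protocol (this follows from the $O(1)$-support Nature moves and the polynomial round/message bounds giving a tree of $O(1)$ nonzero-probability leaves for constant $\alpha$), and that the bottom-up resolution of the winning combination correctly mirrors \defref{max-sse}, since that is where the argument's correctness genuinely rests.
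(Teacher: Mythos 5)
Your proposal is correct and follows essentially the same route as the paper's proof: apply the Pruning Lemma (\lemref{pruning}) to bound the reachable subforms to $O(1)$, enumerate the $O(1)$ combinations of total-expected-payment subintervals over those subforms, query the $\sf{NEXP}$ oracle for the existence of an SSE achieving each combination, take the maximum among the ``yes'' answers (justified by \lemref{maxsse} and the guaranteed existence of a dominant SSE), and finish with one query for the answer bit of the corresponding SSE. Your explicit invocation of Claim~\ref{intervalsse} and Observation~\ref{gaprestate} to certify correctness of the selected profile is exactly the justification the paper's (terser) proof relies on implicitly.
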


\begin{proof} %{\lemref{const-upper}}
Given any $L \in \ancRIP$, let $(V, \vec{P})$ be the
MRIP protocol with $\alpha(n)$ utility gap for~$L$, where $\alpha(n)$ is a constant. 

Given an input $x$ of lenth $n$, consider the following deterministic
polynomial-time oracle Turing machine $M$ with access to an oracle $O$ for an
$\sf{NEXP}$ language. Similar to the proof
of~\lemref{poly-upper}, $M$ divides $[-1,1]$ into $8 \alpha(n)$ intervals, each of
length $1/4\alpha(n)$.  In other words, the $i$th interval is
$[i/4\alpha(n),(i+1)/4\alpha(n))$ for each $i \in \{-4\alpha(n), \ldots,
4\alpha(n)-1\}$.\footnote{To include $1$ as a possible reward, interval
$4\alpha(n)-1$ should be closed on both sides; we ignore this for simplicity.}

Using~\lemref{pruning}, under a given input $x$ and strategy profile $s$,
there are at most $8 \alpha(n)$ subforms are reached under any $s$ in the modified game.
%Let there $k$ be the maximum number of subforms reachable under any $s$,
%where $k$ is a constant. 
Total expected payment of provers
acting within any subform (conditioned on reaching the subform) must lie
in any one of the $8\alpha(n)$ intervals in $[-1,1]$. Thus overall, there are
$O(\alpha(n)^\alpha(n))$ combinations of total expected payments over subforms,
which is still $O(1)$. Let $(u, u_{I_1}, \ldots, u_{I_k})$ be a tuple of total expected payments, 
where $k = 8 \alpha(n)$, the maximum number of subforms reachable under any $s$,
and $u$ represents the total expected payment of the whole game, whereas $u_{I_j}$
represents total expected payment of the provers acting in subform $I_j$ (conditioned
on reaching $I_j)$.

For each combination $(u, u_{I_1}, \ldots, u_{I_k})$, $M$ queries $O$: {\em does
there exists a strategy profile that is an SSE and the total expected payments
over reachable subforms under $s$ and $O(\alpha(n))$ support
Nature moves imposed by~\lemref{pruning}
is $(u, u_{I_1}, \ldots, u_{I_k})$ (conditioned on reaching the subforms)?} 
Among the queries to which the oracle's answer is ``yes'',
$M$ finds the combination that achieves maximum total expected payment for all subforms.
Such a combination is guaranteed to exist because $(V, \vec{P})$ is an ncRIP protocol,
and a dominant SSE of the game exists. 
\end{proof}

\begin{remark} \label{rem:nonadaptiveconst}
The polynomial-time oracle Turing machine in~\lemref{const-upper} can
issue all its queries {\em non-adaptively}. %queries to an $\sf{NEXP}$ oracle to simulate a language
%$L \in \ancRIP$, that is,
    That is, $\ancRIP \subseteq \sf{P_{||}^{NEXP[O(1)]}}$. Furthermore in~\secref{lower}, we show that $\cncRIP \subseteq \sf{P^{NEXP[O(1)]}}$.
    Indeed, the two classes are equal: $\sf{P^{NEXP[O(1)]}_{||}}= \sf{P^{NEXP[O(1)]}}$.

    Since $\cMRIP = \sf{P_{||}^{NEXP[O(1)]}}$~\cite{ChenMcSi16,CMS17arxiv}, this shows that
cooperative provers are as powerful as non-cooperative provers under constant utility-gap guarantees,
and we obtain \corref{c-mrip-ncrip}.
\end{remark}

\paragraph*{Polynomial utility gap}
Next, we prove the upper bound of the case of polynomial utility gap. We note that the simple strategy of querying all possible
payment combinations as in~\lemref{const-upper} does not work (there
are $O(\alpha(n)^{\alpha(n)})$ total combinations).

To simulate a polynomial-utility gap ncRIP protocol $(V, \vec{P})$, using
a $\sf{P^{NEXP}}$~machine $M$, we put to use all the structure we have established
in this section. 

%We present a high-level structure of the proof here with the full proof deferred to the full version of the paper.

For each of the $O(\alpha(n))$ total payment subintervals of the interval $[-1,1]$
%first queries the oracle to find out if there exists
%an SSE with total expected payment ${u}(x,s)$ in the subinterval.
%for all the subintervals
that correspond to an SSE, $M$ does a recursive search to find an exact total expected payment $u(x,s)$ that is generated by an SSE. 
(We can restrict ourselves to $O(\alpha(n))$ oracle queries due to \lemref{maxsse}.)
In particular, $M$ queries the $\sf{NEXP}$ oracle: {\em Does there exist an
SSE with total expected payment in the first half of the $i$th interval?}.
If the answer is {\em yes} then $M$ recurses on the first half of the $i$th interval; $M$ does not need to search the second half by Lemma~\ref{intervalsse}.  Otherwise (if the answer is {\em no}) then $M$ recurses on the second half.
%then recurses on the first half if the answer is {\em yes}, and otherwise recurses on the second half.  %We can ignore the other half due to Claim~\ref{}.
Thus,
in polynomial time and with polynomial queries, $M$ can find an exact $u(x,s)$ for an SSE $s$ in the subinterval using the power of its  {\em adaptive} queries.

Next, $M$ simulates the protocol $(V, \vec{P})$
with the help of the oracle, under the SSE $s$ for a given subinterval.
\lemref{pruning} is crucial for $M$ to
simulate the verifier's moves, because $V$ in general can induce
exponential-size distributions.
%the oracle can transform $V$'s
%nature move to have $O(\alpha(n))$ support without affecting
%the solution concept and the gap guarantees.
$M$ traverses the tree reachable under $s$ ``top-down'' using the oracle to learn the pruned distributions and provers' moves.
Finally, $M$ goes ``bottom-up'' to test whether $s$ satisfies~Observation~\ref{gaprestate} on all its reachable subgames.
%The full proof is in~\secref{omittedproofs}.
%a total expected payment in the first half of the interval,
%if the answer

%
%All the structure
%
%We show that any language that has an ncRIP protocol with polynomial
%utility gap can be simulated by a polynomial-time Turing machine with adaptive
%queries to an $\sf{NEXP}$ oracle.

\begin{lemma}\lemlabel{poly-upper}
$\pncRIP \subseteq \sf{P^{NEXP}}$.
\end{lemma}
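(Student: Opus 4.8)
The plan is to prove $\pncRIP \subseteq \sf{P^{NEXP}}$ by exhibiting a polynomial-time oracle machine $M$ that, with adaptive queries to an $\sf{NEXP}$ oracle, correctly decides any language with a $\pncRIP$ protocol $(V, \vec{P})$. By \lemref{pruning} I may assume (after guessing a strategy $s$ and pruning) that all of $V$'s Nature moves have $O(\alpha(n))$ support, so the reachable game tree under any fixed $s$ has polynomial branching at Nature nodes; this is the structural fact that makes a polynomial-time traversal conceivable. The overall strategy is to find, for \emph{each} of the $O(\alpha(n))$ total-expected-payment subintervals, a single SSE $s$ whose total expected payment lands in that subinterval, and then test whether that $s$ satisfies Observation~\ref{gaprestate}. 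By Claim~\ref{intervalsse} and the existence of a dominant SSE, at least one such $s$ must satisfy the Observation, and by Observation~\ref{gaprestate} its answer bit is correct.

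The key steps, in order, are as follows. First, \textbf{pinning down a concrete SSE per subinterval.} Since querying all payment combinations across provers is too expensive ($O(\alpha(n)^{p(n)})$ profiles), I instead use \lemref{maxsse}: it suffices to track the \emph{total} expected payment. For a fixed subinterval, $M$ performs a binary-search recursion using adaptive $\sf{NEXP}$ queries of the form ``does there exist an SSE with total expected payment in this (half-)interval?'', halving the interval each time. Because payments have bit-complexity polynomial in $n$ after pruning, $O(\mathrm{poly}(n))$ such adaptive queries isolate an \emph{exact} rational value $u(x,s)$ realized by some SSE $s$; this is precisely where adaptivity is essential and where cooperative (non-adaptive) provers fall short. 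Second, \textbf{top-down simulation.} Having fixed the target total-payment value characterizing $s$, $M$ traverses the pruned reachable game tree from the root downward, at each node querying the oracle to learn the pruned Nature distribution (read off bit-by-bit) and each prover's move under the SSE consistent with the committed total payment. The pruning lemma guarantees this tree is polynomial-size along any reachable path, so the traversal runs in polynomial time. Third, \textbf{bottom-up dominance/gap test.} Going back up, $M$ verifies at each reachable subform $H_I$ whether the SSE is dominant there—i.e. whether the inequality in Observation~\ref{gaprestate} is violated by more than $1/\alpha(n)$—again by issuing oracle queries for the dominant value on each subform and comparing. If $s$ passes on all reachable subforms, $M$ outputs its answer bit; otherwise it moves to the next subinterval. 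Correctness follows because Claim~\ref{intervalsse} forces at least one subinterval to yield an $s$ passing the test, and Observation~\ref{gaprestate} then certifies the answer bit.

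The main obstacle I expect is the bottom-up verification that a candidate SSE is \emph{dominant} on every reachable subform, since the $\sf{NEXP}$ oracle cannot directly verify dominance (it can only guess-and-check a single SSE, not quantify over all competing SSEs with the backward-induction structure of \defref{max-sse}). The resolution is to offload the quantification into the oracle queries themselves: for each subform $H_I$, $M$ uses the total-expected-payment trick of \lemref{maxsse} to ask the oracle for the \emph{maximum} total payment achievable by any SSE on $(H_I, \mu_I^s)$, and then compares the candidate's value on $H_I$ against this maximum to within the $1/\alpha(n)$ tolerance. Because there are only polynomially many reachable subforms (the tree being pruned) and each comparison costs polynomially many adaptive queries, the whole verification stays in $\sf{P^{NEXP}}$. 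A secondary subtlety is ensuring the \emph{same} SSE $s$ is used consistently across the top-down and bottom-up passes despite only committing to its total expected payment value; this is handled by threading the committed total-payment value (and the per-subform committed values) as fixed constraints in every oracle query, so that all queries refer to SSEs agreeing on these values, which by \lemref{maxsse} suffices to identify the relevant dominant behavior.
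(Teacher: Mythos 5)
Your proposal matches the paper's proof essentially step for step: pruning via \lemref{pruning}, subdividing the total-expected-payment range $[-1,1]$ into $O(\alpha(n))$ subintervals, binary-searching with adaptive $\sf{NEXP}$ queries to pin down an exact SSE payment value, a top-down oracle-assisted simulation of the pruned reachable tree, and a bottom-up per-subform dominance test via \lemref{maxsse}, with correctness following from Claim~\ref{intervalsse}, Observation~\ref{gaprestate}, and the existence of a dominant SSE. The only notable difference is bookkeeping: the paper additionally extracts the exact \emph{per-prover} expected-payment profile $\tilde{u}(x,s)$ bit-by-bit (threading previously extracted bits through each adaptive query) and uses that profile as the identifier in all later queries, whereas you thread only the total and per-subform committed values; this plays the same consistency-ensuring role, so the approaches coincide in substance.
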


\begin{proof} %{\lemref{poly-upper}}
Given any $L \in \pncRIP$, let $(V, \vec{P})$ be the
ncRIP protocol with $\alpha(n)$ utility gap for~$L$, where $\alpha(n)= n^k$ for
some constant $k$. 

Given an input $x$ of lenth $n$, consider the following deterministic
polynomial-time oracle Turing machine $M$ with access to an oracle $O$ for an
$\sf{NEXP}$ language. $M$ divides $[-1,1]$ into $8 \alpha(n)$ intervals, each of
length $1/4\alpha(n)$.  In other words, the $i$th interval is
$[i/4\alpha(n),(i+1)/4\alpha(n))$ for each $i \in \{-4\alpha(n), \ldots,
4\alpha(n)-1\}$.\footnote{To include $1$ as a possible reward, interval 
$4\alpha(n)-1$ should be closed on both sides; we ignore this for simplicity.} 

For each interval $[i/4\alpha(n),(i+1)/4\alpha(n))$, $M$ makes the following queries
to $O$: {\em does there exist a strategy profile ${s}$ that is an SSE and the
sum of expected payments of all provers $u(x, s)$ is in the $i$th interval?}
Let $L$ denote the set of intervals for which the answer to the query is
``yes''. 

For each interval $[\ell/4\alpha(n), (\ell+1)/4\alpha(n)) \in L$,  $M$ queries $O$: {\em does there exist a strategy profile $s$ that is
an SSE and the sum of expected payments of all provers $u(x, s)$ is in the
first half of the $\ell$th interval?} If the answer is ``yes'', then $M$
recurses on the first half, else $M$ recurses on the second half of the
interval.  In polynomial time and polynomial queries, $M$ can find the exact
total expected payment $u(x, s, (V, \vec{P}))$ in the interval that is
generated by an SSE. $M$ asks further queries to figure out the exact payment
profile under such an SSE. For $k \in \{1,\ldots, p(n)\}$, where $p(n)$ is the
total number of provers in $(V, \vec{P})$, and for each $j \in \{1, \ldots,
n^{k'}\}$, where $n^{k'}$ is the running time of $V$ ($k'$ is a constant), $M$
asks the following queries adaptivily: {\em under an SSE where
$\sum_{i=1}^{p(n)} \mu_i (x,s) = u(x, s)$, what is the $j$th bit in the
expected payment $\mu_k(x, s)$ of prover $P_k$, given  and the first $j-1$ bits
of $\mu_k(x,s)$ and $\mu_1(x,s), \ldots, \mu_{k-1}(x,s)$}.  In $O(n^{k'} p(n))$
queries, $M$ can figure out the exact payment profile $\tilde{u}(x,s)=(\mu_1,
(x, s) \ldots, \mu_k (x,s))$ under an SSE $s$, such that the total expected
payment is in the $\ell$th interval.

$M$ now verifies whether the SSE corresponding to the payment profile
$\tilde{u}(x,s)$ satisfies the condition of Observation~\ref{gaprestate}. $M$
proceeds in two phases: first, $M$ wants to go ``top-down'' figuring out what
part of the game tree is being played under $s$ on input $x$, using the oracle
to simulate the provers and the verifier.  Then, it goes ``bottom-up'' in the
tree being played under $s$, to check whether all subforms are
``$(1/\alpha(n))$-close'' to the dominant strategy at that subform.

\paragraph*{Top-down phase.} Let $k(n)$ be the total number of rounds in $(V,
\vec{P})$. Note that $k(n)$ is polynomial in $n$.  Let $m_{ij}$ denote the
message sent by prover $P_i$ at round $j$.  Then, for each round $j$ and each
prover $i$ where $1\le j \le k(n)$ and $1 \le k \le p(n)$, $M$ first asks the
oracle to give the ``pruned'' $O(\alpha(n))$ support distribution imposed by the
Nature move of $V$ at round $j$ bit by bit as follows: {\em ``under an SSE
where the expected payment profile is $\tilde u (x,s)$, what is the $r$th bit
of the distribution imposed by $V'$~using $V$ and~\lemref{pruning}?''} 
This requires a polynomial number of bits (and therefore queries) because the distribution is polynomial sized.
The pruned distribution preserves the dominant SSE and changes the utility gap
by only a factor $2$ (this factor does not affect the proof
as our intervals are scaled down to handle it).
Given this distribution, $M$ simulates $V$ on the support of the distribution to figure out the messages that $V$ sends
to the provers in round $j$. 
In particular, $M$ does not have access to random bits, so instead it simulates \emph{every} action of $V$ in the support.
To simulate the provers at round $j$, $M$
similarly queries $O$ bit by bit: {\em ``under an SSE where the expected
payment profile is $\tilde u (x,s)$, what is the $r$th bit of the message sent
by $P_k$''}.  Thus, after simulating the moves of $V$ and $P$ under $s$, $M$
has sketched out the $O(\alpha(n))$ size part of the game tree being played under
$s$ corresponding to $\tilde u (x,s)$.

\paragraph*{Bottom-up phase.} 
Given the $O(\alpha(n))$ nodes of the game tree under play,
$M$ can mark out the subforms reachable under $s$ corresponding to
$\tilde{u}(x,s)$. Going from the last level up, for each subform
$H_I$ reachable under $s$, $M$ uses the oracle to figure
out which payment interval the expected payments of the weakly-dominant SSE 
on $H_I$ lie in (given the expected weakly-dominant SSE payments on the reachable
subforms verified so far), until it finds a subform that
violates the condition of Observation~\ref{gaprestate}. 

In particular, for each subform $H_I$ of height $k$, let $\tilde{u}(x, s, I')$
denote the tuple of total expected payments under $s$ on all subforms $H_{I'}$ of height $<k$ 
following $I$ (conditioned on reaching $I$) verified so far.
$M$ divides the interval $[-1,1]$ into
$8 \alpha(n)$ intervals of size $\alpha(n)/4$ as before and for each interval
queries the oracle $O$: {\em does there exist a strategy profile ${s_I}$ on
subagme $H_I$ that is an SSE and the
sum of expected payments of all provers $u(x, s, I)$ is in the $x$th interval,
and gets a total expected payments on subforms $H_{I'}$ of height $<k$ following $I$ equal to $\tilde{u}(x, s, I')$}.\footnote{$M$ does not {\em need}
to send the 
total expected payments of the subforms at lower levels. Instead,
$M$ can just send the total expected payment $u(x, s)$ at the root and ask $O$ to guess $s$ as well.
An $\sf{NEXP}$ can verify if one SSE weakly dominates another. This observation is crucial in extending this proof to exponential utility gap.}

Then, $M$ finds the maximum interval $[i/4\alpha(n), (i+1)/4\alpha(n))$ among the intervals for which the oracle says yes.
By \lemref{maxsse}, the weakly-dominant SSE $s_I^{\mbox{max}}$ at $H_I$
also lies in the $i$th interval. Using the probability $p_I$ assigned by $H_I$ ($M$ knows the
distribution imposed by all ``pruned'' Nature moves), $M$ checks whether the total
expected payment of weakly-dominant SSE $s_I^{\mbox{max}}$ is in the same interval as the
sum of expected payments of provers in $Z_I$ under $s$. If it is not, then $s$ fails the test and $M$ continues
to the next interval in $L$. Otherwise, $M$ continues to the next reachable subform. 

If $s$ passes the test  
for all subforms (including at the root), then by Observation~\ref{gaprestate},
the answer bit under $s$ is correct. 
$M$'s final query to $O$ is: {\em ``under an SSE
where the expected payment profile is $\tilde u (x,s)$, what is the answer bit $c$?} If $c=1$,
then $M$ accepts $x$, otherwise $M$ rejects $x$.

$M$ is guaranteed to find a payment profile $\tilde u(x,s)$ 
(and thus a strategy profile $s$) that passes the test. Since
$(V, \vec{P})$ is an ncRIP protocol for $L$, there exists a dominant 
SSE $s^*$ in some interval in $L$. By Obversation~\ref{gaprestate}, 
if a strategy profile $s'$ fails the test, the dominant SSE can not
get a total expected payment in the same interval as $s'$. Thus, we can rule out
intervals by checking any SSE with total expected payment in that interval.
Since a dominant SSE $s^*$ exists, $M$ must eventually find an interval,
where the corresponding SSE passes the test.

To complete the proof, we note that (a) $M$ runs in polynomial time, (b) each
query to the oracle is polynomial, and, (c) the oracle queries can be
answered in non-deterministic exponential time.

First, (a) holds because each top-down and bottom-up phase is executed $O(\alpha(n))$
times and each of the phases take polynomial time. In the top-down phase,
$M$ simulates the protocol on strategy $s$ using
the oracle while restricting the verifier's Nature moves
to be of $O(\alpha(n))$ support. Thus this phase takes polynomial time.
For the bottom-up phase, $M$ finds weakly-dominant SSEs at each reachable subforms under $s$.
Since there are at most $O(\alpha(n))$ subforms and at most $O(\alpha(n))$ interval
queries for each subform, the bottom-up phase takes time polynomial in $n$.

Second, (b) holds each oracle query involves a
total expected payment $\tilde{u} (x,s)$ or an interval of size $\alpha(n)/2$, 
both of which can be generated by $V$ and hence are polynomial in $n$.

To prove (c), it is sufficient to show that an $\sf{NEXP}$ machine can 
guess a strategy profile and verify if it is an SSE and if it
gets expected payments in a certain interval. Since
the transcript of any ncRIP protocol is polynomial in $n$, a strategy
profile $s$ of the provers can be represented in exponential bits,
and thus $O$ can guess such an $s$.
Now given $s$ and the protocol $(V,
\vec{P})$, by~\lemref{verify-sse}, it is possible to verify
whether ${s}$ is an SSE of the game in time linear in the size of the
game tree, and thus exponential in $n$.
%The game tree of protocol $(V, \vec{P})$ is exponential in $n$ and
%therefore an
%$\sf{NEXP}$ oracle can guess a strategy profile and verify if it is an SSE.
Furthermore, it can compute the expected payments of the provers under
${s}$ in exponential time as well, which is sufficient to answer all the queries made by $M$.
\end{proof}

\paragraph*{Exponential utility gap}
We conclude by giving a tight upper bound on the class of ncRIP
protocols with exponential utility gaps. The proof
follows immediately from that of~\lemref{poly-upper}.
In fact, it is simpler as the exponential-time Turing machine
is powerful enough to (a) simulate $V$'s Nature moves directly, and (b)
test all possible payment profiles. Thus, in the case of exponential
utility gap, we do not need~\lemref{pruning} or the notion of subintervals.

%We now show that any language that has an ncRIP protocol with polynomial
%utility gap can be simulated by a polynomial-time Turing machine with adaptive
%queries to an $\sf{NEXP}$ oracle.

\begin{lemma}\lemlabel{exp-upper}
$\sf{ncRIP} \subseteq \sf{EXP^{poly-NEXP}}$.
\end{lemma}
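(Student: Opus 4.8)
The plan is to prove $\sf{ncRIP} \subseteq \sf{EXP^{poly\mbox{-}NEXP}}$ by adapting the simulation argument of~\lemref{poly-upper} but exploiting the extra power of an exponential-time Turing machine $M$ to discharge the two places where the polynomial-time machine had to work hard. Recall that~$\sf{ncRIP} = \encRIP$ allows exponential utility gap, so the relevant $\alpha(n)$ can be exponential in $n$; this means subintervals of length $1/(4\alpha(n))$ are exponentially small and there are exponentially many of them. First I would observe that because $M$ now runs in exponential time, it can \emph{directly} simulate all of $V$'s Nature moves: the probability distribution imposed by a coin-flipping verifier has support at most $2^{n^k}$, which $M$ can enumerate node by node. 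Consequently we no longer need~\lemref{pruning} to reduce the support to $O(\alpha(n))$, nor do we need the subinterval machinery of Claim~\ref{intervalsse} and~\lemref{maxsse} to cut down the payment search space.

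The key steps, in order, are as follows. (1) Given input $x$, $M$ uses its $\sf{NEXP}$ oracle with polynomial-length queries to guess and verify candidate SSE strategy profiles $s$. A single SSE can be specified by the provers' moves, and the SSE property at each information set is an $\sf{NEXP}$-checkable condition (the oracle guesses the unbounded-prover responses and verifies sequential rationality against the Bayes-derived or arbitrary beliefs, exactly as in the earlier lemmas). (2) Since $M$ has exponential time, it can enumerate \emph{all} candidate total-expected-payment profiles over all reachable subforms directly rather than via a binary search over subintervals; it queries the oracle whether an SSE realizing each profile exists. (3) $M$ then identifies a weakly-dominant SSE by going bottom-up over the (now fully expanded) game tree, at each subform selecting the SSE maximizing the appropriate utilities; because $M$ can materialize the entire exponential-size tree and has already obtained the distributions directly, the dominance check at each subform reduces to comparisons $M$ can perform internally, using the oracle only to confirm existence of the relevant SSEs. (4) Finally $M$ reads off the answer bit $c$ of the dominant SSE, which by~\defref{mripnc} is correct with probability~$1$.

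The technical content is to confirm that each oracle query has \emph{polynomial} length even though the game tree is exponential: the queries ask only about the existence of an SSE achieving a given (polynomially describable) total-payment profile or a given answer bit, and the $\sf{NEXP}$ oracle internally guesses the exponential-size witness. This is why the resulting class is $\sf{EXP^{poly\mbox{-}NEXP}}$ rather than merely $\sf{EXP^{NEXP}}$, and why, as the excerpt remarks, it need not collapse to $\sf{EXP^{NP}}$. The queries are adaptive (the bottom-up dominance resolution depends on earlier answers), which matches the non-cooperative setting where adaptivity is exactly the feature that separates ncRIP from the non-adaptive cooperative bound.

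I expect the main obstacle to be the bottom-up verification that the guessed $s$ is a \emph{dominant} SSE and not merely an SSE: the dominance condition in~\defref{max-sse} is defined recursively over subforms of increasing height, and the $\sf{NEXP}$ oracle cannot itself certify dominance (as noted in the introduction, an $\sf{NEXP}$ oracle verifies SSEs but not dominant SSEs). The resolution is that $M$, having exponential time, performs the backward-induction recursion over the fully expanded tree itself, invoking the oracle only for the leaf-level existence checks; I would need to argue carefully that this recursion terminates within $M$'s time budget and that the polynomial query-length bound is preserved at every level, so that correctness follows from~\defref{mripnc} exactly as in~\lemref{poly-upper}. Since the exponential-time machine removes the resource pressure that made~\lemref{poly-upper} delicate, I anticipate this step being conceptually identical to but technically lighter than the polynomial-gap case.
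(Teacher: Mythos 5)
Your overall plan coincides with the paper's: the paper proves \lemref{exp-upper} by rerunning the simulation of \lemref{poly-upper}, observing that an exponential-time machine can (a) simulate $V$'s Nature moves directly, making \lemref{pruning} unnecessary, and (b) test payment profiles exactly, making the subinterval/binary-search machinery unnecessary. Your points about enumerating the $2^{n^k}$-support distributions node by node, dropping the pruning lemma, and reading off the answer bit of the dominant SSE via \defref{mripnc} all match the paper's argument.

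However, there is a genuine gap at precisely the step you flag as the ``main obstacle,'' and both of your proposed mechanisms for it fail as stated. First, enumerating ``all candidate total-expected-payment profiles over all reachable subforms'' is not feasible: without pruning there can be exponentially many reachable subforms, so a single such profile is an exponentially long object (and the number of such profiles is doubly exponential); a query describing one such profile already violates the polynomial query-length requirement that defines $\sf{EXP^{poly\mbox{-}NEXP}}$. Second, ``invoking the oracle only for the leaf-level existence checks'' cannot work either: at a subform of height greater than one, the space of strategy profiles is doubly exponential in $n$, so $M$ cannot search it internally in exponential time; it needs existence queries at \emph{every} level of the backward induction, and each such query must somehow encode the constraint that the SSE being asked about is dominant on all lower subforms. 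In \lemref{poly-upper} that constraint is encoded by listing the already-verified payments $\tilde{u}(x,s,I')$ of the lower subforms inside the query, which in the exponential-gap setting would again be exponentially long. The paper's resolution is the footnote in the proof of \lemref{poly-upper}: the query carries only the root total expected payment $u(x,s)$ (polynomially many bits), the $\sf{NEXP}$ oracle guesses the entire strategy profile(s) as part of its exponential-size witness, and it can verify in exponential time whether one guessed SSE weakly dominates another. That observation---pin the candidate down by a polynomial-length payment value and push every exponential-size object into the oracle's witness---is what the paper explicitly calls ``crucial in extending this proof to exponential utility gap,'' and it is the piece missing from your argument; contrary to your expectation that this case is uniformly ``technically lighter,'' this one step is strictly more delicate than in the polynomial-gap case.
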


\begin{remark}\label{rem:nonadaptiveexp}
    Since $\sf{EXP^{poly-NEXP}} \subseteq \sf{EXP_{||}^{poly-NEXP}} = \sf{EXP_{||}^{NP}}$,
    and $\sf{EXP_{||}^{NP}} \subseteq \sf{MRIP}$~\cite{ChenMcSi16},
\lemref{exp-upper} shows that $\encRIP \subseteq \eMRIP$ and using~\lemref{mrip-ncrip},
we get that in general the two classes coincide. In other words, non-cooperative rational proofs are
 as powerful as cooperative multi-prover rational proofs under exponential utility gap
and we obtain \corref{exp-mrip-ncrip}.
\end{remark}
 
\section{Additional Related Work}\seclabel{additionalrelated}

\paragraph*{Rational Proofs}
The model of single-prover \defn{rational interactive proofs} (RIP) was introduced by Azar and Micali~\cite{azar2012rational},
who used scoring rules as the main tool to construct simple and efficient RIP protocols.
In a follow-up work~\cite{azar2013super}, they extended this work to design super-efficient rational proofs
that have sublinear verification and computation compelexity.
Guo et al. present rational \defn{arguments}
for a computationally bounded prover and a sublinear verifier in~\cite{guo2014rational},
and construct
rational arguments for all languages in $\sf{P}$~\cite{guo2016rational}.
Campanelli and Rosario~\cite{campanelli2015sequentially} study sequentially composable rational proofs
and rational proofs for space bounded computations~\cite{campanelli2017efficient}, 
while Zhang and Blanton~\cite{zhang2014efficient} design protocols to outsource matrix multiplications to a rational cloud.

The model of \defn{multi-prover (cooperative) rational interactive proofs} (MRIP) 
was introduced by Chen et al.~\cite{ChenMcSi16}. In this model, the provers work together to maximize their \emph{total payment}.
They show that the class equals $\sf{EXP^{||NP}}$ under exponential utility gap and $\sf{P^{||NEXP}}$ under polynomial utility gap. In the full version~\cite{CMS17arxiv},
they show that MRIP under constant utility gap is equal to $\sf{P^{||NEXP[O(1)]}}$. In follow-up work~\cite{ChenMcSi18},
the authors scale down the power of the verifier and design super-efficient MRIP protocols with strong utility-gap guarantees.

\paragraph*{Game-Theoretic Characterization of Complexity Classes} Game-theoretic
characterization of complexity classes has been largely studied in the form
of \defn{refereed games}~\cite{chandra1976alternation, feige1990noisy,
feige1997making, feige1992multi, reif1984complexity,
feigenbaum1995game,koller1992complexity}.
Chandra and Stockmeyer~\cite{chandra1976alternation} show that any language in
$\sf{PSPACE}$ is refereeable by a game of perfect information.
Feige and Kilian~\cite{feige1997making} show that the class of imperfect information,
perfect recall refereed games is exactly $\sf{EXP}$.
Feigenbaum, Koller and Shor~\cite{feigenbaum1995game} 
show that if provers are allowed to have imperfect recall (essentially acting as oracles),
refereed games can simulate $\sf{EXP^{NP}}$.
% and show that it can be simulated as a zero-sum refereed game between two
%computationally unbounded provers with {imperfect recall}.\footnote{Note that it is the imperfect recall that results in such a large complexity class; refereed games under conditions matching ncRIP (imperfect information and perfect recall) are equivalent to the class $\sf{EXP}$~\cite{feige1997making}.}

\paragraph*{Query Complexity and Related Complexity Classes}
The query complexity of oracle Turing machines
has been widely studied in the literature~\cite{beigel1991bounded,wagner1990bounded,buhrman1999quantum}.
In this paper, we give game-theoretic characterizations of the classes $\sf{P^{NEXP[O(1)]}}$. $\sf{P^{NEXP}}$, and $\sf{EXP^{poly-NEXP}}$.

\section{Properties of Strong Sequential Equilibrium}
\seclabel{strongse}
%\subsection{Properties of Strong Sequential Equilibrium}\label{sec:property}

In this section, we prove several important
properties of strong sequential equilibrium, which make
it a good candidate solution concept in designing extensive-form mechanisms.
%of independent interest.
%In this section, a strategy profile
%$s$ of the players can be pure or behavioral.\footnote{Even though for ncRIP, we restrict
%ourselves to pure strategies, the results about the solution concept of strong sequential equilibrium
%are true for behavioral strategies as well.}

%First,
%we show that a strong sequential equilibrium

%, especially suitable to mechanism design settings.

\paragraph*{Strong sequential equilibrium admits a sequential equilibrium}
We first show that,
given a strategy profile ${s}$ that is a strong sequential equilibrium (thus does not rely on a belief system),
we can construct
a belief system ${\mu}$ such that the pair $({s}, {\mu})$
forms a sequential equilibrium.
%In other words,
%strong sequential equilibrium refines (the ``strategy part'' of---since SSE does not have beliefs) sequential equilibrium. \shikha{sentence was unclear to sam and he changed it but it didnt sound good so we decided easiest way is to cut it, we are not losing much.
%for any strong sequential equilibrium (that does not rely on a belief
%system), we can construct a belief system such that together they form a sequential equilibrium.

\begin{lemma}\lemlabel{lem:SE}
For any strategy profile ${s}$ that is a strong sequential equilibrium, there exists
a belief system ${\mu}$ such that $({s}, {\mu})$ is a sequential equilibrium.
\end{lemma}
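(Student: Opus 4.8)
The plan is to unpack the definition of sequential equilibrium~\cite{kreps1982sequential} into its two requirements and exploit the fact that strong sequential equilibrium is engineered to make one of them trivial. Recall that an assessment $(s,\mu)$ is a sequential equilibrium if it is \emph{sequentially rational} (at every information set, $s$ prescribes a continuation best response against $s_{-i}$ given $\mu$) and \emph{consistent} (there is a sequence of completely mixed profiles $s^k \to s$ whose Bayes-derived belief systems $\mu^k$ converge to $\mu$). Since $s$ is already given as an SSE, sequential rationality will come essentially for free once we have \emph{any} admissible belief system; so almost all of the work goes into manufacturing a consistent $\mu$.

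For the construction I would fix the input $x$, which (as the message length $\ell(n)$ and round count $k(n)$ are polynomial) makes the game tree finite. I would then take any sequence of completely mixed behavioral profiles converging to $s$ --- concretely, at each information set $I$ mix the pure action $s_i(I)$ with weight $1-\epsilon_k$ against the uniform distribution on $A(I)$ with weight $\epsilon_k$, where $\epsilon_k \to 0$. Because each $s^k$ is completely mixed, every information set is reached with positive probability, so Bayes' rule defines a full belief system $\mu^k$ at every information set. The space of belief systems is a finite product of probability simplices and hence compact, so I would pass to a convergent subsequence and set $\mu := \lim_k \mu^k$. By construction the pair $(s,\mu)$ is then consistent in the Kreps--Wilson sense.

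It remains to check sequential rationality, and here the SSE property (\secref{sse}) does everything. At an information set $I$ that is \emph{reachable} under $s$, the limit beliefs $\mu_I$ coincide with the Bayes beliefs $\mu_I^s$ derived from $s$: since $I$'s reaching probability under $s$ is bounded away from zero, the mass contributed by the vanishing trembles washes out in the limit --- a routine continuity argument for Bayes' rule. Strong sequential rationality then already guarantees that $s_i$ is a continuation best response to $s_{-i}$ with respect to $\mu_I^s = \mu_I$. At an \emph{unreachable} $I$, the defining clause of SSE states that $s_i$ is a best response for \emph{any} belief distribution on $I$ (equivalently, a best response conditioned on each individual history $h\in I$), so in particular it is a best response against the limit belief $\mu_I$. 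Because the game has perfect recall, I can verify best responses information-set-by-information-set and appeal to the one-shot-deviation principle for sequential rationality~\cite{hendon1996one} to conclude that $s$ is fully sequentially rational against $\mu$. Thus $(s,\mu)$ is both consistent and sequentially rational, hence a sequential equilibrium.

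The main --- and really the only --- obstacle is the consistency construction, which forces us to pass through trembling sequences and a compactness argument rather than defining $\mu$ directly. Sequential rationality is immediate precisely because SSE demands a best response for \emph{every} admissible belief, so whatever consistent belief system we produce is automatically supported; the sole point requiring care is matching the limit beliefs with the Bayes beliefs at reachable sets, which I would dispatch with the standard observation that trembles form a vanishing fraction of the probability reaching a node that already has positive probability under $s$.
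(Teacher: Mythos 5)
Your proposal is correct and follows essentially the same route as the paper's proof: both construct a sequence of completely mixed trembles converging to $s$, derive Bayes beliefs along that sequence, take a limit to obtain a consistent belief system $\mu$, and then get sequential rationality for free from the two clauses of SSE (limit beliefs coincide with Bayes beliefs from $s$ at reachable sets via continuity, and SSE's ``any belief'' clause covers unreachable sets). The only differences are cosmetic: the paper computes the limit beliefs explicitly via a degree-in-$\epsilon$ analysis of the reaching probabilities rather than extracting them by compactness of the belief simplices as you do, and your appeal to the one-shot-deviation principle is redundant (though harmless), since the SSE definition already asserts full continuation best responses at every information set.
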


\begin{proof}
The sequential-rationality requirement
%fact that ${s}$ is sequentially rational
will follow easily from the definition of SSE.
To
prove that $s$ admits a sequential equilibrium,
the key is to  pair it with a consistent belief
system; see~\secref{ncmrip} for definition.
Indeed, we construct a belief system $\mu$ and show that,
 there exists a
sequence of pairs $(s^\epsilon, \mu^\epsilon)_{\epsilon\rightarrow 0}$ which converges to $(s,\mu)$,
as $\epsilon$ goes to $0$,
where each $s^\epsilon$ is a profile of
completely mixed behavioral strategies
 and each
 $u^\epsilon$ is the belief system
derived from $s^\epsilon$ using Bayes' rule.

Recall that a strategy profile ${s}$ defines a probability distribution over the actions available to
a player at an information set where he acts.
That is, for each information set $I_i$ of a player $i$, $s_i(I_i)$
is a probability
distribution
over $A(I_i)$, the set of actions available to player $i$ at $I_i$.
In particular, if $A(I_i) = (a_1, \ldots, a_k)$,
then $s_i(I_i) = (p_i(a_1), \ldots, p_i(a_k))$ where
$p_i(a_\ell)$ is the probability that player~$i$
chooses action $a_\ell$ at $I_i$.
Let $A^+(I_i)$ and
$A^0(I_i)$
be the set of actions at information set $I_i$
which player $i$ chooses with
positive probability and zero probability respectively;
that is, $A^+(I_i) =
\{a_\ell \in A(I_i) \mbox{ $|$ } p_i(a_\ell) >0\}$ and
%be the other actions at $I_i$ which he plays with zero probability, that is,
$A^0(I_i) = A(I_i) \setminus A^+(I_i)$.
For any $\epsilon\in (0, 1)$, we define $s^\epsilon_i$ for player $i$ at information set $I_i$ as follows:
if $A^0(I_i) = \emptyset$ then $s^\epsilon_i(I_i) = s_i(I_i)$; otherwise,
\[
s^\epsilon_i(I_i)(a_\ell) =
\left\{
	\begin{array}{ll}
 (1-\epsilon) \cdot p_i(a_\ell) & \mbox{for each $a_\ell \in A^+(I_i)$}; \\
 \frac{\epsilon}{|A^0(I_i)|} & \mbox{for each $a_\ell\in A^0(I_i)$}.
% 0 & \mbox{for all $a_\ell \in A^0(I_i) /\ a_u$}
\end{array}
\right.
\]
By construction,
$s^\epsilon_i(I_i)$ is a valid probability distribution over
$I_i$ and
is completely mixed, that is, assigns a positive probability to every
action in $I_i$.
Indeed, because $\sum_{\ell = 1}^k p_i(a_\ell) = \sum_{a_\ell \in A^+(I_i)} p_i(a_\ell) = 1$,
when $A^0(I_i)\neq \emptyset$
we have
$\sum_{a_\ell \in A(I_i)} s^\epsilon_i(I_i)(a_\ell) = \sum_{a_\ell \in A^+(I_i)} (1-\epsilon) p_i(a_\ell) + \epsilon = 1$.
It is easy to see that $s^\epsilon_i$ converges to $s_i$ when $\epsilon \rightarrow 0$.

Given the strategy profile $s^{\epsilon}$,
to define $\mu_i^\epsilon$, the belief system of a player $i$,
consider an arbitrary information set $I_i$
% = \{h_1, h_2, \ldots, h_t\}$
 where player $i$ acts.
The probability
that a particular history $h= (a^1, \ldots, a^K)\in I_i$ occurs
can be derived from $s^\epsilon$ as follows.
For any history $h'=(a^1, \ldots, a^w)$ with $0\leq w\leq K-1$,
recall that $Z(h')$
is the player acting following
%function which determines which player moves after
history $h'$.
For any
action $a \in A(h')$,
let $s_{Z(h')}^\epsilon (h')(a)$
%\triangleq s_i^\epsilon(I_i)(a)$
denote the probability
assigned by $s_{Z(h')}^\epsilon$ to
action $a$ at history $h'$ (i.e., at the information set containing $h'$).
%, where $h = a^0 = \phi$ denotes the start of the game.
We have
%$$\mu^\epsilon_i(I_i)(h) \triangleq
\[
\prob{h \mbox{ occurs under } s^\epsilon} = \prod_{w=0}^{K-1} s_{Z(a^1, \ldots, a^w)}^\epsilon(a^1, \ldots, a^w)(a^{w+1})
= c_h \epsilon^{e_h} (1-\epsilon)^{f_h},
\]
where $c_h, e_h$ and $f_h$ are positive constants depending on $s$ and $h$, but not on $\epsilon$.
In particular, letting $S^0$
be the set of actions
%$e_h$ is the number of actions
$a^{w+1}$
in $h$ that are assigned
zero probability by $s_{Z(h')}$
at history $h' = (a^1,\dots, a^{w})$,
% ---we denote this set of actions by;
we have $e_h = |S^0|$.
$f_h$ is the number of actions $a^{w+1}$ in $h$ such that
$a^{w+1}$ is not in $S^0$
% assigned positive probability by $s_{Z(h')}$
%at history $h' = (a^1,\dots, a^{w})$,
but $s_{Z(h')}$ is not completely mixed at $h'$ either.
%Moreover, letting $S^+$ be the set of actions $a^{w+1}$ in $h$ that are not in $S^0$,
%assigned zero probability by $s$
%and $S^+$ b
Finally,
\[
c_h = \prod_{\substack{0\leq w\leq K-1\\ a^{w+1}\notin S^0}} s_{Z(a^1,\dots, a^w)}(a^1,\dots, a^w)(a^{w+1}) ~\cdot~ \prod_{\substack{0\leq w\leq K-1\\ a^{w+1}\in S^0}} \frac{1}{|A^0(a^1,\dots, a^w)|},
\]
where the second term is defined to be 1 if $S^0 = \emptyset$.
Note that $\prob{h \mbox{ occurs under } s^\epsilon}>0$ for every $h\in I_i$.

% \in I_i$.
The probability that the information set $I_i$ is reached under $s^\epsilon$ is
$\mathcal{P}(I_i) \triangleq  \sum_{h \in I_i} \prob{h \mbox{ occurs under } s^\epsilon} = \sum_{h\in I_i} c_h \epsilon^{e_h} (1-\epsilon)^{f_h}>0$.
Then
$\mathcal{P}(I_i)$ can be written as a polynomial in $\epsilon$, that is, $\mathcal{P}(I_i) = b_0 + b_1 \epsilon + b_2 \epsilon^2 + \ldots + b_r \epsilon^r$, where
the coefficients $b_0, \ldots, b_r$ may be zero, positive or negative.
Following Bayes' rule, for any history $h\in I_i$,
$$\mu^\epsilon_i(I_i)(h) = \frac{c_h \epsilon^{e_h} (1-\epsilon)^{f_h}}{\mathcal{P}(I_i)} =
\frac{c_h \epsilon^{e_h} (1-\epsilon)^{f_h}}{ b_0 + b_1 \epsilon + b_2 \epsilon^2 + \ldots + b_r \epsilon^r}>0.$$

To define the belief system $\mu$, let $d$ be the minimum degree of $\epsilon$ in $\mathcal{P}(I_i)$ such that
$b_d\neq 0$.
As the minimum degree of $\epsilon$ in each
term $c_h \epsilon^{e_h} (1-\epsilon)^{f_h}$ is $e_h$ with coefficient $c_h>0$,
we have $d = \min_{h\in I_i} e_h$ and $b_d = \sum_{h\in I_i, e_h=d} c_h >0$.
For any $h\in I_i$, we define $\mu_i(I_i)(h) = {c_h}/{b_d}(>0)$ if $e_h =d$, and $\mu_i(I_i)(h) = 0$ if $e_h > d$.
It is easy to see that $\mu_i(I_i)$ is a probability distribution over $I_i$.
Moreover, $\lim_{\epsilon \rightarrow 0} \mu^\epsilon_i(I_i)(h) = c_h/b_d$ when $e_h = d$, and
$\lim_{\epsilon \rightarrow 0} \mu^\epsilon_i(I_i)(h) = 0$ when $e_h>d$.
Thus,
$\lim_{\epsilon \rightarrow 0} \mu^\epsilon_i(I_i)(h) = \mu_i(I_i)(h)$ for any player $i$, information set $I_i$ of $i$ and history $h\in I_i$,
and $\mu^\epsilon$ converges to $\mu$ as $\epsilon\rightarrow 0$.
Since $s^{\epsilon}$ converges to $s$ as we have seen, $s$ and $\mu$
are consistent.
%We now show that $(s^\epsilon, \mu^\epsilon)$ converges to $( s, \mu)$ when $\epsilon \rightarrow 0$.
%,
% where $\mu^\epsilon$ is derived from $s^\epsilon$ using
%Bayes rule.
%We write the expression for $\mu^\epsilon(I_i)(h)$ for $h \in I_i$. (Recall that $d = \min\{ x \mbox{ $|$ } b_x > 0\}$.)
%
%\begin{align*}
%%\mu^\epsilon(I_i)(h) &= \frac{c_h \epsilon^{e_h} (1-\epsilon)^{f_h}}{\mathcal{P}} =
%%\frac{c_h \epsilon^{e_h} (1-\epsilon)^{f_h}}{ b_0 + b_1 \epsilon + b_2 \epsilon^2 + \ldots + b_r \epsilon^r}\\
%\lim_{\epsilon \rightarrow 0} \mu^\epsilon(I_i)(h) &=
%\left\{
%	\begin{array}{ll}
%        c_h/b_d & \mbox{when $e_h =d$}\\
%        0 & \mbox{when $e_h \ge d$}
%\end{array}
%\right\} = \mu(I_i)(h)
%\end{align*}
%Furthermore, $s^\epsilon$ converges to ${s}$ by definition.

For sequential rationality, the only thing we need to show is that, at a reachable information set, the belief specified by $\mu$
is derived from $s$ using Bayes' rule.
To do so,
consider an arbitrary player $i$ and an information set $I_i$ of $i$ that is reachable by $s$.
By definition, there exists $h\in I_i$ such that $e_h = 0$, thus $d = 0$ for $\mathcal{P}(I_i)$ and $b_0 = \sum_{h\in I_i, e_h=0} c_h$.
%$I_i$ is reachable by each $s^\epsilon$ as well, and $\mu_i^\epsilon(I_i)$ is the distribution derived from $s^{\epsilon}$ using Bayes' rule.
Therefore $\mu_i(I_i)$ is indeed the probability distribution derived from $s$ using Bayes' rule.
Sequential rationality of $s$ (with respect to $\mu$) then follows from the definition of SSE.
%
%As $\mu_i^\epsilon(I_i)$ converges to $\mu_i(I_i)$
%
%Since $s$ is sequentially rational, we have that
Thus $({s}, \mu)$
is a sequential equilibrium.
\end{proof}

\paragraph*{Alternate definition of strong sequential equilibrium}
The notion of strong sequential equilibrium requires
that at any unreachable information set, regardless of the belief
the acting player holds at that set, his action should be a best response to that belief and the other players' strategies.
We now give an equivalent definition
of SSE, which
says that a player's
strategy
at an unreachable
information set should be optimal following {\em every history} in that information set.
This definition is more convenient when proving that a strategy profile is an SSE.

\begin{definition}
\deflabel{strong-se-2}
A strategy profile $s$ is a {\em strong sequential
equilibrium} if for every player $i$ and information set $I_i$ of $i$,
we have:
\begin{itemize}[noitemsep,nolistsep,leftmargin=*]
\item {\bf At reachable information sets $I$:} conditional on $I_i$ being reached,
player $i$'s strategy $s_i$ is a best response to $s_{-i}$, given $i$'s beliefs at $I_i$ being derived from
%optimal at $I_i$,
$s$ using Bayes' rule.
%) and other players' strategies $s_{-i}$.

\item {\bf At unreachable information sets $I_i$:} for every history $h\in I_i$, conditional on $I_i$ being reached,
player $i$'s strategy $s_i$ is a best response to $s_{-i}$, given $i$'s belief that he is at $h$ with probability 1.
% at every history $h$ in $I_i$,
%given other players' strategies $s_{-i}$.
\end{itemize}
\end{definition}

We now prove the equivalence of the two definitions of SSE in the following lemma.
W.l.o.g., $s$ is a profile of pure strategies.
%that players only play pure strategies at unreachable information sets?}

\begin{lemma}
\lemlabel{unreachable}
For any strategy profile ${s}$, any player $i$ and information set $I_i$ of $i$ that is not reached with positive probability under ${s}$,
conditional on $I_i$ being reached,
$s_i$ is a best response to $s_{-i}$ with respect to all possible beliefs that player $i$ may hold at $I_i$
if and only if for every history $h\in I_i$,
$s_i$ is a best response
to $s_{-i}$ given $i$'s belief that he is at $h$ with probability 1.
%optimal for every history $h \in I_i$.
\end{lemma}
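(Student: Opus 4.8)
The plan is to reduce the entire equivalence to the elementary fact that an affine function on a probability simplex dominates another one everywhere if and only if it does so at each vertex, the vertices here being the point-mass beliefs. First I would fix the player $i$ and the unreachable information set $I_i = \{h_1, \dots, h_m\}$, and for every alternative (pure) strategy $s_i'$ of $i$ and every history $h_\ell \in I_i$ introduce the continuation payoff $\hat{u}_i(s_i' \mid h_\ell)$: the expected utility of $i$ conditioned on being exactly at $h_\ell$, when $i$ plays $s_i'$, the others play $s_{-i}$, and the expectation is over Nature's moves in the subtree rooted at $h_\ell$. Since the game is a tree, conditioning on $h_\ell$ fixes the path from the root, so $\hat{u}_i(s_i' \mid h_\ell)$ is well-defined and does not depend on any belief. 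The crucial observation I would establish at the outset is that, by the law of total expectation, for any belief $\mu = (\mu(h_1), \dots, \mu(h_m))$ on $I_i$ the conditional expected utility decomposes as $\hat{u}_i(s_i' \mid \mu) = \sum_{\ell=1}^m \mu(h_\ell)\, \hat{u}_i(s_i' \mid h_\ell)$; that is, it is affine in $\mu$, with the belief merely reweighting the fixed continuation payoffs.

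With this linearity in hand, both directions are short. For the forward direction ($\Rightarrow$) I would specialize the ``all beliefs'' hypothesis to the point-mass belief $\delta_{h_\ell}$ placing probability $1$ on a single $h_\ell$: since $\delta_{h_\ell}$ is a legitimate belief and $\hat{u}_i(s_i' \mid \delta_{h_\ell}) = \hat{u}_i(s_i' \mid h_\ell)$, the best-response inequality at $\delta_{h_\ell}$ is precisely the statement that $s_i$ is a best response given that $i$ is at $h_\ell$ with probability $1$; ranging over all $h_\ell$ yields the per-history condition. For the reverse direction ($\Leftarrow$) I would take the per-history inequalities $\hat{u}_i(s_i \mid h_\ell) \ge \hat{u}_i(s_i' \mid h_\ell)$, multiply each by $\mu(h_\ell) \ge 0$, and sum; by the affine decomposition this gives $\hat{u}_i(s_i \mid \mu) \ge \hat{u}_i(s_i' \mid \mu)$ for every deviation $s_i'$, so $s_i$ is a best response under the arbitrary belief $\mu$. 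I would note that the quantifier swap is harmless: ``best response at all beliefs'' means for every $s_i'$ the affine function $\hat{u}_i(s_i \mid \cdot)$ dominates $\hat{u}_i(s_i' \mid \cdot)$ on the whole simplex, which holds iff it dominates at every vertex, i.e.\ at every point mass.

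The step deserving the most care---and really the only nontrivial point---is justifying the affine decomposition of the conditional expected utility, since everything else is just taking convex combinations over the extreme points $\delta_{h_1}, \dots, \delta_{h_m}$ of the belief simplex. Here I would invoke the tree structure and the perfect-recall property of ncRIP games established in Appendix~\secref{extensive}: perfect recall ensures that ``best response conditioned on reaching $I_i$'' is a coherent notion and that the continuation from each $h_\ell$ under $(s_i', s_{-i})$ is a well-defined subtree whose payoff is independent of how probability mass is spread across $I_i$. Given that, the remainder of the argument is purely the observation that a linear inequality on a simplex holds throughout if and only if it holds at all vertices, which completes both implications and hence the stated equivalence.
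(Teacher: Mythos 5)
Your proposal is correct and follows essentially the same route as the paper's own proof: the ``only if'' direction specializes to point-mass beliefs (which the paper notes is immediate since such a point mass is itself an admissible belief), and the ``if'' direction uses exactly the decomposition $u_i(s_i, s_{-i}\mid\mu_i(I_i)) = \sum_{k} \mu_i(I_i)(h_k)\, u_i(s_i, s_{-i}\mid h_k)$ and takes the convex combination of the per-history inequalities. Your explicit framing of this as affine domination on a simplex being equivalent to domination at its vertices is a clean way of packaging the same argument.
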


\begin{proof}
The ``only if'' part is immediate, because for any history $h\in I_i$, ``at $h$ with probability~1 (and any other history with probability 0)'' is a specific belief that $i$ may hold at $I_i$.

The ``if'' part is also easy to show.
%Given ${s}$, consider a player $i$ and an information set $I_i$ that is not reached with positive probability under ${s}$. Suppose
%$s_i(I_i)$ is optimal for any arbitrary belief that player $i$ may hold at information set $I_i$. By
%definition $s_i(I_i)$ is optimal for the belief that any particular history $h_j \in I_i$ happens with probability $1$, that is,
%for $\mu(I_i)(h_j) = 1$ and $\mu(I_i)(h_k)=0$ for all $h_k \in I_i$, $k \neq j$. Thus, $s_i(I_i)$ is optimal for all $h \in I_i$.
%To prove the other direction, s
Suppose that $s_i$ is a best response to $s_{-i}$ conditional on every history $h \in I_i$ (i.e., at $h$ with probability 1).
To show that $s_i$ is
a best response to $s_{-i}$ conditional on all possible beliefs
%optimal for any arbitrary belief
player $i$ may hold at information set $I_i$,
arbitrarily fix
%assume there exists
a belief $\mu_i(I_i)$ over $I_i$
%such that
and a strategy $s_i'$.
%different strategy of player $i$ following $I_i$,
%can deviate to
%$s_i'(I_i)$.
% and obtain a better utility given $s_{-i}$ and his belief $\mu_i(I_i)$.
% and the other players' strategies $s_{-i}$.
Let $I_i = \{h_1, h_2, \ldots, h_m\}$
% where each $h_i$ is a possible history that could have lead to $I_i$.
and $\mu_i(I_i) = (\mu_i(I_i)(h_1), \mu_i(I_i)(h_2),
\ldots, \mu_i(I_i)(h_m))$,
 where $\mu_i(I_i)(h_k)$ is the probability with which player $i$ believes that history $h_k$ occurs conditional on $I_i$ being reached.
Then, player $i$'s expected utilities under $s_i$ and $s_i'$ respectively,
conditioned on $I_i$, $\mu_i(I_i)$ and $s_{-i}$, are
\[ u_i(s_i, s_{-i}|\mu_i(I_i)) = \sum_{k =1}^m \mu_i(I_i)(h_k) \cdot u_i( s_i, s_{-i} | h_k) \mbox{ and } u_i(s'_i, s_{-i}|\mu_i(I_i)) = \sum_{k =1}^m \mu_i(I_i)(h_k) \cdot u_i( s_i', s_{-i} | h_k),\]
where $u_i( s_i, s_{-i} | h_k)$ is player $i$'s utility under $(s_i, s_{-i})$, conditioned on history $h_k$
being reached at $I_i$.
Since $s_i$ is a best response to $s_{-i}$ at every $h_k \in I_i$, we have
$u_i(s_i, s_{-i} | h_k) \geq u_i(s'_i, s_{-i} |h_k) \ \forall k \in \{1, \ldots, m\}.$
%
%Let $R_{\max}' = \max_k \mathcal{R} \left( s'(I_i)(h_k), s_{-i} \right)$. If $R_{\max}'$ is due to history $h_{k^*}$, then
%let $R_{\max} = \mathcal{R} \left( s(I_i)(h_{k^*}), s_{-i} \right)$, then we have,
%Since $s(I_i)$ is optimal for every history, $s'(I_i)$ cannot achieve a better expected reward based on $\mu_i$:
%\begin{align*}
%\sum_{k =1}^m \mu_i(h_k) \cdot \mathcal{R} \left( s'(I_i), s_{-i} |h_k \right)
%&\le \sum_{k \neq k^*} \mu_i(h_k) \cdot \left(s(I_i),s_{-i} | h_k \right)\\
%%&= R_{\max}' \cdot 1 \le R_{\max} = \mathcal{R} \left( s(I_i)(h_{k^*}), s_{-i} \right).
%\end{align*}
Thus $u_i(s_i, s_{-i}|\mu_i(I_i)) \geq u_i(s'_i, s_{-i}|\mu_i(I_i))$ and the ``if'' part holds.
%if $s_i(I_i)$ is optimal for any arbitrary belief that player $i$ may hold at information set $I_i$ if
%it is optimal for every $h \in I_i$.
\end{proof}

\paragraph*{One-shot deviation for strong sequential equilibrium}
Informally, the one-shot deviation principle says that a player cannot
change his action at a single information set (without changing the
rest of his strategy) and improve his expected reward.

In the context of sequential equilibrium, it is well
known that given a consistent belief system $\mu$, $(s,
\mu)$ is a sequential equilibrium if and only if the {\em one-shot deviation
principle holds}, that is, no player $i$ has an information set $I_i$ at which
a change in $s_i(I_i)$---holding the remainding of $s_i$ fixed---increases
his expected utility conditional on reaching $I_i$~\cite{osborne1994course,
hendon1996one}.

Since strong sequential equilibrium does not require artificial notion of beliefs for unreachable information sets,
%we
%cannot argue in terms of {\em expected payoff} conditioned on reaching those information sets. Thus,
we define a stronger
notion of one-shot deviation at those information sets---
for every decision node (i.e., history) in an unreachable information set of player $i$, there does not
exist a one-shot deviation at that node which improves player $i$'s utility conditional on that node being reached. Note that at reachable
information sets, both the definition and proof of the one-shot deviation condition for SSE are exactly the same as in SE~\cite{hendon1996one}.

%We state the one-shot deviation principle for strong sequential equilibrium formally.

\begin{lemma}[One-shot deviation for strong sequential equilibrium]
\lemlabel{one-shot}
For any strategy profile~${s}$, ${s}$ is a strong sequential equilibrium if and only if
it satisfies the following {\em one-shot deviation principle}:
For every player $i$ and every information set $I_i$ of $i$,

\begin{itemize}[noitemsep,nolistsep,leftmargin=*]
\item {\bf If $I_i$ is reachable under ${s}$:} there does not exist a change in $s_i(I_i)$ (holding
the rest of $s_i$ fixed) that increases player $i$'s expected utility conditional on reaching $I_i$,
given his belief at $I_i$ derived using
Bayes' rule.
\item {\bf If $I_i$ is unreachable under ${s}$:} for every history $h \in I_i$, there does not exist a change in $s_i(I_i)$ (holding the rest of
$s_i$ fixed) that increases player $i$'s expected utility conditional on reaching $h$.
\end{itemize}
\end{lemma}

\begin{proof}
The ``only if'' part follows immediately from \defref{strong-se-2} and the fact that a one-shot deviation results in a different strategy for the deviating player.
%If ${s}$ is a strong sequential equilibrium, then by definition no player can deviate at any reachable information set
%to improve his expected payoff conditioned on reaching that information set or deviate at any node of an unreachable information set to improve his payoff
%conditioned on reaching that node.
We now prove the ``if'' part,
%We will prove the other direction,
that is, if $s$ satisfies the one-shot deviation principle then it is a strong sequential equilibrium.
%Assume for the sake of contradiction that
%$s$ is not an SSE.
%Let player $i$ and information set $I_i$ of $i$ be where \defref{strong-se-1} is violated.
%Arbitrarily fix a player $i$ and an information set $I_i$ of $i$.

{\bf Reachable information sets.}
First of all, similar to the proof of \lemref{lem:SE},
we can construct a belief system $\mu$ such that $s$ and $\mu$ are consistent.
Indeed, the construction of $\mu$ only depends on the actions taken by $s$ and does not depend on the utilities induced by $s$ at all.
Since $s$ satisfies the one-shot deviation principle at every reachable information set and at every history in each unreachable information set,
it is not hard to see that $s$ satisfies the one-shot deviation principle with respect to $\mu$.
Thus $(s, \mu)$ is a sequential equilibrium.
Accordingly, for any player $i$ and information set $I_i$ of $i$ that is reachable by $s$,
$s_i$ is a best response to $s_{-i}$ conditional on $\mu_i(I_i)$ (which is derived from $s$ using Bayes' rule at $I_i$),
as desired by the definition of SSE.

%% is a sequential equilibrium,
%and $\mu_i(I_i)$ is derived from $s$ using Bayes' rule.
%Following the one-shot deviation principle for SE \cite{hendon1996one},
%there does not exist a change in $s_i(I_i)$ (holding
%the rest of $s_i$ fixed) that increases player $i$'s expected utility conditional on $I_i$ and $\mu_i(I_i)$, as desired.
%his belief at $I_i$ derived using
%Bayes' rule.

%The
%proof, that there does not exist a change in $s_i(I_i)$ (holding rest of $s_i$ fixed) that improves $i$'s payoff given $i$'s Bayesian
%beliefs at $I_i$ implies that ${s}$ is a strong sequential equilibrium, is analogous to the proof of one-shot deviation for sequential
%equilibrium; see~.
%In particular, we use backward induction to prove $\tilde{s}$ is a sequentially rational (note that
%consistency follows from the fact that beliefs at $I_i$ are calculated using Bayes rules).
%Suppose $s_i$ is optimal conditioned on reaching information sets $I_i'$ that immediately follow $I_i$. We show that $s_i$ is optimal
%conditioned on $I_i$. Suppose it is not, that is, player $i$ deviates to a strategy $s_i'$. Let $\tilde{s}' = (s_i', s_{-i})$. Then,
%the expected payoff of $s_i'$ conditioned on reaching $I_i$ is the sum of the expected reward

{\bf Unreachable information sets.}
Next, we use backward induction to show that, for any player $i$, information set $I_i$ of $i$ that is unreachable by $s$,
and history $h\in I_i$,
%Arbitrarily fixing a history
%Suppose that for every history
%$h \in I_i$,
%we have that there does not exist a change in $s_i(I_i)$ (holding the rest of
%$s_i$ fixed) that increases player $i$'s payoff conditioned on reaching $h$.
%we use backward induction to show that
$s_i$ is a best response to $s_{-i}$ conditional on reaching $h$.
To begin with,
if $h$ is of height 1 then this immediately holds: indeed, the strategy induced by $s_i$ following $h$ is exactly the action $s_i(I_i)$,
thus the one-shot deviation principle implies that $s_i$ is a best response to $s_{-i}$ at $h$.
%history $h \in I_i$.

Now, arbitrarily fix a player $i$, information set $I_i$ of $i$ unreachable by $s$, and a history $h\in I_i$ of height larger than 1.
By induction, assume that for any
%player $j$,
information set $I'_i$ of $i$ unreachable by $s$, and history $h'\in I'_i$ of height smaller than that of $h$,
$s_i$ is a best response to $s_{-i}$ at $h'$.
%
%Suppose $s_i$ is optimal conditioned on reaching
%any history $h' \in I_i'$, where $I_i'$ is an information set that immediately follows $I_i$.
%We show that $s_i$ is optimal conditioned on reaching any
%history $h \in I_i$.
For the sake of contradiction, suppose
player $i$ can deviate to strategy $s_i'$ and increase his utility conditional on reaching $h$,
that is,
$$u_i(s_i', s_{-i}|h)> u_i(s_i, s_{-i}|h).$$

If $s'_i(I_i)=s_i(I_i)$, consider the first history $h'$ following $h$ where player $i$ acts and $s_i'$ differs from $s_i$.
As $h$ is unreachable by $s$, $h'$ is unreachable by $s$ as well.
However, the height of $h'$ is smaller than that of $h$ and $u_i(s_i', s_{-i}|h') = u_i(s_i', s_{-i}|h)> u_i(s_i, s_{-i}|h) = u_i(s_i, s_{-i}|h')$,
contradicting the inductive hypothesis.
Thus we have
$$s'_i(I_i)\neq s_i(I_i).$$

If $s'_i$ is the same as $s_i$ at all the histories following $(h, s'_i(I_i))$ where player $i$ acts,
then the one-shot deviation principle is violated.
Accordingly, there must exist a history following $(h, s'_i(I_i))$, where player $i$ acts and $s'_i$ differ from $s_i$.
Letting $h'$ be the first such history, we have that the height of $h'$ is smaller than that of $h$.
Since $h'$ is unreachable by $s$, by the inductive hypothesis we have that $s_i$ is a best response to $s_{-i}$ at $h'$. Thus
$u_i(s_i, s_{-i}|h')\geq u_i(s'_i, s_{-i}|h')$.
As $u_i(s'_i, s_{-i}|h') = u_i(s'_i, s_{-i}|h)> u_i(s_i, s_{-i}|h)$,
we have
$$u_i(s_i, s_{-i}|h')> u_i(s_i, s_{-i}|h).$$

Let strategy $s''_i$ be such that, it follows $s_i$ till history $h$, then follows
action $s'_i(I_i)$, then follows $s'_i$ (and $s_i$ as well, because they are the same after $(h, s'_i(I_i))$ and before $h'$)
till history $h'$,
% $s'_i$ till history $h'$,
 and then follows $s_i$ for the rest.
 Note that $s''_i$ can be obtained from $s_i$ by a one-shot deviation from $s_i(I_i)$ to $s'_i(I_i)$.
 %at $h'$ and all histories following $h'$.
However,
$$u_i(s''_i, s_{-i}|h) = u_i(s''_i, s_{-i}|h') = u_i(s_i, s_{-i}|h')> u_i(s_i, s_{-i}|h),$$
contradicting the one-shot deviation principle.
Therefore $s_i$ is a best response to $s_{-i}$ conditional on reaching $h$, as desired.

Combining everything together, by~\defref{strong-se-2},
 %and \lemref{unreachable},
 ${s}$ is an SSE and \lemref{one-shot} holds.
\end{proof}

\paragraph*{Verifying strong sequential equilibrium}
Given an extensive-form game with arbitrary number of players, it is possible to decide whether %or not
a pair $(s, \mu)$ is a sequential equilibrium in time polynomial in the size of the game tree~\cite{gatti2012new}.

However, if only a strategy profile $s$ is given, then it is NP-hard to decide whether $s$ is part of an SE (that is, whether there exists a belief system
$\mu$ such that $(s, \mu)$ is an SE) \cite{hansen2010computational}.
As strong sequential equilibrium does not rely on belief systems, we prove the following.

\begin{lemma}
\lemlabel{verify-sse}
Given an extensive-form game and a strategy profile ${s}$ of the players, deciding whether
${s}$ is a SSE of the game can be done in time polynomial in the size of the game tree.
\end{lemma}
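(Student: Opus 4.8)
The plan is to reduce verification of SSE to a collection of purely \emph{local} optimality checks, one per information set, via the one-shot deviation principle (\lemref{one-shot}). This is exactly the point at which SSE behaves better than ordinary sequential equilibrium: deciding whether a given $s$ is part of \emph{some} SE is NP-hard~\cite{hansen2010computational} precisely because one must decide the existence of a consistent belief system, and the beliefs at unreachable information sets interact globally. For SSE no belief system needs to be guessed. At reachable sets the beliefs are forced by Bayes' rule, and at unreachable sets \lemref{unreachable} tells us that requiring optimality against \emph{all} beliefs is equivalent to requiring optimality conditioned on \emph{each individual history} in the set. Thus every quantity we must compare is a conditional expected utility under the single fixed profile $s$, with no belief system left as a free parameter.

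Concretely, I would run two sweeps over the game tree. First, a top-down sweep computes, for every history $h$, the reach probability $\Pr[h]$ under $s$ (the product along the root-to-$h$ path of the Nature probabilities and the $s$-prescribed action probabilities). An information set $I_i$ is reachable iff $\sum_{h\in I_i}\Pr[h]>0$, and at such sets the Bayes belief is $\mu_{I_i}(h)=\Pr[h]/\sum_{h'\in I_i}\Pr[h']$. Second, a bottom-up sweep computes, for every history $h$ and player $i$, the conditional expected utility $U_i(h)$ of $i$ given that $h$ is reached and play continues according to $s$ and Nature; this is standard backward induction on the subtree rooted at $h$, and it is well defined even when $\Pr[h]=0$ since it depends only on the subtree below $h$ and on the actions $s$ assigns there (which are determined by the information sets, not by how $h$ was reached, using perfect recall). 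Given these tables, the one-shot deviation test at $I_i$ is immediate. If $I_i$ is reachable I verify that no action $a\in A(I_i)$ satisfies $\sum_{h\in I_i}\mu_{I_i}(h)\,U_i((h,a))>\sum_{h\in I_i}\mu_{I_i}(h)\,U_i(h)$; if $I_i$ is unreachable I verify that no pair $(h,a)$ with $h\in I_i$ and $a\in A(I_i)$ satisfies $U_i((h,a))>U_i(h)$. The profile $s$ is declared an SSE iff every information set passes its test.

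Correctness is exactly \lemref{one-shot}: the conjunction of these inequalities is, by definition, the one-shot deviation principle for SSE, which holds iff $s$ is an SSE (and \defref{strong-se-2} guarantees the per-history form is the right condition at unreachable sets). For the running time, each sweep processes every node once, spending time proportional to its out-degree; computing the full table $(U_1(h),\dots,U_{p}(h))$ costs an extra factor equal to the number of players, which is at most the number of nodes; and the final test again spends, at each information set, time proportional to the number of contained histories times the out-degree. Summed over the tree this is polynomial in the size of the game tree, and in particular iterating over actions is within budget because the tree already enumerates a branch for each action. The one delicate step—and the main thing to argue carefully—is the treatment of unreachable information sets: I must justify both that $U_i(h)$ is meaningful at zero-probability histories and that comparing it action-by-action at each such $h$, rather than through any aggregate belief, faithfully captures the SSE requirement. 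This is precisely the content of \lemref{unreachable}, and it is what lets the verification sidestep the belief-system search that makes SE verification hard.
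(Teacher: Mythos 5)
Your proposal is correct and takes essentially the same route as the paper's proof: both reduce SSE verification to the one-shot deviation principle (\lemref{one-shot}), first marking each information set as reachable or not and computing Bayes beliefs at the reachable ones, then checking single-action deviations bottom-up---per history at unreachable sets (justified by \lemref{unreachable}/\defref{strong-se-2}) and against the Bayes belief at reachable sets. Your explicit top-down/bottom-up tabulation of reach probabilities and conditional continuation utilities $U_i(h)$ simply spells out the step the paper summarizes as ``it takes polynomial time to compute player $i$'s (expected) utility under $s$ following a decision node (or an information set).''
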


\begin{proof}
First of all, we can traverse the game tree in polynomial time,
 mark each information set whether it is reachable by $s$ or not, 
 and compute, for each player $i$ and each reachable information set $I_i$ of $i$,
the belief $\mu_i(I_i)$ derived from $s$ using Bayes' rule.
%given $s$, we mark information sets of players
%as reachable or unreachable and assign beliefs to nodes in every reachable information set in time polynomial
%in the size of the game tree. 
Next, we apply the one-shot deviation principle following \lemref{one-shot}.
% and backward induction.

To do so, we start from the bottom level of the tree and proceed up.
For every player $i$ and every information set $I_i$ of $i$, 
if $I_i$ is unreachable under $s$, then we go
through each $h \in I_i$ and each $a \in A(I_i)$, and check if changing $s_i(I_i)$ to $a$ improves $i$'s
utility conditional on reaching $h$. 
If so then 
%If such an $h$ exists, 
%we stop and output that 
$s$ is not an SSE.
%For any information set $I_i$ of
%player $i$, i
If $I_i$ is reachable under $s$, then we go through every $a \in A(I_i)$, and
check if changing $s_i(I_i)$ to $a$ improves $i$'s expected utility conditional on $I_i$ and $\mu_i(I_i)$.
If so then again 
% such an $I_i$ exists, we stop and out that
$s$ is not an SSE.
If all the checks above pass, then $s$ is an SSE.

Since this procedure goes through each decision node of the game tree at most once, and since it takes polynomial time to compute player $i$'s (expected) utility under $s$
following a decision node (or an information set), 
%and does at most polynomial work at any node,
deciding whether $s$ is an SSE takes polynomial time in the size of the tree.
%If such an information set exists,
%Given that $s$ is optimal conditioned on reaching any information set at level $\ell+1$, we check if it is
%optimal conditioned on reaching any information set at level $\ell$.
%
%For each information set at
\end{proof}

\bibliographystyle{abbrv}
%\bibliography{ncRIP}

\begin{thebibliography}{10}

\bibitem{azar2012rational}
P.~D. Azar and S.~Micali.
\newblock Rational proofs.
\newblock In {\em Proceedings of the Forty-Fourth Annual Symposium on Theory of
  Computing (STOC)}, pages 1017--1028, 2012.

\bibitem{azar2013super}
P.~D. Azar and S.~Micali.
\newblock Super-efficient rational proofs.
\newblock In {\em Proceedings of the Fourteenth Annual ACM conference on
  Electronic Commerce (EC)}, pages 29--30, 2013.

\bibitem{babai1985trading}
L.~Babai.
\newblock Trading group theory for randomness.
\newblock In {\em Proceedings of the Seventeenth annual ACM symposium on Theory
  of Computing (STOC)}, pages 421--429, 1985.

\bibitem{babai1991non}
L.~Babai, L.~Fortnow, and C.~Lund.
\newblock Non-deterministic exponential time has two-prover interactive
  protocols.
\newblock {\em Computational complexity}, 1(1):3--40, 1991.

\bibitem{banks1987equilibrium}
J.~S. Banks and J.~Sobel.
\newblock Equilibrium selection in signaling games.
\newblock {\em Econometrica: Journal of the Econometric Society}, pages
  647--661, 1987.

\bibitem{beigel1991bounded}
R.~Beigel.
\newblock Bounded queries to {SAT} and the {Boolean} hierarchy.
\newblock {\em Theoretical Computer Science}, 84(2):199--223, 1991.

\bibitem{ben1988multi}
M.~Ben-Or, S.~Goldwasser, J.~Kilian, and A.~Wigderson.
\newblock Multi-prover interactive proofs: How to remove intractability
  assumptions.
\newblock In {\em Proceedings of the Twentieth Annual ACM Symposium on Theory
  of Computing (STOC)}, pages 113--131, 1988.

\bibitem{bitansky2012succinct}
N.~Bitansky and A.~Chiesa.
\newblock Succinct arguments from multi-prover interactive proofs and their
  efficiency benefits.
\newblock In {\em Advances in Cryptology--CRYPTO 2012}, pages 255--272.
  Springer, 2012.

\bibitem{blumberg2014verifiable}
A.~J. Blumberg, J.~Thaler, V.~Vu, and M.~Walfish.
\newblock Verifiable computation using multiple provers.
\newblock {\em IACR Cryptology ePrint Archive}, 2014:846, 2014.

\bibitem{braun2013verifying}
B.~Braun, A.~J. Feldman, Z.~Ren, S.~Setty, A.~J. Blumberg, and M.~Walfish.
\newblock Verifying computations with state.
\newblock In {\em Proceedings of the Twenty-Fourth ACM Symposium on Operating
  Systems Principles}, pages 341--357, 2013.

\bibitem{buhrman1999quantum}
H.~Buhrman and W.~Van~Dam.
\newblock Quantum bounded query complexity.
\newblock In {\em Proceedings of the 14th Annual IEEE Conference on
  Computational Complexity}, pages 149--156, 1999.

\bibitem{cai1992games}
J.-y. Cai, A.~Condon, and R.~J. Lipton.
\newblock On games of incomplete information.
\newblock {\em Theoretical Computer Science}, 103(1):25--38, 1992.

\bibitem{campanelli2015sequentially}
M.~Campanelli and R.~Gennaro.
\newblock Sequentially composable rational proofs.
\newblock In {\em International Conference on Decision and Game Theory for
  Security}, pages 270--288, 2015.

\bibitem{campanelli2017efficient}
M.~Campanelli and R.~Gennaro.
\newblock Efficient rational proofs for space bounded computations.
\newblock In {\em International Conference on Decision and Game Theory for
  Security}, pages 53--73, 2017.

\bibitem{canetti2011practical}
R.~Canetti, B.~Riva, and G.~N. Rothblum.
\newblock Practical delegation of computation using multiple servers.
\newblock In {\em Proceedings of the 18th ACM conference on Computer and
  communications security}, pages 445--454, 2011.

\bibitem{canetti2012two}
R.~Canetti, B.~Riva, and G.~N. Rothblum.
\newblock Two 1-round protocols for delegation of computation.
\newblock In {\em International Conference on Information Theoretic Security},
  pages 37--61, 2012.

\bibitem{canetti2013refereed}
R.~Canetti, B.~Riva, and G.~N. Rothblum.
\newblock Refereed delegation of computation.
\newblock {\em Information and Computation}, 226:16--36, 2013.

\bibitem{chandra1976alternation}
A.~K. Chandra and L.~J. Stockmeyer.
\newblock Alternation.
\newblock In {\em 17th Annual Symposium on Foundations of Computer Science
  (FOCS)}, pages 98--108, 1976.

\bibitem{CMS17arxiv}
J.~Chen, S.~McCauley, and S.~Singh.
\newblock Rational proofs with multiple provers (full version).
\newblock {\em arXiv preprint arXiv:1504.08361}, 2015.

\bibitem{ChenMcSi16}
J.~Chen, S.~McCauley, and S.~Singh.
\newblock Rational proofs with multiple provers.
\newblock In {\em Proceedings of the Seventh Innovations in Theoretical
  Computer Science Conference (ITCS)}, pages 237--248, 2016.

\bibitem{ChenMcSi18}
J.~Chen, S.~McCauley, and S.~Singh.
\newblock Efficient rational proofs with strong utility-gap guarantees.
\newblock In {\em International Symposium on Algorithmic Game Theory (SAGT)},
  pages 150--162. Springer, 2018.

\bibitem{cho1987signaling}
I.-K. Cho and D.~M. Kreps.
\newblock Signaling games and stable equilibria.
\newblock {\em The Quarterly Journal of Economics}, 102(2):179--221, 1987.

\bibitem{duggan1998extensive}
J.~Duggan.
\newblock An extensive form solution to the adverse selection problem in
  principal/multi-agent environments.
\newblock {\em Review of Economic Design}, 3(2):167--191, 1998.

\bibitem{feige1997making}
U.~Feige and J.~Kilian.
\newblock Making games short.
\newblock In {\em Proceedings of the Twenty-Ninth Annual ACM Symposium On
  Theory of Computing (STOC)}, pages 506--516, 1997.

\bibitem{feige1992two}
U.~Feige and L.~Lov{\'a}sz.
\newblock Two-prover one-round proof systems: their power and their problems.
\newblock In {\em Proceedings of the Twenty-Fourth Annual ACM Symposium on
  Theory of Computing (STOC)}, pages 733--744, 1992.

\bibitem{feige1992multi}
U.~Feige and A.~Shamir.
\newblock Multi-oracle interactive protocols with constant space verifiers.
\newblock {\em Journal of Computer and System Sciences}, 44(2):259--271, 1992.

\bibitem{feige1990noisy}
U.~Feige, A.~Shamir, and M.~Tennenholtz.
\newblock The noisy oracle problem.
\newblock In {\em Proceedings of the Tenth Annual Conference on Advances in
  Cryptology (CRYPTO)}, pages 284--296, 1990.

\bibitem{feigenbaum1995game}
J.~Feigenbaum, D.~Koller, and P.~Shor.
\newblock A game-theoretic classification of interactive complexity classes.
\newblock In {\em Proceedings of Tenth Annual IEEE Structure in Complexity
  Theory Conference}, pages 227--237, 1995.

\bibitem{fortnow1994power}
L.~Fortnow, J.~Rompel, and M.~Sipser.
\newblock On the power of multi-prover interactive protocols.
\newblock {\em Theoretical Computer Science}, 134(2):545--557, 1994.

\bibitem{gatti2012new}
N.~Gatti and F.~Panozzo.
\newblock New results on the verification of nash refinements for
  extensive-form games.
\newblock In {\em Proceedings of the 11th International Conference on
  Autonomous Agents and Multiagent Systems-Volume 2}, pages 813--820, 2012.

\bibitem{glazer1996virtual}
J.~Glazer and M.~Perry.
\newblock Virtual implementation in backwards induction.
\newblock {\em Games and Economic Behavior}, 15(1):27--32, 1996.

\bibitem{goldwasser2008delegating}
S.~Goldwasser, Y.~T. Kalai, and G.~N. Rothblum.
\newblock Delegating computation: interactive proofs for muggles.
\newblock In {\em Proceedings of the Fortieth Annual ACM Symposium on Theory of
  Computing (STOC)}, pages 113--122, 2008.

\bibitem{IP}
S.~Goldwasser, S.~Micali, and C.~Rackoff.
\newblock The knowledge complexity of interactive proof systems.
\newblock {\em SIAM J. Comput.}, 18(1), 1989.

\bibitem{guo2014rational}
S.~Guo, P.~Hub{\'a}{\v{c}}ek, A.~Rosen, and M.~Vald.
\newblock Rational arguments: single round delegation with sublinear
  verification.
\newblock In {\em Proceedings of the Fifth Annual Conference on Innovations in
  Theoretical Computer Science (ITCS)}, pages 523--540, 2014.

\bibitem{guo2016rational}
S.~Guo, P.~Hub{\'a}{\v{c}}ek, A.~Rosen, and M.~Vald.
\newblock Rational sumchecks.
\newblock In {\em Theory of Cryptography Conference}, pages 319--351, 2016.

\bibitem{hansen2010computational}
K.~A. Hansen, P.~B. Miltersen, and T.~B. S{\o}rensen.
\newblock The computational complexity of trembling hand perfection and other
  equilibrium refinements.
\newblock In {\em International Symposium on Algorithmic Game Theory}, pages
  198--209, 2010.

\bibitem{hendon1996one}
E.~Hendon, H.~J. Jacobsen, and B.~Sloth.
\newblock The one-shot-deviation principle for sequential rationality.
\newblock {\em Games and Economic Behavior}, 12(2):274--282, 1996.

\bibitem{ito2012multi}
T.~Ito and T.~Vidick.
\newblock A multi-prover interactive proof for nexp sound against entangled
  provers.
\newblock In {\em 53rd Annual Symposium on Foundations of Computer Science
  (FOCS)}, pages 243--252, 2012.

\bibitem{kol2013competing}
G.~Kol and R.~Raz.
\newblock Competing provers protocols for circuit evaluation.
\newblock In {\em Proceedings of the Fourth Annual Conference on Innovations in
  Theoretical Computer Science (ITCS)}, pages 473--484, 2013.

\bibitem{koller1992complexity}
D.~Koller and N.~Megiddo.
\newblock The complexity of two-person zero-sum games in extensive form.
\newblock {\em Games and economic behavior}, 4(4):528--552, 1992.

\bibitem{kreps1982sequential}
D.~M. Kreps and R.~Wilson.
\newblock Sequential equilibria.
\newblock {\em Econometrica: Journal of the Econometric Society}, pages
  863--894, 1982.

\bibitem{lund1992algebraic}
C.~Lund, L.~Fortnow, H.~Karloff, and N.~Nisan.
\newblock Algebraic methods for interactive proof systems.
\newblock {\em Journal of the ACM (JACM)}, 39(4):859--868, 1992.

\bibitem{mclennan1985justifiable}
A.~McLennan.
\newblock Justifiable beliefs in sequential equilibrium.
\newblock {\em Econometrica: Journal of the Econometric Society}, pages
  889--904, 1985.

\bibitem{osborne1994course}
M.~J. Osborne and A.~Rubinstein.
\newblock {\em A course in game theory}.
\newblock MIT press, 1994.

\bibitem{reif1984complexity}
J.~H. Reif.
\newblock The complexity of two-player games of incomplete information.
\newblock {\em Journal of Computer and System Sciences}, 29(2):274--301, 1984.

\bibitem{rothblum2013interactive}
G.~N. Rothblum, S.~Vadhan, and A.~Wigderson.
\newblock Interactive proofs of proximity: delegating computation in sublinear
  time.
\newblock In {\em Proceedings of the forty-fifth annual ACM symposium on Theory
  of computing}, pages 793--802, 2013.

\bibitem{selten1975reexamination}
R.~Selten.
\newblock Reexamination of the perfectness concept for equilibrium points in
  extensive games.
\newblock {\em International journal of game theory}, 4(1):25--55, 1975.

\bibitem{Shamir92}
A.~Shamir.
\newblock {IP} = {PSPACE}.
\newblock {\em J. ACM}, 39(4):869--877, 1992.

\bibitem{sudan2009probabilistically}
M.~Sudan.
\newblock Probabilistically checkable proofs.
\newblock {\em Communications of the ACM}, 52(3):76--84, 2009.

\bibitem{thaler2012verifiable}
J.~Thaler, M.~Roberts, M.~Mitzenmacher, and H.~Pfister.
\newblock Verifiable computation with massively parallel interactive proofs.
\newblock In {\em HotCloud}, 2012.

\bibitem{vartiainen2007subgame}
H.~Vartiainen.
\newblock Subgame perfect implementation of voting rules via randomized
  mechanisms.
\newblock {\em Social Choice and Welfare}, 29(3):353--367, 2007.

\bibitem{wagner1990bounded}
K.~W. Wagner.
\newblock Bounded query classes.
\newblock {\em SIAM Journal on Computing}, 19(5):833--846, 1990.

\bibitem{walfish2015verifying}
M.~Walfish and A.~J. Blumberg.
\newblock Verifying computations without reexecuting them.
\newblock {\em Communications of the ACM}, 58(2):74--84, 2015.

\bibitem{zhang2014efficient}
Y.~Zhang and M.~Blanton.
\newblock Efficient secure and verifiable outsourcing of matrix
  multiplications.
\newblock In {\em International Conference on Information Security}, pages
  158--178, 2014.

\end{thebibliography}

%\newpage
%\appendix
%

%\input{omittedproofs}
%\newpage
\end{document}